%% file: main.tex
\documentclass[11pt]{article}
\usepackage[utf8]{inputenc}
\usepackage[nottoc]{tocbibind}
\usepackage[english]{babel}
\usepackage{csquotes}

\usepackage[
  backend=biber,
  style=numeric,   
  url=false,          
  doi=false,          
  isbn=false,         
  giveninits=true,    
  maxnames=99,        
  sorting=none        
]{biblatex}
\AtEveryBibitem{\clearfield{issn}}   
\DeclareFieldFormat{eprint:arxiv}{%
  \href{https://arxiv.org/abs/#1}{arXiv\addcolon\space#1}%
}
\DeclareFieldFormat{eprint:stackexchange}{%
  \href{#1}{#1}%
}
\renewbibmacro{in:}{}

\usepackage{url}
\usepackage{amsmath}
\usepackage{amssymb}
\usepackage[dvipsnames]{xcolor}
\usepackage{url}
\usepackage{graphicx}
\usepackage{parskip}
\usepackage{xfrac}
\graphicspath{{images/}}
\usepackage{parskip}
\usepackage{fancyhdr}
\usepackage{vmargin}
\usepackage{listings}
\usepackage{xcolor}
\usepackage{dsfont}
\usepackage{tabu}
\usepackage{enumitem}
\usepackage{pdflscape}
\usepackage{systeme}
\usepackage{ stmaryrd }
\usepackage{physics}
\usepackage{pdfpages}
\usepackage{float}
\usepackage{mathrsfs} 
\usepackage{tensor}
\usepackage{upgreek}
\setmarginsrb{3 cm}{0 cm}{3 cm}{2.5 cm}{1 cm}{1.5 cm}{1 cm}{1.5 cm}
\usepackage{tikzit}
\input{style.tikzstyles}
\usepackage{caption}
\usepackage{subcaption}
\usepackage{amsthm}
\usepackage[colorlinks,citecolor={blue},urlcolor={blue}]{hyperref}
\usepackage{varioref}
\usepackage[capitalise]{cleveref}
\usepackage{array}
\usepackage{makecell}
\usepackage{authblk}

\crefname{assumption}{assumption}{assumptions}
\crefname{notation}{notation}{notation}

\usetikzlibrary{patterns.meta}

\newtheoremstyle{break}
  {\topsep}{\topsep}%
  {\itshape}{}%
  {\bfseries}{}%
\theoremstyle{break}
\newtheorem{theorem}{Theorem}[section]
\newtheorem{definition}[theorem]{Definition}

\newtheorem{lemma}[theorem]{Lemma}
\newtheorem{proposition}[theorem]{Proposition}
\newtheorem{remark}[theorem]{Remark}
\newtheorem{notation}[theorem]{Notation}
\newtheorem{example}[theorem]{Example}

\newtheorem*{theorem-non}{Theorem}
\newtheorem*{corollary-non}{Corollary}
\newtheorem*{conjecture-non}{Conjecture}

\makeatletter
\renewcommand\@maketitle{%
  \newpage
  \null
  \begin{center}%
    {\LARGE \@title \par}%
    \vskip 1em%
    {\large
      \lineskip .5em%
      \begin{tabular}[t]{c}%
        \@author
      \end{tabular}\par}%
  \end{center}%
  \par
  \vskip 1.5em}
\makeatother

\allowdisplaybreaks
\numberwithin{equation}{section}

\title{Tensor network manifolds and Riemannian fundamental theorem for tensor networks}
\author[1,2]{Pablo~Páez Velasco\thanks{pablopaez@ucm.es}}

\affil[1]{\textit{Departamento de An\'{a}lisis Matemático y Matemática Aplicada, Facultad de Matemáticas, Universidad Complutense de Madrid, 28040 Madrid, Spain}}
\affil[2]{\textit{Instituto de Ciencias Matemáticas, 28049 Madrid, Spain}}

\newcommand{\bb}[1]{\mathbb{#1}}

\crefformat{equation}{Eq.~(#2#1#3)} 

\begin{document}

\input{TikzFigures/preamble}

\maketitle
\renewcommand{\thefootnote}{\arabic{footnote}}
\setcounter{footnote}{0}

\begin{abstract}
Tensor networks provide a powerful framework for efficiently representing high-dimensional data and many-body quantum states. Endowing tensor networks with a Riemannian manifold structure provides a natural setting for numerical optimization and analysis. A central feature of tensor networks is their gauge freedom, whose characterisation---captured by so-called fundamental theorems---underlies both their intrinsic structure and the design of numerical algorithms. In this work, we study the interaction between the Riemannian manifold structure and the gauge freedom for several families of tensor networks. Using group actions and Riemannian submersions, we establish a \textit{Riemannian fundamental theorem} for the tensor network families studied.
\end{abstract}

\newpage
\hypersetup{linkcolor=black}
\tableofcontents
\hypersetup{linkcolor=red}

\newpage
\input{Chapters/Intro}
\newpage
\input{Chapters/TNManifolds}
\newpage
\clearpage
\section*{Acknowledgements}
{\sloppy 
P.\,P.\,V. would like to thank Alberto Ruiz de Alarcón, Sofyan Iblisdir, David Pérez García, Ángela Capel, Angelo Lucia and Marco Castrillón López for their invaluable help in the development of this work, and for their careful revision of the manuscript. P.\,P.\,V. acknowledges support of the Spanish Ministry of Science and Innovation MCIN/AEI/10.13039/501100011033 (CEX2023-001347-S, CEX2019-000904-S, CEX2019-000904-S-21-2). This work has been funded by the Spanish Ministry of Science, Innovation and Universities MICIU/AEI/10.13039/501100011033 (CEX2023-001347-S, PID2023-146758NB-I00), Comunidad de Madrid (TEC-2024/COM-84-QUITEMAD-CM), Universidad Complutense de Madrid (FEI-EU-22-06),  and the Ministry for Digital Transformation and of Civil Service of the Spanish Government through the QUANTUM ENIA project call - Quantum Spain project, and by the European Union through the Recovery, Transformation and Resilience Plan - NextGenerationEU within the framework of the Digital Spain 2026 Agenda. 
\par}

\printbibliography[heading=bibintoc]
\newpage 

\appendix 
\newpage
\input{Chapters/Geometry}

\end{document}

%% file: style.tikzstyles
\tikzstyle{black dot}=[fill=black, draw=black, shape=circle, minimum size=0.1mm, inner sep=1mm]
\tikzstyle{Rectangle}=[fill=white, draw=black, shape=rectangle, minimum width=2.5cm, minimum height=1.5cm, transform shape]
\tikzstyle{White Ball}=[fill=white, draw=black, shape=circle, minimum size=8.5mm, inner sep=0, transform shape]
\tikzstyle{White Square}=[fill=white, draw=black, shape=rectangle, transform shape, minimum width=1cm, minimum height=1cm]
\tikzstyle{oval}=[fill=white, draw=black, shape=ellipse, minimum width=8cm, minimum height=1cm, transform shape]

\tikzstyle{linea negra}=[-, line width=0.2mm, outer sep=0mm, inner sep=0mm]
\tikzstyle{flecha negra}=[->, line width=0.2mm, outer sep=0mm, inner sep=0mm]
\tikzstyle{linea discontinua}=[-, dotted, draw={rgb,255: red,77; green,77; blue,77}, line width=0.25mm]
\tikzstyle{linea azul}=[-, draw={rgb,255: red,0; green,187; blue,255}, line width=0.5mm]
\tikzstyle{new edge style 0}=[-, draw={rgb,255: red,255; green,0; blue,4}, line width=0.5mm]
\tikzstyle{new edge style 1}=[-, dotted, line width=0.3mm, draw={rgb,255: red,255; green,101; blue,103}]

%% file: TikzFigures/preamble.tex
\definecolor{myred}{HTML}{E53935}
\definecolor{myblue}{HTML}{1E88E5}
\definecolor{mygreen}{HTML}{43A047}
\definecolor{myyellow}{HTML}{FDD835}
\definecolor{myorange}{HTML}{FB8C00}
\definecolor{mygold}{HTML}{F9A825}
\definecolor{mypurple}{HTML}{8E24AA}
\definecolor{mygray}{HTML}{BDBDBD}
\definecolor{mybrown}{HTML}{6D4C41}
\definecolor{mynavy}{HTML}{1A237E}
\definecolor{mypink}{HTML}{ffbfca}
\definecolor{myseagreen}{HTML}{26A69A}
\definecolor{myviolet}{HTML}{f07ef0}
\definecolor{mydarkblue}{HTML}{0D47A1}
\definecolor{mydarkcyan}{HTML}{E0FFFF}
\definecolor{darkgray}{rgb}{0.66, 0.66, 0.66}
\definecolor{mydarkgreen}{HTML}{1B5E20}
\definecolor{mydarkmagenta}{HTML}{AD1457}
\definecolor{mydarkorange}{HTML}{EF6C00}
\definecolor{lightblue}{rgb}{0.68, 0.85, 0.9}
\definecolor{lightcyan}{rgb}{0.88, 1.0, 1.0}
\definecolor{lightgray}{rgb}{0.83, 0.83, 0.83}
\definecolor{mylightgreen}{HTML}{81C784}
\definecolor{lightyellow}{rgb}{1.0, 1.0, 0.88}
\definecolor{myshadow}{rgb}{0.5, 0.5, 0.5}
\definecolor{pink}{rgb}{1.0, 0.75, 0.8}
\definecolor{violet}{rgb}{0.93, 0.51, 0.93}
\definecolor{myauxcolor}{RGB}{245, 255, 255}
\definecolor{myauxcolor2}{RGB}{224, 246, 255}
\definecolor{mylightgreen}{RGB}{193, 225, 159}
\tikzstyle{heavier} = [line width=0.8pt]
\usetikzlibrary{arrows.meta}

%% file: Chapters/Intro.tex
\section{Introduction}
Tensor networks have become a central tool for describing and simulating strongly correlated quantum many-body systems, providing efficient parametrizations of states whose full Hilbert-space description would otherwise be computationally intractable. Among the most prominent examples are matrix product states (MPS) and projected entangled pair states (PEPS), which play a central role in the field of quantum many-body systems and quantum information \cite{cirac2021matrix}.

Beyond their representational power, tensor networks are also closely tied to some of the most successful numerical methods in quantum physics. White's seminal density matrix renormalization group (DMRG) algorithm \cite{white1992density} was later understood as an optimization algorithm over the manifold of MPS \cite{dukelsky1998equivalence}. This viewpoint was the germ of several extensions, both in 1D and higher, creating a body of research which is the state of the art in several optimization problems \cite{xiang2023density,ran2020tensor,banuls2023tensor}. Since quantum circuits are just another family of tensor networks, even variational quantum algorithms \cite{cerezo2021variational} can be interpreted as tensor network optimization procedures over the manifold defined by their parameter space. This perspective naturally led to the study of tensor networks from the viewpoint of Riemannian geometry. Endowing tensor network families with a Riemannian manifold structure provides a natural framework for formulating optimization algorithms, defining gradients and geodesics, and analysing convergence properties \cite{haegeman2011time,haegeman2014geometry}. In particular, optimization and sampling algorithms on Riemannian manifolds can be used in this context \cite{capel2026rapidmixinggibbsmeasures}, allowing to import powerful techniques recently developed in a rapidly growing area of research, mainly motivated by its potential applications to machine learning \cite{li2023riemannian,cheng2022efficient,criscitiello2019efficiently,sun2019escaping}.

A fundamental feature of tensor networks is their gauge freedom: distinct choices of local tensors can lead to the same global tensor. The characterization of this gauge for a given family of tensor networks is known as the {\it fundamental theorem} for such family. As the name suggests, these results play a very important role both in the mathematical understanding of tensor network states, in their use to understand and classify quantum phases of matter \cite{cirac2021matrix}, and in the design of efficient numerical algorithms. Fundamental theorems are well understood in the case of MPS \cite{PerezGarcia2007} and have also been developed for some classes of PEPS (e.g. \cite{molnar2018normal,perez2010characterizing}).

The main goal of this work is to identify the gauge freedom of several tensor network families, with the aim of describing it as a nice group action on the manifold defined by their parameter space. We show that this group action induces a projection map from the original manifold to the associated orbit space, with the property of being a \textit{Riemannian submersion}. The points in the orbit space will then be in one-to-one correspondence with the state represented by a tensor network. In this sense, our work can be understood as a generalization of the geometric framework introduced in \cite{haegeman2014geometry}, from the Riemannian manifold perspective. 

It is important to remark that having a Riemannian submersion, as captured by the definition of {\it Quotient Tensor Network Manifold} (\cref{def:quotTN} below), is the key property to apply the Riemannian sampling and optimization algorithms developed in \cite{capel2026rapidmixinggibbsmeasures}. It is therefore crucial not only to characterize the gauge freedom of a given tensor network, but to do it via a nice group action which ensures a Riemannian submersion. This justifies the name ``Riemannian fundamental theorem". 

We study the following tensor network families:
\begin{itemize}
    \item depth-two quantum circuits in one or two dimensions, both with a fixed input or not,
    \item matrix product states,
    \item two-dimensional sequentially generated PEPS \cite{banuls2008sequentially},
    \item isometric PEPS \cite{wei2022isometric}.
\end{itemize}
For each of these families, we obtain a \emph{Riemannian fundamental theorem}, relating the geometry of the manifold structure to the gauge freedom of the tensor network representation.

We consider particularly relevant the cases of sequentially generated and isometric PEPS, two paradigmatic subfamilies of PEPS. Our results open the possibility to apply Riemannian optimization techniques with rigorous convergence guarantees \cite{capel2026rapidmixinggibbsmeasures} to those families.

Before stating the main results of this work, let us define \textit{tensor networks}, \textit{tensor network manifolds} and \textit{quotient tensor network manifolds}. To do so, we start by defining a  \textit{network}, which can be understood as a slight generalisation of a directed multigraph (See \cref{fig:graph-example}).
\begin{definition}[Network]
A \textup{network} is a directed multigraph $\mathcal{G} = (\bb{V}, \bb{E})$, where $\bb{V} = \{v_i\}_{i = 1}^{|\bb{V}|}$ is the set of (labelled) vertices and $\bb{E}$ is the multiset of edges of the graph. Unlike in usual directed multigraphs, the edges of a network may only have an inbound or outbound vertex, resulting in \textup{half edges}. 
\end{definition}

\begin{figure}
    \centering
    \input{TikzFigures/Graph.tikz}
    \caption{Visual representation of a network.}
    \label{fig:graph-example}
\end{figure}

\begin{notation}
Let $\mathcal{G}$ be a network. For every $v \in \bb{V}$ we denote by $v \leftarrow e$ and $v \rightarrow e$ the set of inbound and outbound edges of $v$, respectively. We also denote by $\bb{E}_c$ the set of \textit{connected} edges of the graph, i.e. the set of edges such that they have both an inbound and an outbound vertex.
\end{notation}

In particular, the sets of inbound and outbound edges of a given vertex of the network can contain both half edges and connected edges. 

Networks provide a graphical notation that enables us to visually represent mathematical objects and operations between them in a simple and intuitive way. In this context, we will use vertices to represent \textit{tensors} and edges to denote \textit{tensor contractions}. 

\begin{definition}[Tensor product and tensors]
Let $V_1, \dotsc, V_k$ be finite-dimensional complex vector spaces. We define their \textup{tensor product} as the set of multilinear maps defined on the product of their duals,
\[
V_1 \otimes \dotsm \otimes V_k := \{T : V_1^* \times \dotsm \times V_k^* \to \bb{C}: T \textup{ is multilinear}\}.
\]  
Every element $T \in V_1 \otimes \dotsm \otimes V_k$ is known as a \textup{tensor}. 
\end{definition}

Given two tensors, we can define their \textit{tensor product} as follows: 
\begin{definition}[Tensor product of two tensors]
Let $T \in V_1 \otimes \dotsm \otimes V_n$ and $\tilde T \in \tilde{V}_1 \otimes \dotsm \otimes \tilde V_m$ be two tensors, their \textup{tensor product} is defined as the element 
\[
T \otimes \tilde T \in V_1 \otimes \dotsm \otimes V_n \otimes \tilde{V}_1 \otimes \dotsm \otimes \tilde V_m
\]
acting as
\[
(T \otimes \tilde T)(v_1, \dotsc, v_n, \tilde v_1, \dotsc, \tilde v_m) := T(v_1, \dotsc, v_n) \tilde T(\tilde v_1, \dotsc, \tilde v_m),
\]
for every $(v_1, \dotsc, v_n) \in V_1^* \times \dotsm \times V_n^*$ and every $(\tilde v_1, \dotsc, \tilde v_m) \in \tilde V_1^* \times \dotsm \times \tilde V_m^*$. 
\end{definition}

Furthermore, there is a natural map defined on the tensor product of a vector space and its adjoint, known as the \textit{contraction map}:
\begin{definition}[Contraction map]
Given a finite-dimensional vector space $V$ and its adjoint space $V^*$, consider a basis for $V$, $\{e_i\}_{i = 1}^n$, and its dual basis for $V^*$, $\{\phi^i\}_{i = 1}^n$. The \textup{contraction map} associated with $V$ and $V^*$ is defined as the bilinear map $\theta : V \otimes V^* \to \bb{C}$, such that 
\[
\theta(e_i \otimes \phi^j) =  \delta_i^j,
\]
for every $i, j \in \{1, \dotsc, n\}$, where $\delta_i^j$ is the Kronecker delta. This way, $\theta$ is defined on $V \otimes V^*$ as the linear extension of the above map. 
\end{definition}

With the above definitions in mind, let us present the correspondence between tensors and networks in detail. In order to associate a tensor to each vertex of the network, we start by associating a finite-dimensional complex vector space to each edge of the network, $\{V_e\}_{e \in \bb{E}}$. This way, to every $v \in \bb{V}$ we can associate a tensor $T_v$ in the tensor product of the vector spaces corresponding to its inbound and outbound edges, 
\begin{equation}
\label{eq:tensorvertex}
T_{v} \in \bigotimes_{v \rightarrow e} V_e \otimes \bigotimes_{v \leftarrow e} V_e^*.     
\end{equation}
Lastly, as we mentioned earlier, the connected edges of the network are used to denote tensor contractions. In particular, an edge $e$ joining the vertices $v$ and $w$ indicates that the tensors $T_v$ and $T_w$ are contracted over their shared vector space $V_e$. 

To illustrate the correspondence between vertices and tensors, as well as between edges and contractions, we provide the next two examples.

\begin{example}
Let $v \in \bb{C}^n$ be a vector---which can be understood as a linear map $(\bb{C}^n)^* \to \bb{C}$---and let $A \in \bb{C}^m \otimes (\bb{C}^n)^*$ be a matrix. Consider their product $Av$, which can be written as the contraction (in Einstein notation)
\[
Av = A^i_j v_i \in \bb{C}^m.
\]
We can represent the vector $Av$ using a network with two vertices and two edges as follows:
\[
\input{TikzFigures/example-contraction1.tikz}.
\]

Let us now consider three unitary matrices $U_1, U_2, U_3 \in \textup{U}(n^2)$, where $n > 1$,  and let $\mathds{1}$ denote the identity matrix of dimension $n$. In particular, note that $\mathds{1} \in \bb{C}^n \otimes (\bb{C}^n)^*$. Seeing each unitary $U_i$ as an element of $\bb{C}^n \otimes \bb{C}^n \otimes (\bb{C}^n)^*\otimes (\bb{C}^n)^*$ we can represent the matrix multiplication
\[
(\mathds{1} \otimes U_3 \otimes \mathds{1}) (U_1 \otimes U_2)
\]
using a network with three vertices and ten edges:
\[
\input{TikzFigures/example-contraction2.tikz}.
\]
\end{example}

Note that in the last example, we implicitly assumed that two neighbouring vertices with no edges linking them represent the tensor product of their associated tensors. This will be made more precise in the following definition. Let us define \textit{tensor networks} and \textit{families of tensor networks}.

\begin{definition}[Tensor network]
\label{def:TN}
Let $\mathcal{G} = (\bb{V}, \bb{E})$ be a fixed network and let $\{V_e\}_{e \in \bb{E}}$ be a fixed set of finite-dimensional complex vector spaces. A \textup{tensor network} is a triple $(\mathcal{G}, \{V_e\}_{e \in \bb{E}}, \{T_v\}_{v \in \bb{V}})$, where $\{T_v\}_{v \in \bb{V}}$ is the set of tensors that are in correspondence to the vertices of $\mathcal{G}$ (cf. \cref{eq:tensorvertex}).
\end{definition}

We often denote a tensor network simply by the pair $(\mathcal{G}, \{T_v\}_{v \in \bb{V}})$ whenever the vector space associated with each edge can be inferred from the context. 

Taking into account the contractions between the tensors represented by the edges of the network, each tensor network $(\mathcal{G}, \{V_e\}_{e \in \bb{E}}, \{T_v\}_{v \in \bb{V}})$ corresponds to the tensor
\[
\bigotimes_{e \in \bb{E}_c} \theta_e \Big[\bigotimes_{v \in \bb{V}} T_v\Big],    
\]
where $\theta_e$ denotes the contraction associated with $V_e$ and $V_e^*$---note that, since $e \in \bb{E}_c$, $e$ has an inbound and an outbound vertex in $\mathcal{G}$. 

Although networks can be understood as a visual (local) representation of a single (global) tensor via a tensor network,  we can also use them to represent \textit{families} of tensors. 

\begin{definition}[Tensor network family]
Let $\mathcal{G} = (\bb{V}, \bb{E})$ be a network. Consider a fixed set of vector spaces assigned to the edges $\{V_e\}_{e \in \bb{E}}$ and a set $\mathcal{A}$,
\[
\mathcal{A} \subset \prod_{v \in \bb{V}}\Big( \bigotimes_{v \rightarrow e} V_e \otimes \bigotimes_{v \leftarrow e} V_e^*\Big),
\]
representing the \textit{allowed} choices of tensors associated with each vertex of the network. We define the \textup{family of tensor networks with network $\mathcal{G}$ and set of allowed tensors $\mathcal{A}$} as the set of all possible tensor networks for which the tensors lay in $\mathcal{A}$, i.e. 
\[
\{(\mathcal{G}, \mathcal{T})\}_{\mathcal{T} \in \mathcal{A}}. 
\]
\end{definition}

Let us illustrate the above definition with two examples. 

\begin{example}
\label{exampleTNfamilies}
Probably one of the most well-known tensor network families are matrix product states (MPS). Let us consider a family of uniform MPSs with five sites and periodic boundary conditions, which can be represented by the network
\[
\input{TikzFigures/MPS.tikz}.
\]
Assuming that the dimension of the vector spaces associated with the contracting edges---also known as the bond dimension of the MPS---is $D$, and that the dimension of the vector spaces associated with the half-edges---also known as the physical dimension---is $d$, the set of allowed tensors is
\[
\mathcal{A} = \{(T_1, \dotsc, T_5) : T_i \in \bb{C}^d \otimes \bb{C}^D \otimes (\bb{C}^D)^* \text{, and } T_1 = T_2 = \dotsc = T_5\}.
\]
From the definition of $\mathcal{A}$, it is clear that the same tensor is assigned to every vertex of the network, resulting in a uniform MPS.

Next, consider the tensor network family of one-dimensional depth-two quantum circuits, with gates acting on two qudits, which can be represented by the network
\[
\input{TikzFigures/circuitosimple2.tikz}.
\]
In this case, as any quantum gate is given by a unitary matrix, the set of allowed tensors is
\[
\mathcal{A} = \{(T_1, \dotsc, T_{2k-1}) : T_i \in \textup{U}(d^2) \subset \bb{C}^d \otimes \bb{C}^d \otimes (\bb{C}^d)^*\otimes (\bb{C}^d)^*\},
\]
where $k$ is the number of gates in the bottom layer of the circuit, and each qudit is assumed to be a unit vector in $\bb{C}^d$. 
\end{example}

In the two tensor network families shown above, the set $\mathcal{A}$ is such that it can be endowed with a smooth manifold structure---and thus with a Riemannian metric. Whenever this is possible, we say that the tensor network family defines a \textit{tensor network manifold}. 

\begin{definition}[Tensor network manifolds]
A tensor network family defines a \textup{tensor network manifold} if the set $\mathcal{A}$ can be endowed with a smooth manifold structure.
\end{definition}

\begin{remark}
Note that, despite working with tensors acting on the product of complex vector spaces, the manifold structure considered in a tensor network manifold is that of real manifolds. This perspective is different to the one shown in \cite{haegeman2014geometry}, where the authors study tensor networks as complex manifolds. Moreover, since we only consider real manifolds in this work, the assumption that the tensors act on the product of complex vector spaces can be easily modified to consider real vector spaces.     
\end{remark}

As we mentioned earlier, the tensor network families shown in \cref{exampleTNfamilies} define tensor network manifolds. Indeed, the set of allowed tensors in the uniform MPS family described can be seen as the---real---manifold $\bb{C}^{d \times D^2}$, where $d$ is the physical dimension of the MPS and $D$ is its bond dimension. Moreover, the set of allowed tensors in the family of one-dimensional, depth-two quantum circuits corresponds to the manifold $\textup{U}(d^2)^{\times 2k-1}$, where each qudit is assumed to be a unit vector of $\bb{C}^d$ and $k$ is the number of gates in the bottom layer of the circuit.

Given a tensor network family defining a tensor network manifold $M$, there is a natural correspondence between the points in the manifold and the tensor networks in the family. Indeed, every $x \in M$ can be understood as a particular choice of tensors for every vertex of the network, 
\begin{equation}
\label{eq:definitionpointx}
x \equiv (T_1, \dotsc, T_{|\bb{V}|}).
\end{equation}

\begin{notation}
Consider a tensor network family $\{(\mathcal{G}, \mathcal{T})\}_{\mathcal{T} \in \mathcal{A}}$ defining a tensor network manifold $M$. For every $x \in M \simeq \mathcal{A}$, we denote by $\mathds{TN}_x$ the global tensor which corresponds to the choice of local tensors represented by $x$. 
\end{notation}

In particular, if $x \in M$ is given as in \cref{eq:definitionpointx}, $\mathds{TN}_x$ is defined as
\[
\mathds{TN}_x = 
\bigotimes_{e \in \bb{E}_c} \theta_e \Big[\bigotimes_{v \in \bb{V}} T_v\Big].
\]

\begin{remark}
\label{remarkgauge}
As we mentioned in the introduction, the gauge freedom of tensor networks implies that the correspondence between points in a tensor network manifold and the tensor represented by a tensor network may not be one-to-one, as two distinct points in a given tensor network manifold $x, y \in M$ can result in the same tensor, i.e. $\mathds{TN}_x = \mathds{TN}_y$. 
\end{remark}

To illustrate the above remark, let us consider the following example. 

\begin{example}
\label{eg:gaugecircuits}
Consider the tensor network family of a one-dimensional, depth-two circuit with three gates, 
\[
\input{TikzFigures/smallcircuit.tikz}, 
\]
for which its associated tensor network manifold is $\textup{U}(d^2)^{\times 3}$. Let
\[
x := (U, V, W) \in \textup{U}(d^2)^{\times 3},
\]
be some fixed point, where $U$ and $V$ correspond to the bottom left and bottom right unitaries of the circuit, respectively, and $W$ corresponds to the unitary in the top row. Define
\[
y := ((\mathds{1} \otimes X)U, (Y \otimes \mathds{1})V, W(X^\dagger \otimes Y^\dagger)) \in \textup{U}(d^2)^{\times 3},
\]
where $X, Y \in \textup{U}(d)$ are two fixed unitaries and $\mathds{1}$ is the identity matrix of size $d$. Clearly, if either $X$ or $Y$ are different from the identity matrix, $x$ and $y$ are different points in $\textup{U}(d^2)^{\times 3}$. Nevertheless, it holds that $\mathds{TN}_x = \mathds{TN}_y$. Indeed, $\mathds{TN}_x$ can be represented as
\[
\input{TikzFigures/smallcircuitx.tikz}
\]
and $\mathds{TN}_y$ can be represented as
\[
\input{TikzFigures/smallcircuity.tikz},
\]
showing that $\mathds{TN}_x = \mathds{TN}_y$.
\end{example}

In conclusion, \cref{remarkgauge} implies that the map $x \mapsto \mathds{TN}_x$ is in general not injective. In fact, symmetries such as the one shown in the previous remark are inherent to tensor networks, as tensor contractions introduce a gauge freedom. This shows the importance of having fundamental theorems for tensor networks, i.e. having explicit characterisations of the gauge symmetries of general tensor networks. 

Consider a tensor network family defining a tensor network manifold, and assume that some gauge freedom of the family can be expressed as an equivalence relation in the manifold, which can be \textit{quotiented out}. In this case, the resulting quotient space induces a \textit{less redundant} mapping from its points to the tensor networks of the family. Indeed, each point in the quotient space will correspond to the \textit{equivalence class} of a given tensor network, modulo the gauge considered. Motivated by this, we define \textit{quotient tensor network manifolds}.

\begin{definition}[Quotient tensor network manifold]
\label{def:quotTN}
Let $(M, g)$ be a tensor network manifold endowed with a Riemannian metric. We say that $(M, g)$ defines a \textup{quotient tensor network manifold} if there exists a Riemannian manifold $(B, h)$ and a surjective Riemannian submersion $\pi$ from $(M, g)$ to $(B, h)$.
\end{definition}

\begin{notation}
Given a tensor network manifold $(M, g)$ defining a quotient tensor network manifold $(B, h)$ with associated Riemannian submersion $\pi$, for every point $x \in M$, we will often use $[x]_B$ to denote $\pi(x)$. 
\end{notation}

It is often the case that the set of points in a given tensor network manifold $M$ which give rise to the same tensor can be related by a \textit{group action} on $M$. Group actions on manifolds are very useful, as they frequently allow to consider the quotient spaces they induce as manifolds (see \cref{sec:riemanniansubmersion}).

From the above definition it follows that a tensor network manifold can give rise to several different quotient tensor network manifolds. Indeed, in the following example we will define three quotient tensor network manifolds associated with the same tensor network manifold.

\begin{example}
\label{examplebeforethm}
Consider the tensor network family of depth-two quantum circuits with three gates,
\[
\input{TikzFigures/smallcircuit.tikz}.
\]
Recall that its associated tensor network manifold is 
\[
\textup{U}(d^2)^{\times 3}.
\]
Assume that we are interested in obtaining a manifold for which each point allows us to describe a circuit of the family up to a global phase. To do so, we consider the action of $\textup{U}(1)^{\times 3}$ on $\textup{U}(d^2)^{\times 3}$ that describes the multiplication of any of the gates of the circuit by a complex phase, 
\begin{align*}
\textup{U}(1)^{\times 3} \times \textup{U}(d^2)^{\times 3} &\to \textup{U}(d^2)^{\times 3}\\
((\lambda_1, \lambda_2, \lambda_3),(U, V, W)) &\mapsto (\lambda_1U, \lambda_2V, \lambda_3W).
\end{align*}
In this case, the quotient space $B_1 := \textup{U}(d^2)^{\times 3} / \textup{U}(1)^{\times 3}$---which corresponds to $\textup{PU}(d^2)^{\times 3}$, where $\textup{PU}$ is the projective unitary group (see \cref{sec:SecLieGroups})---is such that every point in $B_1$ represents a quantum circuit up to a global phase.

We also showed in \cref{eg:gaugecircuits} that multiplying over the edges of any tensor network of this family by unitaries and their inverses had no effect on the circuit it represents. With this in mind, we can consider the action of $\textup{U}(d)^{\times 2}$ defined as
\begin{align}
\label{eqgaugenonproj}
\begin{split}
\textup{U}(d)^{\times 2} \times \textup{U}(d^2)^{\times 3} &\to \textup{U}(d^2)^{\times 3}\\
((X, Y), (U, V, W)) &\mapsto ((\mathds{1} \otimes X)U, (Y \otimes \mathds{1})V, W(X^\dagger \otimes Y^\dagger)),
\end{split}
\end{align}
where $U$ and $V$ denote the bottom-left and bottom-right gates of the circuit, respectively, and $W$ denotes the top gate of the circuit. This action can be represented as
\[
\input{TikzFigures/circuitex1.tikz},
\]
and encodes the gauge freedom shown in \cref{eg:gaugecircuits}. This way, the quotient space $B_2 := \textup{U}(d^2)^{\times 3} / \textup{U}(d)^{\times 2}$ is such that every point $x \in B_2$ corresponds to the equivalence class $x = [(U, V, W)]_{B_2}$, with $(U, V, W) \in \textup{U}(d^2)^{\times 3}$ where 
\[
[(U, V, W)]_{B_2} := \{((\mathds{1} \otimes X)U, (Y \otimes \mathds{1})V, W(X^\dagger \otimes Y^\dagger))\, :\, X, Y \in \textup{U}(d)\},
\]
and so every $x \in B_2$ represents a tensor network of the family \textit{up to} the gauge freedom identified in \cref{eg:gaugecircuits}. 

Lastly, going back to the quotient tensor network manifold $B_1 = \textup{PU}(d^2)^{\times 3}$, which results after taking the quotient with respect to multiplying each gate by a complex phase, we can also consider the analogous of the action defined in \cref{eqgaugenonproj}. This way, we define the action of $\textup{PU}(d)^{\times 2}$ on $\textup{PU}(d^2)^{\times 3}$ as
\begin{align*}
\textup{PU}(d)^{\times 2} \times \textup{PU}(d^2)^{\times 3} &\to \textup{PU}(d^2)^{\times 3}\\
(([X], [Y]), ([U], [V], [W])) &\mapsto ([(\mathds{1} \otimes X)U], [(Y \otimes \mathds{1})V], [W(X^\dagger \otimes Y^\dagger])).
\end{align*}
This action induces the quotient space $B_3 := \textup{PU}(d^2)^{\times 3} / \textup{PU}(d)^{\times 2}$. If we consider the composition of the quotient maps $\textup{U}(d^2)^{\times 3} \to B_1$ and $B_1 \to B_3$, we can see $B_3$ as a quotient tensor network manifold which uniquely characterises each circuit up to both unitary multiplication over the contracting edges, and up to a phase. In this case, the quotient tensor network manifold $B_3$ arises naturally as a composition of two projections, rather than as the quotient space associated with a single group action on the original tensor network manifold $\textup{U}(d^2)^{\times 3}$. See \cref{sectionnofixedinput} for more details. 
\end{example}

\subsection{Main results}
\label{sec:mainresults}

As we mentioned in the introduction, the main goal of this work is to obtain quotient tensor network manifolds which uniquely characterise \textit{almost every} tensor represented by a tensor network, for several tensor network families. We will consider tensors (given by tensor networks) which represent either quantum states or quantum circuits. In both cases, it is common to consider equality up to a global phase. In the case of states since they cannot be distinguished by an observable. In the case of unitaries since global phases disappear under the adjoint action. For this reason, we obtain quotient tensor network manifolds that characterise states and circuits both uniquely and up to a phase.

\begin{theorem}[Riemannian fundamental theorem]
\label{mainthm2}
The following tensor network families define a tensor network manifold: 
\begin{itemize}
    \item depth-two quantum circuits in one or two dimensions, both with a fixed input or not, 
    \item matrix product states,
    \item two-dimensional sequentially generated PEPS,
    \item isometric PEPS.
\end{itemize}
Let $M$ be the tensor network manifold associated with any of the above tensor network families. There exists a suitable Riemannian metric $g$ on $M$ so that $(M, g)$ defines a quotient tensor network manifold $(B, h)$ such that, for every $x, y$ on a open and dense subset of $M$, it holds that
\[
\exists \lambda \in \textup{U}(1) \text{ s.t. } \mathds{TN}_x = \lambda \mathds{TN}_y \iff [x]_B = [y]_B.
\]

Furthermore, let $(M, g)$ the tensor network manifold associated with the tensor network family of depth-two quantum circuits in one dimension, either with a fixed input or not, or matrix product states. It defines a quotient tensor network manifold $(B', h')$ such that, for every $x, y$ on a open and dense subset of $M$, it holds that
\[
\mathds{TN}_x = \mathds{TN}_y \iff [x]_{B'} = [y]_{B'}.
\]
\end{theorem}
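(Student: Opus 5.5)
The proof will rest on one general mechanism, applied to each family in turn. If a compact Lie group $G$ acts smoothly, freely and isometrically on a Riemannian manifold $(M,g)$, then the quotient-manifold theorem makes $M/G$ a smooth manifold. It also gives a unique metric $h$ on $M/G$ for which $\pi: M\to M/G$ is a surjective Riemannian submersion: identify $T_{[x]}(M/G)$ with the horizontal space, i.e.\ the $g$-orthogonal complement of the tangent space to the orbit. We assume this from the appendix (\cref{sec:riemanniansubmersion}).

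Free actions are rare on all of $M$, so we will pass to an open dense, $G$-invariant subset $M_0\subset M$ on which the action is free. This subset is itself a tensor network manifold with the restricted metric, and it is where the equivalence in the statement is to hold. When the quotient is by a subgroup acting with a finite or trivial kernel, we first divide by that ineffective kernel. For the projective versions we compose two submersions, $M\to M/\textup{U}(1)^k \to (M/\textup{U}(1)^k)/G'$, as in \cref{examplebeforethm}. A composition of surjective Riemannian submersions is again one, so this step is routine.

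For each family we take as $g$ the metric induced by the real part of the Hilbert--Schmidt inner product, either on products of $\textup{U}(n)$ or on the relevant Euclidean or Stiefel space. We then check three things.
\begin{enumerate}
\item The set $\mathcal{A}$ is a manifold. For circuits it is $\textup{U}(d^2)^{\times k}$ (with the input fixed, a product of spheres or Stiefel manifolds for the first layer). For MPS it is an open subset of $\bb{C}^{d\times D\times D}$. For sequentially generated and isometric PEPS it is a product of Stiefel manifolds of isometries.
\item The gauge groups act isometrically. These are $\textup{U}(d)$ on the internal circuit legs, $\textup{GL}(D)$ or the appropriate unitary subgroup for MPS, unitaries on the virtual bonds for sequential and isometric PEPS, and phases on each tensor. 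The key point is that they act by left and right multiplication by unitaries, which preserves Hilbert--Schmidt norms. For MPS we will restrict the gauge to its compact part by working with canonical-form (left-isometric) MPS: the Stiefel/isometric condition fixes $\textup{GL}$ down to $\textup{U}(D)$, and the positive-definite part is handled by that normalization.
\item The action is free on an open dense set. Stabilisers are determined by algebraic equations, so the set of points with trivial stabiliser is Zariski-open. It suffices to exhibit one point with trivial stabiliser, which makes the set nonempty, hence open and dense in each connected component.
\end{enumerate}

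The remaining content is the fundamental theorem itself: on $M_0$, $\mathds{TN}_x=\lambda\mathds{TN}_y$ (respectively $=$) holds if and only if $x,y$ lie in the same orbit. The direction ``same orbit implies same tensor'' is the diagrammatic cancellation of \cref{eg:gaugecircuits}.

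For the converse:
\begin{itemize}
\item For MPS we invoke the known fundamental theorem for injective (normal) MPS \cite{PerezGarcia2007}, which holds on an open dense set. It gives equality up to a bond gauge and, in the periodic/uniform case, a phase. In the fixed-normalization setting the phase can be absorbed, which yields the non-projective statement $B'$.
\item For 1D depth-two circuits we use an operator-Schmidt argument. Generic $U,V,W$ have full operator-Schmidt rank across each cut. Equality of the global unitaries then forces the bottom gates, restricted to each shared leg, to agree up to a local unitary on that leg together with phases. Iterating over the legs gives the $\textup{U}(d)$ gauge plus phases. In 1D the phases can be redistributed along the chain, which is what allows the non-projective quotient.
\item For 2D circuits and the two PEPS families, we only claim the projective version. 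Generic injectivity of the isometric tensors, i.e.\ the isometries being injective on their virtual legs, lets us peel off tensors one at a time in the sequential order. At each step, equality of the global tensor restricted to the not-yet-peeled region forces the current tensor to agree up to a unitary on its outgoing virtual bonds. This is the argument for sequential/isometric PEPS in \cite{banuls2008sequentially,wei2022isometric}, adapted to our setting. Phases cannot be consistently removed here because of loops, which explains why only the projective statement holds.
\end{itemize}

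The main obstacle is this last step: proving the generic fundamental theorem for sequentially generated and isometric PEPS, and showing that the generic set on which it holds is simultaneously open, dense, and $G$-invariant, so that it is compatible with the free-action locus. For this we would first prove a one-step ``peeling lemma'' for a single isometry with injective virtual legs. We would then induct along the sequential order, taking $M_0$ to be the intersection of the injectivity locus and the free locus. Both are nonempty Zariski-open, hence open and dense, and $G$-invariant.
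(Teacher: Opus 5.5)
Your overall mechanism matches the paper's: a free, isometric action of a compact group, then \cref{existenceRiemannianSubmersionMetrics}, composed with the projectivisation $M\to N$. However, the gauge characterisation for the two-dimensional families has a genuine gap, exactly where you place the ``main obstacle''.

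When you peel a tensor using uniqueness of purifications (\cref{purificationsym}), the unitary you obtain acts \emph{jointly} on everything crossing the cut, i.e.\ on several edges at once. Examples from the paper are a $Y\in\textup{U}(D^2)$ on two bonds of a sequentially generated state, an $X\in\textup{U}(D^2)$ on the two virtual legs of the corner state of an isometric PEPS, and $\tilde V^\dagger V\in\textup{U}(d^4)$ for a top gate of a 2D circuit. Such an operator is not an element of the edge-wise gauge group. The content of the fundamental theorem is precisely that it splits as a tensor product of single-edge unitaries, rather than being an entangling unitary or a SWAP composed with a product.

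The paper needs two extra ingredients for this, and neither appears in your plan:
\begin{itemize}
\item the products-to-products lemma \cref{productstoproducts}, combined with a full-rank argument that excludes the SWAP (2D circuits and sequentially generated PEPS);
\item the decomposition \cref{lem:reducedmatrixdecomposition}, which relies on the additional genericity condition \cref{eq:fullrankdensities} (isometric PEPS).
\end{itemize}
Without this step, your induction only relates the two networks by unitaries acting on regions. The orbits of the edge-wise action then need not coincide with the fibres of $x\mapsto[\mathds{TN}_x]$. The works you cite introduce these families rather than prove such a factorisation, so the step cannot simply be imported.

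There are three further points.

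\textbf{Freeness and $M_0$.} Restricting to an open dense free locus $M_0$ does not give what \cref{def:quotTN} requires, namely a surjective Riemannian submersion defined on all of $(M,g)$. Declaring $M_0$ to be ``itself a tensor network manifold'' changes $M$. The detour is also unnecessary, because genericity is only needed in the gauge characterisation. The edge-wise actions actually used (on $M$ in 1D and for MPS, on the projectivised $N$ otherwise) are free everywhere. For instance, $V(X^\dagger\otimes Y^\dagger)=V$ (resp.\ $[V(X^\dagger\otimes Y^\dagger)]=[V]$) with $V$ unitary or isometric forces $X\otimes Y=\mathds{1}$ (resp.\ $[X\otimes Y]=[\mathds{1}]$), and one propagates this along the network as in \cref{circuitswithphase,thmfreeactionMPS2,prop:quotTNlast}. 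In any case, your ``Zariski-open locus plus one good point'' argument is not carried out.

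\textbf{1D circuits.} Your operator-Schmidt idea is a good alternative when there is no fixed input. Writing the circuit as $TB$ (top layer times bottom layer), $T'^\dagger T\propto B'B^\dagger$ is a product across both staggered pairings. Hence each $W_j'^\dagger W_j$ factorises even without genericity, which is cleaner than the paper's reduction to a generic product input. But the argument is phrased for equality of unitaries and does not cover the fixed-input family. There only the output states agree, and one must invert full-Schmidt-rank bottom states as in \cref{theoremgauge1}.

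\textbf{MPS.} You alternate between an open subset of $\bb{C}^{d\times D\times D}$ with $\textup{GL}(D)$ gauge and uniform left-canonical MPS. Neither is the open-boundary family $\bb{S}^{2D^2-1}\times\textup{V}_D(\bb{C}^{dD})^{\times k}$ of the statement. Moreover, in the uniform periodic setting the residual phase for exact equality is an $N$-th root of unity multiplying every tensor. Generically this cannot be absorbed into the bond gauge, so your route to $B'$ fails there.
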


In the following sections, we will prove this result for each tensor network family considered. The proof strategy will be the same in all cases. First we will identify the gauge freedom. Then, we will ensure that the group action representing the gauge freedom induces an orbit space admitting a smooth manifold structure. For that, we will show that the action is smooth, free and isometric. These properties also guarantee that the projection map associated with the quotient is a Riemannian submersion. See \cref{sec:riemanniansubmersion} for a short introduction to Riemannian submersions, and \cref{sec:SecLieGroups,sec:manifolds} for a brief introduction on the manifolds considered in this work. 

%% file: TikzFigures/example-contraction1.tikz
\begin{tikzpicture}[scale=2]
    \draw (0.141, 0.282) rectangle (0.423, 0);
    \draw (0, 1.27) rectangle (0.564, 0.706);
    \draw [->] (0.282, 0.282) -- (0.282, 0.706);
    \draw [->] (0.282, 1.27) -- (0.282, 1.693);
    \node[anchor=center] at (0.282, 0.988) {$A$};
    \node[anchor=center] at (0.282, 0.141) {$v$};
\end{tikzpicture}

%% file: Chapters/TNManifolds.tex
\section{Depth-two quantum circuits in one dimension with a fixed input}
\label{onedimcircuitsfixed}

Let us first study the tensor network family of depth-two quantum circuits with a fixed input in one dimension, with $k$ gates in the bottom layer. We assume that each gate acts on two neighbouring qudits, which are assumed to be unit vectors in $\bb{C}^{d}$. These circuits can be represented using the network
\[
\input{TikzFigures/circuitosimple.tikz}.
\]
Without loss of generality, the input state of the circuits is assumed to be $\ket{0}^{\otimes 2k} \in (\bb{C}^{d})^{\times 2k}$.

To avoid redundancies, as the input of the circuit is fixed, we can replace the tensors that correspond to the gates in the bottom layer by their output state. This way, we represent the output state of the circuit using the following tensor network\footnote{Observe that in \cref{circuitodepth2} we have omitted the arrows to simplify the notation. Unless explicitly written otherwise, the arrows in the tensor networks that appear throughout this work will be assumed to always point upwards.},
\begin{equation}
\label{circuitodepth2}
\input{TikzFigures/fig2.tikz},
\end{equation}
where each tensor $U_i$ corresponds to a two-qudit gate in the top layer of the circuit and each tensor $\Lambda_i$ corresponds to the output state of a gate in the bottom layer. Therefore, $\Lambda_1, \dots, \Lambda_k \in \bb{S}^{2d^2-1}$, and $U_1, \dotsc, U_{k-1} \in \textup{U}(d^2)$. Using this tensor network description, the tensor network manifold associated with the tensor network family of depth-two quantum circuits with a fixed input in one dimension, with $k$ gates in the bottom layer is 
\[
M_{\textup{1Df}} := (\bb{S}^{2d^2-1})^{\times k} \times \textup{U}(d^2)^{\times k-1}.
\]
We consider $M_{\textup{1Df}}$ endowed with its \textit{natural metric} $g_{\textup{1Df}} := g_{\textup{round}}^{\oplus k} \oplus g_{\textup{bi}}^{\oplus k-1}$ (cf. \cref{sec:manifolds}).

We will obtain two quotient tensor network manifolds defined by $M_{\textup{1Df}}$ and such that they characterise the output state of the circuit. We will prove that one gives a one-to-one correspondence between points in the quotient tensor network manifold and the state represented by a tensor network of the family, and the other gives a one-to-one correspondence up to a complex phase.
\begin{theorem}
\label{thm:quotTN1DFI}
Let $M_{\textup{1Df}}$ be the tensor network manifold associated with the tensor network family of one-dimensional, depth-two circuits with a fixed input and $k$ gates in the bottom layer. When endowed with its natural metric $g_{\textup{1Df}}$, it defines two quotient tensor network manifolds, $(B_{\textup{1Df}}, h_{\textup{1Df}})$ and $(B'_{\textup{1Df}}, h'_{\textup{1Df}})$ such that, for every $x, y$ in a open and dense subset of $M_{\mathit{1df}}$, it holds that
\[
\mathds{TN}_x = \mathds{TN}_y \iff [x]_{B_{\textup{1Df}}} = [y]_{B_{\textup{1Df}}},
\]
and
\[
\exists \lambda \in \textup{U}(1) \text{ s.t. } \mathds{TN}_x = \lambda \mathds{TN}_y \iff [x]_{B'_{\textup{1Df}}} = [y]_{B'_{\textup{1Df}}}.
\]
\end{theorem}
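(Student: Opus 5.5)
We follow the strategy announced after \cref{mainthm2}: identify the gauge group, show its action on $M_{\textup{1Df}}$ is smooth, free and isometric for $g_{\textup{1Df}}$, and invoke the result of \cref{sec:riemanniansubmersion} that the orbit space of such an action (of a compact Lie group, hence proper) is a smooth manifold carrying a unique metric making the projection a surjective Riemannian submersion.

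\textbf{The gauge action.} In the network of \cref{circuitodepth2}, each internal leg between $\Lambda_i$ and a top gate carries a unitary gauge, exactly as in \cref{eg:gaugecircuits}. Each $\Lambda_i\in\bb{C}^d\otimes\bb{C}^d$ has a left leg $\ell_i$ and a right leg $r_i$. The outer legs $\ell_1$ and $r_k$ are physical outputs and carry no gauge. The leg $r_i$ feeds $U_i$, and $\ell_{i+1}$ feeds $U_i$. So we take $G=\textup{U}(d)^{\times 2(k-1)}$ with elements $(X_1,Y_1,\dots,X_{k-1},Y_{k-1})$, acting by
\[
\Lambda_i\mapsto (Y_{i-1}\otimes X_i)\Lambda_i,\qquad U_i\mapsto U_i(X_i^\dagger\otimes Y_i^\dagger),
\]
with the conventions $Y_0=X_k=\mathds{1}$. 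This action is smooth, and it preserves $\mathds{TN}_x$. It is isometric because left multiplication by a unitary is an isometry of the round sphere $\bb{S}^{2d^2-1}\subset\bb{C}^{d^2}$, and right multiplication is an isometry of the bi-invariant metric $g_{\textup{bi}}$.

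For $B'_{\textup{1Df}}$ we additionally quotient by phases. We use $\textup{U}(1)^{\times(2k-1)}$ acting by $\Lambda_i\mapsto\lambda_i\Lambda_i$ and $U_i\mapsto\mu_iU_i$. This commutes with $G$ and is also isometric. Since phases on the $\Lambda_i$ can be absorbed into the $X_i$, the cleaner choice is to quotient by the larger group $G'=G\times \textup{U}(1)^{\times(k-1)}$ acting only on the $U_i$ by phases, together with the $\Lambda$-phases. Alternatively we quotient in two stages through $\textup{PU}$, as in \cref{examplebeforethm}.

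\textbf{Freeness and the open dense set.} The action is not free everywhere. For example, a product state $\Lambda_i=a\otimes b$ has a nontrivial stabiliser. We therefore restrict to the open dense $G$-invariant subset $M^\circ\subset M_{\textup{1Df}}$ on which every $\Lambda_i$ has full Schmidt rank $d$, i.e. is invertible as a $d\times d$ matrix. This set is open, dense (the complement of a determinant zero set) and invariant. On $M^\circ$, the condition $(Y\otimes X)\Lambda_1 \cdots$ fixing $x$ forces the gauge to be trivial: $\Lambda_1=(\mathds{1}\otimes X_1)\Lambda_1$ with $\Lambda_1$ invertible gives $X_1=\mathds{1}$, and then $U_1(X_1^\dagger\otimes Y_1^\dagger)=U_1$ forces $Y_1=\mathds{1}$. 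The argument proceeds by induction along the chain. Freeness of the phase-extended action follows in the same way, up to the identified diagonal phases.

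We set $B_{\textup{1Df}}=M^\circ/G$ and $B'_{\textup{1Df}}=M^\circ/G'$. The metric $g_{\textup{1Df}}$ is restricted to the open set $M^\circ$, and the quotient theorem then gives the Riemannian submersions.

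\textbf{Main obstacle: the converse direction.} We must show that $\mathds{TN}_x=\mathds{TN}_y$ (or equality up to a phase) with $x,y$ in a possibly smaller open dense set implies that $y\in G\cdot x$. The plan is to write the output state as $(U_1\otimes\cdots)$ applied to $\Lambda_1\otimes\cdots\otimes\Lambda_k$ and to peel off the boundary.

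The first step uses the invertibility of $\Lambda_1$. Viewed as a map from $\ell_1$ to everything else, $\mathds{TN}_x$ determines the subspace spanned by the rest. We compare Schmidt decompositions across the cut between $\ell_1$ and the rest. For generic data, this cut has Schmidt rank $d$, with the Schmidt vectors on $\ell_1$ given by $\Lambda_1$ up to a unitary on $r_1$. This recovers $X_1$.

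We then proceed cut by cut, using the causal-cone (light-cone) structure. The reduced state on the output legs of $U_1$ together with $\ell_1$ determines $U_1(\Lambda_1\otimes\Lambda_2)$ up to a unitary acting on the purifying leg $r_2$. This is a purification uniqueness argument, valid when the relevant reduced density matrices have full rank, which is an open dense condition.

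Iterating, we obtain $y_i$ from $x_i$ by the gauge unitaries. The phase version is identical, with the global phase absorbed at the end. We expect the genericity bookkeeping in this purification-uniqueness induction to be the delicate step. We would first try to phrase it as a repeated application of the statement that two purifications of the same full-rank state differ by a unitary on the purifying system.
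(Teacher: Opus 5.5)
\textbf{Gap 1: freeness and the restriction to $M^\circ$.} Your claim that the gauge action is not free on all of $M_{\textup{1Df}}$ is false. Your example confuses the stabiliser of the single component $\Lambda_i=a\otimes b$ with the stabiliser of the point $x$. Suppose $g\cdot x=x$. Then $U_i(X_i^\dagger\otimes Y_i^\dagger)=U_i$ with $U_i$ unitary already forces $X_i\otimes Y_i=\mathds{1}$, i.e.\ $X_i=\lambda_i\mathds{1}$ and $Y_i=\bar\lambda_i\mathds{1}$. Next, $(\mathds{1}\otimes X_1)\Lambda_1=\Lambda_1$ gives $\lambda_1=1$, since $\Lambda_1\neq 0$. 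Finally, $(Y_{i-1}\otimes X_i)\Lambda_i=\bar\lambda_{i-1}\lambda_i\Lambda_i=\Lambda_i$ propagates $\lambda_i=1$ along the chain. So the action is free on the whole of $M_{\textup{1Df}}$, with no rank condition; this is exactly the argument of \cref{circuitswithphase}.

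The restriction to $M^\circ$ is therefore not a harmless precaution; it breaks the statement. By \cref{def:quotTN}, $(M_{\textup{1Df}},g_{\textup{1Df}})$ defines a quotient tensor network manifold only if there is a surjective Riemannian submersion defined on all of $M_{\textup{1Df}}$. You only produce one on the non-compact open set $M^\circ$. The open dense set in the theorem is needed only for the equivalence between equality of tensors and equality of classes, not for constructing $B_{\textup{1Df}}$.

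\textbf{Gap 2: the one-step phase quotient is not free.} The phase quotient inherits the same defect and also fails freeness. For $k\geq 3$, your group $G'=G\times\textup{U}(1)^{\times(k-1)}$ (phases $\mu_i$ on the $U_i$) contains a circle fixing every point: $Y_1=t\mathds{1}$, $X_2=\bar t\mathds{1}$, $\mu_1=t$, $\mu_2=\bar t$, all other components trivial. Adding phases $\lambda_i$ on the $\Lambda_i$ only enlarges the stabiliser, e.g.\ $X_1=c\mathds{1}$, $\mu_1=c$, $\lambda_1=\bar c$. ``Free up to the identified diagonal phases'' means not free, so \cref{existenceRiemannianSubmersionMetrics} does not apply unless you first divide by this point-independent kernel.

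The route that works is the two-stage one you mention only as an alternative, and it is what the paper does:
\begin{itemize}
\item $(M_{\textup{1Df}},g_{\textup{1Df}})\to(N_{\textup{1Df}},\tilde g_{\textup{1Df}})$ is a Riemannian submersion by the choice of $g_{\textup{FS}}$ and $g_{\textup{pu}}$.
\item $\textup{PU}(d)^{\times 2(k-1)}$ acts freely on $N_{\textup{1Df}}$, because $[X\otimes Y]=[\mathds{1}]$ forces $[X]=[Y]=[\mathds{1}]$. It acts isometrically by \cref{projectiveIsometry}.
\item The composite of the two Riemannian submersions is again one.
\end{itemize}

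\textbf{The gauge characterisation: a valid different route.} This part is sound in outline but differs from the paper's. The proof of \cref{theoremgauge1} handles each top gate separately. It multiplies by the adjoints of the other gates, inverts all the full-rank $\Lambda_j$, reads off that $V_i^\dagger U_i$ is a product across the cut, and normalises with \cref{productounitarias}. This produces phases that must be absorbed when $\theta=0$. You instead peel left to right with \cref{purificationsym}, as the paper does for MPS.

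The step you leave implicit is the disentangling. Start from $(\mathds{1}\otimes V_1\otimes\mathds{1})(\Sigma_1\otimes\Sigma_2)=(\mathds{1}\otimes U_1\otimes W_2)(\Lambda_1\otimes\Lambda_2)$ and $\Sigma_1=(\mathds{1}\otimes X_1)\Lambda_1$. Cancel the invertible $\Lambda_1$, then use the full Schmidt rank of $\Lambda_2$ to conclude $U_1^\dagger V_1(X_1\otimes\mathds{1})=\mathds{1}\otimes Z$ with $Z$ unitary. The purification lemma itself needs no rank hypothesis; genericity enters only here.

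The payoff of your route is that every relation comes out exactly. The $\theta=0$ case needs no phase bookkeeping, and the phase case produces only a single global phase at the last site.
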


\subsection{Characterising the gauge}

Let us first identify the gauge freedom of depth-two quantum circuits in one dimension with a fixed input, both exactly and up to a phase. 
\begin{theorem}
\label{theoremgauge1}
Consider the tensor network family of depth-two quantum circuits in one dimension with $k$ gates in the bottom layer and a fixed input, as shown in \cref{circuitodepth2}. Consider two tensor networks in this family representing the states $\ket{U}$ and $\ket{V}$,
\[
\ket{U} \equiv (\Lambda_1, \dotsc, \Lambda_k, U_1, \dotsc, U_{k-1}),\quad \ket{V} \equiv (\Sigma_1, \dotsc, \Sigma_k, V_1, \dotsc, V_{k-1}).
\]
Assume that the tensors $\Lambda_1, \dotsc, \Lambda_k$ are full-rank when seen as linear maps from $\bb{C}^{d}$ to $\bb{C}^d$, 
\[
\begin{tikzpicture}[scale=1.5]
    \filldraw[fill=white] 
    (0.706, 0.141) ellipse[x radius=-0.141, y radius=0.141];
    \draw[->] (0, 0.141) -- (0.564, 0.141);
    \draw[->] (0.847, 0.141) -- (1.411, 0.141);
\end{tikzpicture}.
\]
Assume that there exists some phase $\theta \in [0, 2\pi)$ for which $\ket{U} = e^{i\theta} \ket{V}$. Then, for every $j \in \{1, \dots, k-1\}$ there exist some unitary matrices $X_j, Y_j \in \textup{U}(d)$ such that
\[
V_j \propto U_j (X^\dagger_j \otimes Y^\dagger_j)
\]
and 
\begin{align*}
\Sigma_1 &\propto (\mathds{1} \otimes X_1) \Lambda_1,\\
\Sigma_j &\propto (Y_{j-1} \otimes X_j) \Lambda_j,\ \forall j \in \{2, \dotsc, k-1\},\\
\Sigma_{k} &\propto (Y_{k-1} \otimes \mathds{1})\Lambda_k,    
\end{align*}
where $\propto$ denotes equality up to a complex phase. Furthermore, if $\theta = 0$, the above equations hold with equality. 
\end{theorem}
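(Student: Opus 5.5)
The plan is to pull the top layer of the circuit across to one side. This turns the equality of two states into a statement about the unitaries $W_j := V_j^\dagger U_j$. The Schmidt ranks across the cuts between bottom blocks then force each $W_j$ to be a product unitary.

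\textbf{Setup.} Label the qudits $1,\dots,2k$, so that $\Lambda_j$ and $\Sigma_j$ live on the pair $(2j-1,2j)$ and $U_j, V_j$ act on $(2j,2j+1)$. All top gates act on disjoint pairs, so they commute. Applying $\bigotimes_j V_j^\dagger$ to both sides of $\ket{U} = e^{i\theta}\ket{V}$ gives
\[
\Big(\bigotimes_{j=1}^{k-1} W_j\Big)\Big(\bigotimes_{j=1}^k \Lambda_j\Big) = e^{i\theta}\bigotimes_{j=1}^k \Sigma_j .
\]

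\textbf{Step 1: product structure of each $W_j$.} Fix $j$ and consider the bipartition $L=\{1,\dots,2j\}$, $R=\{2j+1,\dots,2k\}$. The right-hand side is a product state across this cut, so its Schmidt rank is $1$. On the left-hand side, only $W_j$ straddles the cut. Every other $W_i$ acts unitarily on one side only and does not change the Schmidt rank, and neither do the blocks $\Lambda_i$ with $i\neq j,j+1$. Hence the vector $W_j(\Lambda_j\otimes\Lambda_{j+1})$ has Schmidt rank $1$ across $(2j-1,2j)\,|\,(2j+1,2j+2)$.

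Now use the full-rank hypothesis: write $\Lambda_j = (A_j\otimes\mathds{1})\ket{\Omega}$ with $A_j$ invertible and acting on the outer qudit $2j-1$, and similarly $\Lambda_{j+1} = (\mathds{1}\otimes B_{j+1})\ket{\Omega}$ with $B_{j+1}$ invertible on qudit $2j+2$. These invertible local operators commute with $W_j$ and preserve Schmidt rank. Therefore $W_j$ applied to $\ket{\Omega}_{2j-1,2j}\ket{\Omega}_{2j+1,2j+2}$ has Schmidt rank $1$. That vector is, up to reordering, the Choi vector of $W_j$, and its Schmidt rank across this cut equals the operator Schmidt rank of $W_j$ with respect to qudit $2j$ versus qudit $2j+1$. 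So $W_j = P\otimes Q$. Since $W_j$ is unitary, $P$ and $Q$ are unitary after rescaling, and we can write $W_j = c_j\,X_j\otimes Y_j$ with $X_j,Y_j\in\textup{U}(d)$ and $c_j\in\textup{U}(1)$. I expect this step to be the main obstacle: one must check carefully that the invertible $A_j, B_{j+1}$ and the remaining gates really do not affect the rank, and identify the Choi/operator-Schmidt correspondence correctly. After inverting and renaming $X_j\mapsto X_j^\dagger$, $Y_j\mapsto Y_j^\dagger$, this gives $V_j\propto U_j(X_j^\dagger\otimes Y_j^\dagger)$.

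\textbf{Step 2: the bottom tensors.} Substituting back, the left-hand side becomes $\prod_j c_j$ times the tensor product over blocks of $(Y_{j-1}\otimes X_j)\Lambda_j$. Here, with the renaming, $Y_{j-1}$ acts on qudit $2j-1$ and $X_j$ on qudit $2j$, and the factors are $\mathds{1}$ at the two ends. Two tensor products of unit vectors are equal only if the factors agree up to phases whose product is the global phase. Hence $\Sigma_j = \mu_j (Y_{j-1}\otimes X_j)\Lambda_j$ for some $\mu_j \in \textup{U}(1)$, with the boundary cases as stated.

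\textbf{Step 3: the case $\theta=0$.} The phases can now be absorbed into the unitaries. The phases $c_j$ go into $X_j$, which makes $V_j = U_j(X_j^\dagger\otimes Y_j^\dagger)$ an exact equality. The remaining phases satisfy $\prod_j \mu_j = 1$. They can be cascaded from left to right using the freedom $X_j\mapsto \lambda X_j$, $Y_j\mapsto \bar\lambda Y_j$, which leaves $X_j^\dagger\otimes Y_j^\dagger$, and hence $V_j$, unchanged. This freedom multiplies the phase of $\Sigma_j$ by $\lambda$ and that of $\Sigma_{j+1}$ by $\bar\lambda$. Choosing $\lambda$ at each step to cancel $\mu_1,\mu_2,\dots$ in turn makes every phase trivial: the last phase $\mu_k$ becomes trivial automatically because the total product is $1$. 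Thus all relations hold with equality.
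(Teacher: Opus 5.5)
Your proof is correct and is essentially the paper's argument. You pull the top gates across, use invertibility of the $\Lambda_j$ to reduce to the Choi vector of $V_j^\dagger U_j$, and read off that it is a product unitary; the paper does this diagrammatically by inverting the $\Lambda_j$ and tracing on both sides of the cut, whereas you phrase it through Schmidt and operator-Schmidt rank. You then match the bottom tensors up to phases and absorb those phases cumulatively when $\theta=0$, as the paper does. One bookkeeping remark: the renaming $X_j\mapsto X_j^\dagger$, $Y_j\mapsto Y_j^\dagger$ is unnecessary and would misplace the daggers if carried out. Indeed, $V_j^\dagger U_j = c_j\, X_j\otimes Y_j$ already gives $V_j = \bar c_j\, U_j (X_j^\dagger\otimes Y_j^\dagger)$ and the factors $(Y_{j-1}\otimes X_j)\Lambda_j$ exactly as you state them in Step 2.
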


\begin{remark}
\label{rmk:generic1}
The condition of the bottom states of the tensor network associated with $\ket{U}$ being \textit{full-rank} is verified in an open and dense set of the tensor network manifold.
\end{remark}

Before proving \cref{theoremgauge1}, we state an auxiliary proposition relating unitary matrices and the tensor product. 
\begin{lemma}[{\cite[Theorem 23]{broxson2006kronecker}}]
\label{productounitarias}
Let $U$, $V$ be two matrices for which $U \otimes V$ is unitary. Then, there exists some $\mu > 0$ such that $\mu U$ and $\frac{1}{\mu}V$ are unitary.
\end{lemma}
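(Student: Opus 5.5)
The plan is to reduce the statement to a rigidity property of Kronecker products of positive semidefinite matrices. We allow $U \in \bb{C}^{m\times n}$ and $V \in \bb{C}^{p \times q}$ with $mp = nq$, and show along the way that both are square. Set
\[
A := UU^\dagger, \quad B := VV^\dagger, \quad A' := U^\dagger U, \quad B' := V^\dagger V.
\]
All four are positive semidefinite. Using the mixed-product property of the Kronecker product, unitarity of $U \otimes V$ gives
\[
A \otimes B = (U\otimes V)(U \otimes V)^\dagger = \mathds{1}, \qquad A' \otimes B' = (U\otimes V)^\dagger(U \otimes V) = \mathds{1}.
\]

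The key step is the following claim: if $A \otimes B = \mathds{1}$ with $A, B$ positive semidefinite, then $A = a\mathds{1}$ and $B = b\mathds{1}$ with $a, b > 0$ and $ab = 1$.
\begin{itemize}
\item Write $A \otimes B = \mathds{1}$ entrywise as $A_{ij}B_{kl} = \delta_{ij}\delta_{kl}$.
\item $B \neq 0$, since otherwise $A \otimes B = 0$. Because $B$ is positive semidefinite and nonzero, it has $\operatorname{tr} B > 0$, so some diagonal entry $B_{kk} > 0$.
\item Fixing $k = l$ equal to this index gives $A_{ij} = \delta_{ij}/B_{kk}$. Hence $A = a\mathds{1}$ with $a = 1/B_{kk} > 0$.
\item By the symmetric argument, $B = b\mathds{1}$ with $b > 0$. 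Then $a\mathds{1}\otimes b\mathds{1} = \mathds{1}$ forces $ab = 1$.
\end{itemize}
The same claim applies verbatim to $A' \otimes B' = \mathds{1}$.

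Next we check that $U$ and $V$ are square. From $UU^\dagger = a\mathds{1}_m$ with $a > 0$, $U$ has rank $m$. From $U^\dagger U = a'\mathds{1}_n$ with $a' > 0$, $U$ has rank $n$. Therefore $m = n$, and likewise $p = q$. For a square matrix, $UU^\dagger = a\mathds{1}$ implies $U^\dagger U = a\mathds{1}$, so $a' = a$.

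Finally, take $\mu := a^{-1/2} > 0$. Then
\[
(\mu U)(\mu U)^\dagger = \mu^2 a\,\mathds{1} = \mathds{1}, \qquad \tfrac{1}{\mu^2}VV^\dagger = a b\,\mathds{1} = \mathds{1},
\]
so $\mu U$ and $\frac{1}{\mu}V$ are unitary.

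The only step requiring real care is the claim, specifically finding a nonzero entry of $B$ to divide by. Positive semidefiniteness settles this cleanly, because a nonzero positive semidefinite matrix always has a strictly positive diagonal entry. Everything else is bookkeeping with the mixed-product property.
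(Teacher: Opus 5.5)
Your proof is correct. The paper gives no argument of its own for this lemma. It simply imports it as \cite[Theorem 23]{broxson2006kronecker}, so your proposal works as a self-contained replacement rather than a variant of an internal proof.

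Your route has three steps. First, you pass to the Gram matrices via the mixed-product property, $UU^\dagger\otimes VV^\dagger=\mathds{1}$ and $U^\dagger U\otimes V^\dagger V=\mathds{1}$. Second, you use that a Kronecker factorisation of the identity forces both factors to be positive multiples of the identity. Third, you rescale. This needs nothing beyond elementary linear algebra.

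It also proves slightly more than the paper needs. You allow rectangular $U$, $V$ and show that squareness is forced, by comparing $\operatorname{rank} U$ obtained from $UU^\dagger=a\mathds{1}_m$ and from $U^\dagger U=a'\mathds{1}_n$. In every application in the paper the factors are square from the outset: $X_i\otimes Y_i$ in \cref{theoremgauge1}, the four-fold products in \cref{simetria2D}, and $V\otimes\sum_{ij}\beta_{ij}\ket{i}\bra{j}$ in \cref{lem:reducedmatrixdecomposition}.

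One small streamlining is available. For square $U$ you do not need the deduction $U^\dagger U=a\mathds{1}$, since for a square matrix $W$ the single condition $WW^\dagger=\mathds{1}$ already makes $W$ unitary.
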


\begin{proof}[Proof of \cref{theoremgauge1}]
Consider that the same state, up to a global phase $\theta$, is represented by two tensor networks of the family, 
\begin{equation}
\label{eq:firsteq1dcircuit}
\input{TikzFigures/fig3.tikz},
\end{equation}
where by assumption $\Lambda_1, \dots, \Lambda_k$ are full-rank. We used $\propto$ to denote that both states are equal up to a phase. 
 
Let us fix one site $i \in \{1, \dotsc, k-1\}$ in the above equality. Multiplying site $i$ by $V_i^\dagger$ and every site $j \neq i$ by $U_j^\dagger$, we obtain
\begin{equation*}
\input{TikzFigures/fig4.tikz},
\end{equation*}
where the vertical red dotted line highlights the $i$-th site in both tensor networks. Since the states $\Lambda_1, \dotsc, \Lambda_k$ are full-rank by assumption, we can invert them to obtain
\begin{equation*}
\input{TikzFigures/fig5.tikz}.
\end{equation*}
Tracing at both sides of the $i$-th site, we see that
\begin{equation}
\label{eq:equalitiestrace}
\input{TikzFigures/fig6.tikz}.
\end{equation}
If we now define
\[
\input{TikzFigures/defunitaries.tikz}
\]
we can rewrite \cref{eq:equalitiestrace} as
\begin{equation*}
\input{TikzFigures/fig7.tikz},
\end{equation*}
where $X_i$ and $Y_i$ can be assumed to be unitary without loss of generality, as their tensor product is unitary (cf. \cref{productounitarias}).

Following the same reasoning for every site, we can conclude that for every $j \in \{1, \dotsc, k-1\}$ there exists a phase $\varphi_j \in [0, 2\pi)$ and two unitary matrices $X_j, Y_j \in \textup{U}(d)$ such that 
\begin{equation}
\label{sym1}
U_j = e^{i\varphi_j} V_j (X_j \otimes Y_j).
\end{equation}
This allows us to rewrite \cref{eq:firsteq1dcircuit} as
\begin{equation*}
\input{TikzFigures/fig8.tikz},
\end{equation*}
and so we can conclude that there exist some phases $\phi_1, \dotsc, \phi_k \in [0, 2\pi)$, such that
\begin{align}
\begin{split}
\label{sym2}
\Sigma_1 &= e^{i\phi_1}(\mathds{1} \otimes X_1) \Lambda_1,\\
\Sigma_j &= e^{i\phi_j}(Y_{j-1} \otimes X_j) \Lambda_j,\ \forall j \in \{2, \dotsc, k-1\},\\
\Sigma_{k} &= e^{i\phi_k}(Y_{k-1} \otimes \mathds{1})\Lambda_k.
\end{split}
\end{align}

In the case when the global phase $\theta$ is zero, the proof can be adapted by \textit{absorbing} the phases from \cref{sym2} into the unitary matrices $X_j$ and $Y_j$. Indeed, note that whenever $\theta = 0$, the phases $\varphi_j$ from \cref{sym1} are all zero. This way, we can conclude that \cref{sym2} holds for some phases $\phi_i$ such that $\sum_{j= 1}^k \phi_j \in  2\pi \bb{Z}$. Thus, we can define 
\begin{align*}
\tilde X_j := e^{i\sum_{m = 1}^j \phi_m} X_j \in \textup{U}(d),\\
\tilde Y_j := e^{-i \sum_{m = 1}^j \phi_m} Y_j \in \textup{U}(d),
\end{align*}
for every $j \in \{1, \dotsc, k-1\}$ to conclude that 
\[
U_j = V_j (\tilde X_j \otimes \tilde Y_j) \big(= V_j (X_j \otimes Y_j)\big),
\]
for every $j \in \{1, \dots, k-1\}$ and 
\begin{align*}
\Sigma_1 &= (\mathds{1} \otimes \tilde X_1) \Lambda_1,\\
\Sigma_j &= (\tilde Y_{j-1} \otimes \tilde X_j) \Lambda_j,\ \forall j \in \{2, \dotsc, k-1\},\\
\Sigma_{k} &= (\tilde Y_{k-1} \otimes \mathds{1})\Lambda_k,    
\end{align*}
finishing the proof. 
\end{proof}

\subsection{Quotient tensor network manifold structure}
\label{sec:1DFIasquot}

Having described the gauge freedom of one-dimensional, depth-two quantum circuits with a fixed input, both up to a phase and exactly, let us now turn our attention to the group action description of such a gauge. Recall that the tensor network manifold associated with this tensor network family is
\[
M_{\textup{1Df}} = (\bb{S}^{2d^2-1})^{\times k} \times \textup{U}(d^2)^{\times k-1}.
\]
\cref{theoremgauge1} clearly motivates the definition of the following group action of $\textup{U}(d)^{\times 2(k-1)}$ on $M_{\textup{1Df}}$ describing the gauge freedom. 

\begin{definition}[Gauge action for 1D circuits with fixed input]
\label{def:gauge1DFIP}
We define the action $\alpha_1$ of $\textup{U}(d)^{\times 2(k-1)}$ on $M_{\textup{1Df}}$ as
\[
\alpha_1 : \textup{U}(d)^{\times 2(k-1)} \times M_{\textup{1Df}} \to M_{\textup{1Df}},\quad (X, (\Lambda, V)) \mapsto (X\Lambda, VX^\dagger),
\]
where 
\begin{align*}
X &:= (X_1, \dotsc, X_{2(k-1)}) \in \textup{U}(d)^{2(k-1)},\\
\Lambda &:= (\Lambda_1, \dotsc, \Lambda_k) \in (\bb{S}^{2d^2-1})^{\times k},\\
V &:= (V_1, \dotsc, V_{k-1}) \in \textup{U}(d^2)^{\times k-1},
\end{align*}
and 
\begin{align*}
X\Lambda &:= ((\mathds{1} \otimes X_1) \Lambda_1, (X_2 \otimes X_3) \Lambda_2, \dotsc, (X_{2(k-2)} \otimes X_{2k-3}) \Lambda_{k-1} , (X_{2(k-1)} \otimes \mathds{1}) \Lambda_k),\\
VX^\dagger &:= (V_1 (X^\dagger_1 \otimes X^\dagger_2), \dotsc, V_{k-1}(X^\dagger_{2k-3} \otimes X^\dagger_{2(k-1)})).
\end{align*}
\end{definition}

Considering the quotient of $M_{\textup{1Df}}$ with respect to $\alpha_1$ we will obtain a quotient tensor network manifold which gives a one-to-one correspondence between points in the quotient manifold and the state represented by a tensor network of the family. 

Going back to the statement of \cref{theoremgauge1}, recall that we also studied the case when two tensor networks
\[
x = (\Lambda_1, \dotsc, \Lambda_{k}, U_1, \dotsc, U_{k-1}),\quad 
y = (\Sigma_1, \dotsc, \Sigma_{k}, V_1, \dotsc, V_{k-1}),
\]
represent the same state up to a global phase, $\mathds{TN}_x = e^{i\theta}\mathds{TN}_y$ with $\theta \in (0, 2\pi)$. In this case, the gauge is identified only up to a phase. For this reason, the result obtained can be written naturally using projective spaces. Indeed, as $\bb{S}^{2d-1} / \textup{U}(1) \simeq \bb{CP}^{d-1}$ and $\textup{U}(d) / \textup{U}(1) \simeq \textup{PU}(d)$ (cf. \cref{sec:manifolds}), if we consider
\[
N_{\textup{1Df}} := (\bb{CP}^{d^2-1})^{\times k} \times \textup{PU}(d^2)^{\times k-1},
\]
each point in this manifold represents a tensor network in the tensor network family of one-dimensional, depth-two quantum circuit with a fixed input up to a phase. Rewriting the result obtained in \cref{theoremgauge1} using this picture, we showed that for every $x, y$ in an open and dense subset of $M_{\textup{1Df}}$ if $\mathds{TN}_x = e^{i\theta}\mathds{TN}_y$ for some phase $\theta \in (0, 2\pi)$, then there exist some unitary matrices $X_j, Y_j \in \textup{U}(d)$, $j \in \{1, \dotsc, k-1\}$ such that
\[
[V_j] = [U_j (X^{\dagger}_j \otimes Y^\dagger_j)] \in \textup{PU}(d^2),
\]
for every $j \in \{1, \dotsc, k-1\}$ and
\begin{align*}
\bb{CP}^{d^2-1} \ni [\Sigma_1] &= [(\mathds{1} \otimes X_1) \Lambda_1],\\ 
\bb{CP}^{d^2-1} \ni [\Sigma_i] &= [( Y_i \otimes X_i) \Lambda_i],\ \forall i \in \{2, \dotsc, k-1\},\\ \bb{CP}^{d^2-1} \ni [\Sigma_{k}] &= [(Y_{k-1} \otimes \mathds{1})\Lambda_k].
\end{align*}

As in the case of the tensor network manifold $M_{\textup{1Df}}$, we can also endow $N_{\textup{1Df}}$ with what we call as its \textit{natural metric}, which is defined as $\tilde g_{\textup{1Df}} := g_{\textup{FS}}^{\oplus k} \oplus g_{\textup{pu}}^{\oplus k-1}$ (cf. \cref{sec:manifolds}). In particular, the metric considered on $N_{\textup{1Df}}$ is such that the projection $(M, g_{\textup{1Df}}) \to (N_{\textup{1Df}}, \tilde g_{\textup{1Df}})$ is a Riemannian submersion (cf. \cref{sec:manifolds}), and so $N_{\textup{1Df}}$ is a quotient tensor network manifold. 

Motivated by the above interpretation of \cref{theoremgauge1}, we define a group action on $N_\textup{1Df}$ describing the gauge freedom.

\begin{definition}[Gauge action for 1D circuits with fixed input, up to phase]
\label{def:gauge1DFI}
We define the action $\alpha_2$ of $\textup{PU}(d)^{\times 2(k-1)}$ on $N_{\textup{1Df}}$ as
\begin{align*}
\alpha_2 :\textup{PU}(d)^{\times 2(k-1)} \times N_{\textup{1Df}} \to N_{\textup{1Df}},\quad (X, (\Lambda, V)) \mapsto (X\Lambda, VX^\dagger),
\end{align*}
where 
\begin{align*}
X &:= ([X_1], \dotsc, [X_{2(k-1)}]) \in \textup{PU}(d)^{\times 2(k-1)},\\
\Lambda &:= ([\Lambda_1], \dotsc, [\Lambda_k]) \in (\bb{CP}^{d^2-1})^{\times k},\\
V &:= ([V_1], \dotsc, [V_{k-1}]) \in \textup{PU}(d^2)^{\times k-1},
\end{align*}
and 
\begin{align*}
X\Lambda &:= ([(\mathds{1} \otimes X_1) \Lambda_1], [(X_2 \otimes X_3) \Lambda_2], \dotsc, [(X_{2(k-2)} \otimes X_{2k-3}) \Lambda_{k-1}] , [(X_{2(k-1)} \otimes \mathds{1}) \Lambda_k]),\\
VX^\dagger &:= ([V_1 (X^\dagger_1 \otimes X^\dagger_2)], \dotsc, [V_{k-1}(X^\dagger_{2k-3} \otimes X^\dagger_{2(k-1)})]).
\end{align*}
\end{definition}

The following two propositions guarantee the existence of a smooth manifold structure and a Riemannian metric on the orbit spaces that result after considering the quotients by the actions $\alpha_1$ and $\alpha_2$ given in \cref{def:gauge1DFIP,def:gauge1DFI}, which in turn represent the gauge found in \cref{theoremgauge1}. The Riemannian metrics on the quotient tensor network manifolds are such that the projection maps are Riemannian submersions. For this, we will prove that $\alpha_1$ and $\alpha_2$ are smooth, free and isometric actions, in which case the result will simply follow from \cref{existenceRiemannianSubmersionMetrics}. See \cref{fig:diag1} for a visual representation of the setting considered. 
\begin{figure}
    \centering
    \input{TikzFigures/diag1.tikz}
    \caption{Sketch of the three quotient tensor network manifolds associated with one-dimensional, depth-two quantum circuits with a fixed input. Each arrow represents a Riemannian submersion induced by a group action.}
    \label{fig:diag1}
\end{figure} 

\begin{proposition}
\label{circuitswithphase}
Let $M_{\textup{1Df}}$ be endowed with its natural metric $g_{\textup{1Df}}$. Consider the action $\alpha_1$ of $\textup{U}(d)^{\times 2(k-1)}$ on $M_{\textup{1Df}}$ as given in \cref{def:gauge1DFIP}. There is a smooth manifold structure and a Riemannian metric $h_{\textup{1Df}}$ on $B_{\textup{1Df}} := M_{\textup{1Df}} / \textup{U}(d)^{\times 2(k-1)}$ for which the projection $(M_{\textup{1Df}}, g_{\textup{1Df}}) \to (B_{\textup{1Df}}, h_{\textup{1Df}})$ is a Riemannian submersion. 
\end{proposition}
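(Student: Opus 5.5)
The plan is to verify that $\alpha_1$ is a smooth, free and isometric action of the compact Lie group $\textup{U}(d)^{\times 2(k-1)}$ on $(M_{\textup{1Df}}, g_{\textup{1Df}})$, and then to invoke \cref{existenceRiemannianSubmersionMetrics}. That result yields a smooth structure on $B_{\textup{1Df}}$ and a metric $h_{\textup{1Df}}$ making the projection a Riemannian submersion. Since the group is compact, properness of the action is automatic, so no separate argument is needed for it.

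\emph{Smoothness.} Each component of $\alpha_1$ is built from matrix multiplications and tensor products, such as $(X_{2j}\otimes X_{2j+1})\Lambda_{j+1}$ and $V_j(X_{2j-1}^\dagger\otimes X_{2j}^\dagger)$. These are polynomial in the real coordinates of $\textup{U}(d)^{\times 2(k-1)}\times(\bb{C}^{d^2})^{\times k}\times (\bb{C}^{d^2\times d^2})^{\times (k-1)}$, together with complex conjugation. Hence $\alpha_1$ is the restriction of a smooth map to embedded submanifolds. It lands in $M_{\textup{1Df}}$ because unitaries preserve norms and unitarity. It is also straightforward to check that it is a group action, i.e.\ compatible with products.

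\emph{Isometry.} For fixed $X$, the map $\alpha_1(X,\cdot)$ acts factorwise. On each sphere factor it acts by a unitary $W\in\textup{U}(d^2)$. Such a $W$ is a real orthogonal transformation of $\bb{C}^{d^2}\simeq\bb{R}^{2d^2}$ and hence preserves $g_{\textup{round}}$. On each $\textup{U}(d^2)$ factor it acts by right multiplication by $(X_{2j-1}^\dagger\otimes X_{2j}^\dagger)\in\textup{U}(d^2)$, which preserves the bi-invariant metric $g_{\textup{bi}}$. Since $g_{\textup{1Df}}$ is the direct sum of these metrics, the action is isometric.

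\emph{Freeness.} This is the main step, although it holds on all of $M_{\textup{1Df}}$ without any genericity. Suppose $\alpha_1(X,(\Lambda,V))=(\Lambda,V)$. From $V_j(X_{2j-1}^\dagger\otimes X_{2j}^\dagger)=V_j$ and invertibility of $V_j$ we get $X_{2j-1}\otimes X_{2j}=\mathds{1}$. This forces $X_{2j-1}=\lambda_j\mathds{1}$ and $X_{2j}=\lambda_j^{-1}\mathds{1}$ for some $\lambda_j\in\textup{U}(1)$, since a Kronecker product equal to the identity has scalar factors. I then propagate from the left.
\begin{itemize}
\item The first sphere factor gives $(\mathds{1}\otimes X_1)\Lambda_1=\lambda_1\Lambda_1=\Lambda_1$. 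Since $\Lambda_1\neq 0$, this yields $\lambda_1=1$, so $X_1=X_2=\mathds{1}$.
\item Inductively, if $X_{2j}=\mathds{1}$, then the $(j+1)$-th sphere factor gives $(\mathds{1}\otimes X_{2j+1})\Lambda_{j+1}=\lambda_{j+1}\Lambda_{j+1}=\Lambda_{j+1}$. Hence $\lambda_{j+1}=1$, and so $X_{2j+1}=X_{2j+2}=\mathds{1}$.
\end{itemize}
Thus the stabiliser of every point is trivial.

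The only subtlety I anticipate is bookkeeping the index pattern in \cref{def:gauge1DFIP}, namely which $X_i$ sits on which leg, so that the chain argument closes. With the three properties established, \cref{existenceRiemannianSubmersionMetrics} finishes the proof.
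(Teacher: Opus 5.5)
Your proof is correct and follows essentially the same route as the paper. It establishes smoothness from matrix multiplication and isometry factorwise (as in \cref{prop:isomsphere,prop:isomunitaries}). For freeness, it forces $X_{2j-1}=\lambda_j\mathds{1}$ and $X_{2j}=\lambda_j^{-1}\mathds{1}$ from the $\textup{U}(d^2)$ factors, kills the phases using the sphere factors, and then concludes via \cref{existenceRiemannianSubmersionMetrics}. Your left-to-right chain argument simply makes explicit the step the paper calls automatic, and you correctly note that properness is automatic because $\textup{U}(d)^{\times 2(k-1)}$ is compact.
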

\begin{proof}
First, it is known that matrix multiplication is a smooth map, and so $\alpha_1$ is smooth. Now, as $M_{\textup{1Df}}$ is endowed with its natural metric $g_{\textup{1Df}}$, which is the product of the round metric and the bi-invariant metric, it follows from \cref{prop:isomsphere,prop:isomunitaries} that $\alpha_1$ is an isometric action, as it is defined as unitary matrix multiplication in every component. 

In order to conclude the proof, it only remains to show that the action is free. For this, let $V \in \textup{U}(d^2)^{\times k-1}$, $\Lambda \in (\bb{S}^{2d^2-1})^{\times k}$ and assume that there exists an element $X \in \textup{U}(d)^{\times 2(k-1)}$ such that
\[
VX^\dagger = V,\ \textup{and}\ X\Lambda = \Lambda,
\]
where we are using the same notation as in \cref{def:gauge1DFIP}. Then, for every $i \in \{1, \dotsc, k-1\}$ it holds that 
\[
V_i (X^\dagger_{2i-1} \otimes X^\dagger_{2i}) = V_i,
\]
which in particular implies that for every $i \in \{1, \dotsc, k-1\}$ there exists some $\lambda_i \in \textup{U}(1)$ such that 
\begin{equation}
\label{almostfree}
X_{2i-1} = \lambda_i \mathds{1},\ X_{2i} = \frac{1}{\lambda_i} \mathds{1}.
\end{equation}
Moreover, from the assumption that
\[
X\Lambda = \Lambda,
\]
we know that
\begin{align*}
(\mathds{1} \otimes X_1)\Lambda_1 &= \Lambda_1,
\\(X_{2i} \otimes X_{2i+1})\Lambda_i &= \Lambda_i,\quad \forall i \in \{2, \dotsc k-1\},\\
(X_{2(k-1)} \otimes \mathds{1})\Lambda_k &= \Lambda_k.
\end{align*}
These equalities, along with \cref{almostfree}, automatically imply that $\lambda_i = 1$ for every $i \in \{1, \dotsc, k-1\}$, and so the action is free. 
\end{proof}

\begin{proposition}
\label{thm:action1Dcircuitsfixedinputphase}
Let $N_{\textup{1Df}}$ be endowed with its natural metric $\tilde g_{\textup{1Df}}$. Consider the action $\alpha_2$ of $\textup{PU}(d)^{\times 2(k-1)}$ on $N_{\textup{1Df}}$ as given in \cref{def:gauge1DFI}. There is a smooth manifold structure and a Riemannian metric $h'_{\textup{1Df}}$ on $B'_{\textup{1Df}} := N_{\textup{1Df}} / \textup{PU}(d)^{\times 2(k-1)}$ for which the projection $(N_{\textup{1Df}}, \tilde g_{\textup{1Df}}) \to (B'_{\textup{1Df}},  h'_{\textup{1Df}})$ is a Riemannian submersion. 
\end{proposition}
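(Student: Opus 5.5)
The plan is to mirror the proof of \cref{circuitswithphase}: show that $\alpha_2$ is well defined, smooth, isometric and free, and then invoke \cref{existenceRiemannianSubmersionMetrics}. Since $\textup{PU}(d)^{\times 2(k-1)}$ is compact, the action is automatically proper. So, as for $\alpha_1$, smoothness, freeness and isometry are the hypotheses to verify.

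\emph{Well-definedness and smoothness.} Changing representatives $X_j \mapsto \mu_j X_j$, $\Lambda_i \mapsto \nu_i\Lambda_i$, $V_i\mapsto \rho_i V_i$ with $\mu_j,\nu_i,\rho_i \in \textup{U}(1)$ only multiplies each component of $(X\Lambda, VX^\dagger)$ by a phase. Hence the classes in $\bb{CP}^{d^2-1}$ and $\textup{PU}(d^2)$ are unchanged, and $\alpha_2$ is well defined. The group-action axioms are inherited from those of $\alpha_1$.

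For smoothness, consider the commutative square whose top arrow is $\alpha_1 : \textup{U}(d)^{\times 2(k-1)} \times M_{\textup{1Df}} \to M_{\textup{1Df}}$ and whose vertical arrows are the quotient maps $\textup{U}(d)\to\textup{PU}(d)$, $\bb{S}^{2d^2-1}\to\bb{CP}^{d^2-1}$ and $\textup{U}(d^2)\to \textup{PU}(d^2)$. These vertical maps are surjective submersions (cf. \cref{sec:manifolds}). Therefore $\alpha_2 \circ (\text{projection})$ is smooth, and the universal property of surjective submersions gives smoothness of $\alpha_2$.

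\emph{Isometry.} This is the step I expect to require the most care, since $\tilde g_{\textup{1Df}}$ is defined as the metric making the projection $p: M_{\textup{1Df}} \to N_{\textup{1Df}}$ a Riemannian submersion. Fix $g\in \textup{U}(d)^{\times 2(k-1)}$. The map $\alpha_1(g,\cdot)$ is an isometry of $g_{\textup{1Df}}$ by \cref{prop:isomsphere,prop:isomunitaries}. It commutes with the phase action of $\textup{U}(1)^{\times (2k-1)}$, so it maps fibres of $p$ to fibres of $p$. Its differential therefore preserves the vertical distribution and, being an isometry, also the horizontal distribution. Let $\bar g := p(g)$ denote the induced element of $\textup{PU}(d)^{\times 2(k-1)}$. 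Given $v\in T_{[x]}N_{\textup{1Df}}$, take its horizontal lift $\tilde v$ at $x$. Then $d\alpha_1(g,\cdot)\tilde v$ is the horizontal lift of $d\alpha_2(\bar g,\cdot)v$, so
\[
\tilde g_{\textup{1Df}}(d\alpha_2(\bar g,\cdot) v, d\alpha_2(\bar g,\cdot) v) = g_{\textup{1Df}}(d\alpha_1(g,\cdot)\tilde v, d\alpha_1(g,\cdot)\tilde v) = g_{\textup{1Df}}(\tilde v,\tilde v)=\tilde g_{\textup{1Df}}(v,v).
\]
Thus $\alpha_2$ is isometric.

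\emph{Freeness.} Here the projective setting is actually simpler than for $\alpha_1$. Suppose $[V_i(X_{2i-1}^\dagger\otimes X_{2i}^\dagger)] = [V_i]$ for all $i$. Then $X_{2i-1}\otimes X_{2i} = \lambda_i \mathds{1}$ for some $\lambda_i\in \textup{U}(1)$. Taking a partial trace gives $\operatorname{tr}(X_{2i})\,X_{2i-1} = d\lambda_i\mathds{1}$. Since the product is nonzero, $\operatorname{tr}(X_{2i})\neq 0$ (otherwise the partial trace of $X_{2i-1}\otimes X_{2i}$ would vanish), so $X_{2i-1}$ is a scalar multiple of $\mathds{1}$, and similarly for $X_{2i}$. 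Hence $[X_j]=[\mathds{1}]$ for all $j$, and the stabiliser is trivial. This argument does not even need the condition on the $\Lambda_i$. Combining the four properties with \cref{existenceRiemannianSubmersionMetrics} yields the manifold $B'_{\textup{1Df}}$ and the metric $h'_{\textup{1Df}}$ making the projection a Riemannian submersion.
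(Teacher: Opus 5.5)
Your proof is correct and follows essentially the same route as the paper. You establish smoothness, inherit the isometry from $\alpha_1$ through the phase projection (your horizontal-lift computation is precisely the content of \cref{projectiveIsometry}), read off freeness from the top-layer gates alone, and then invoke \cref{existenceRiemannianSubmersionMetrics}; beyond the paper, you also make explicit the well-definedness, smoothness via the characteristic property of surjective submersions, properness from compactness, and the partial-trace argument showing that $[X_{2i-1}\otimes X_{2i}]=[\mathds{1}]$ forces $[X_{2i-1}]=[X_{2i}]=[\mathds{1}]$.
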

\begin{proof}
The action is smooth. Furthermore, it is isometric. This follows automatically from \cref{projectiveIsometry} as $\alpha_2$ corresponds to the \textit{projected version} of $\alpha_1$, which was shown to be an isometric action in the previous proof. 

The freeness of the action follows automatically from the previous proof. Indeed, if we assume that for every $i \in \{1, \dotsc, k-1\}$ it holds that 
\[
[V_i (X^\dagger_{2i-1} \otimes X^\dagger_{2i})] = [V_i],
\]
then for every $i \in \{1, \dotsc, k-1\}$,
\[
[(X^\dagger_{2i-1} \otimes X^\dagger_{2i})] = [\mathds{1}], 
\]
and so $[X_i] = [\mathds{1}]$ for every $i\in \{1, \dotsc, 2(k-1)\}$.

\end{proof}

\subsection{Proof of Theorem \ref{thm:quotTN1DFI}}
\label{sec:proofThm1}

The proof of \cref{thm:quotTN1DFI} follows automatically from the results obtained in the two previous sub-sections, i.e. the gauge description obtained in \cref{theoremgauge1} and the guarantee of a smooth manifold structure and a Riemannian metric making the projections from $M_{\textup{1Df}}$ and $N_{\textup{1Df}}$ to $B_{\textup{1Df}}$ and $B'_{\textup{1Df}}$ Riemannian submersions obtained in \cref{circuitswithphase,thm:action1Dcircuitsfixedinputphase}. 

\begin{proof}[Proof of \cref{thm:quotTN1DFI}]
First, as we mentioned in \cref{rmk:generic1}, the full-rank assumption made in the statement of \cref{theoremgauge1}, holds on an open and dense subset of $M_{\textup{1Df}}$. Thus, using \cref{circuitswithphase} and \cref{theoremgauge1} we can conclude that $(M_{\textup{1Df}}, g_{\textup{1Df}})$ defines the quotient tensor network manifold $(B_{\textup{1Df}}, h_{\textup{1Df}})$ as defined in \cref{circuitswithphase}, and that for every $x, y$ in an open and dense subset of $M_{\textup{1Df}}$ it holds that 
\[
\mathds{TN}_x = \mathds{TN}_y \iff [x]_{B_{\textup{1Df}}} = [y]_{B_{\textup{1Df}}}.
\]

To obtain the result for $(B'_{\textup{1Df}}, h'_{\textup{1Df}})$, recall that the projection map $(M_{\textup{1Df}}, g_{\textup{1Df}})\to (N_{\textup{1Df}}, \tilde g_{\textup{1Df}})$ is a Riemannian submersion. Now, using \cref{thm:action1Dcircuitsfixedinputphase} we know that the projection $(N_{\textup{1Df}}, \tilde g_{\textup{1Df}}) \to (B'_{\textup{1Df}}, h'_{\textup{1Df}})$ is a Riemannian submersion. This way, we can compose both Riemannian submersions to obtain a Riemannian submersion $(M_{\textup{1Df}}, g_{\textup{1Df}}) \to (B'_{\textup{1Df}}, h'_{\textup{1Df}})$, and use \cref{theoremgauge1} to conclude that for every $x, y$ in an open and dense subset of $M_{\textup{1Df}}$ it holds that 
\[
\exists \lambda \in \textup{U}(1)\ \mathit{s.t.}\ \mathds{TN}_x = \lambda \mathds{TN}_y \iff [x]_{B'_{\textup{1Df}}} = [y]_{B'_{\textup{1Df}}}.
\]

\end{proof}

\section{Depth-two quantum circuits in one dimension with no fixed input}
\label{sectionnofixedinput}

Let us now study the tensor network family of depth-two quantum circuits in one dimension with no fixed input and $k$ gates in the bottom layer,
\begin{equation}
\label{eq:circuito1Dnofixed}
\input{TikzFigures/circuitosimple2.tikz},
\end{equation}
where each gate acts on two neighbouring qudits, which are assumed to be unitary vectors in $\bb{C}^d$. The tensor network manifold associated with this family is
\[
M_{\textup{1D}} := \textup{U}(d^2)^{\times 2k-1},
\]
and its \textit{natural metric} is $g_{\textup{1D}} := g_{\textup{bi}}^{\oplus 2k-1}$. 

Again, in this setting there exist two quotient tensor network manifolds associated with $M_{\textup{1D}}$ which characterise quantum circuits both up to a phase and exactly. 

\begin{theorem}
\label{thm:quotTN1D}
Let $M_{\textup{1D}}$ be the tensor network manifold associated with the tensor network family of one-dimensional, depth-two circuits with $k$ gates in the bottom layer. When endowed with its natural metric $g_{\textup{1D}}$, it defines two quotient tensor network manifolds, $(B_{\textup{1D}}, h_{\textup{1D}})$ and $(B'_{\textup{1D}}, h'_{\textup{1D}})$ such that, for every $x, y$ on a open and dense subset of $M_{\textup{1D}}$, it holds that
\[
\mathds{TN}_x = \mathds{TN}_y \iff [x]_{B_{\textup{1D}}} = [y]_{B_{\textup{1D}}},
\]
and
\[
\exists \lambda \in \textup{U}(1) \text{ s.t. } \mathds{TN}_x = \lambda \mathds{TN}_y \iff [x]_{B'_{\textup{1D}}} = [y]_{B'_{\textup{1D}}}.
\]
\end{theorem}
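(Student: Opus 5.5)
I would follow the same three-step template as for \cref{thm:quotTN1DFI}: first a gauge characterisation theorem, then freeness and isometry of the corresponding group actions, and finally composition of Riemannian submersions.

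\textbf{Step 1 (gauge).} The analogue of \cref{theoremgauge1} I aim for is the following. Let $x=(U_1,\dots,U_k,W_1,\dots,W_{k-1})$ and $y=(V_1,\dots,V_k,Z_1,\dots,Z_{k-1})$ be two points of $M_{\textup{1D}}$, with $U_i$ the bottom gates and $W_j$ the top gates. If $\mathds{TN}_x=e^{i\theta}\mathds{TN}_y$, then there exist $X_j,Y_j\in\textup{U}(d)$ such that
\[
Z_j\propto W_j(X_j^\dagger\otimes Y_j^\dagger),
\]
together with the corresponding relations for the bottom gates,
\[
V_1\propto(\mathds{1}\otimes X_1)U_1,\qquad V_j\propto(Y_{j-1}\otimes X_j)U_j,\qquad V_k\propto(Y_{k-1}\otimes\mathds{1})U_k .
\]
When $\theta=0$ these relations should hold with equality.

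To prove this, I would reduce to the fixed-input case. A circuit with no fixed input is determined by its action on the full Choi state, so I would feed the bottom gates with maximally entangled inputs: attach to each input qudit a reference qudit in a maximally entangled pair. Each bottom gate $U_i$ then becomes a state $\Lambda_i=(U_i\otimes\mathds{1})\ket{\Omega}$ on four qudits. This state is full-rank across the cut separating its two output legs together with their references, because $U_i$ is unitary.

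With this, the proof of \cref{theoremgauge1} goes through verbatim. The steps are: multiply by $W_j^\dagger$ at all sites except site $i$; invert the bottom tensors, now viewed as maps across the relevant legs; trace out the other sites; and apply \cref{productounitarias}. This yields $W_j=e^{i\varphi_j}Z_j(X_j\otimes Y_j)$. Cancelling the top layer then leaves a product of bottom gates on disjoint supports, and equality of a tensor product of unitaries gives each bottom relation up to a phase. The phase absorption in the case $\theta=0$ is identical to the earlier proof.

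A notable difference from the fixed-input case is that no genericity condition is needed, since unitarity is automatic. The open dense subset can therefore be taken to be all of $M_{\textup{1D}}$.

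\textbf{Step 2 (actions).} I would define $\alpha_3$ as the action of $\textup{U}(d)^{\times 2(k-1)}$ on $M_{\textup{1D}}$ by the same formula as \cref{def:gauge1DFIP}, with $\Lambda_i$ replaced by $U_i$ acting by left multiplication. I would define $\alpha_4$ as the action of $\textup{PU}(d)^{\times 2(k-1)}$ on $N_{\textup{1D}}:=\textup{PU}(d^2)^{\times 2k-1}$ by the same formula on classes.

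Smoothness is clear. Isometry follows from \cref{prop:isomunitaries}, since bi-invariance makes both left and right multiplication isometries, and from \cref{projectiveIsometry} for the projected action. Freeness is even easier than in \cref{circuitswithphase}. The relation $W_j(X^\dagger_{2j-1}\otimes X^\dagger_{2j})=W_j$ forces $X^\dagger_{2j-1}\otimes X^\dagger_{2j}=\mathds{1}$, so $X_{2j-1}=\lambda_j\mathds{1}$ and $X_{2j}=\bar\lambda_j\mathds{1}$. The condition $(\mathds{1}\otimes X_1)U_1=U_1$ then gives $\lambda_1=1$, and then site by site all $\lambda_j=1$. In the projective case, the relation on classes gives $[X_{2j-1}\otimes X_{2j}]=[\mathds{1}]$, and hence $[X_i]=[\mathds{1}]$.

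\textbf{Step 3 (conclusion).} Applying \cref{existenceRiemannianSubmersionMetrics} gives $B_{\textup{1D}}$ and $B'_{\textup{1D}}$ with Riemannian submersions. For $B'_{\textup{1D}}$ I compose with the Riemannian submersion $M_{\textup{1D}}\to N_{\textup{1D}}$, exactly as in \cref{sec:proofThm1}. The implication ``$\Leftarrow$'' is immediate, since gauge moves cancel on the contracted edges.

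The main obstacle is Step 1, specifically justifying the reduction to Choi states. One must check that equality of the circuits is equivalent to equality of the states obtained by feeding them maximally entangled inputs. One must also check that the inversion-and-trace argument still yields operators acting only on the internal edge spaces, rather than also on the reference legs. I would verify the latter by noting that the reference legs pass straight through the whole computation.
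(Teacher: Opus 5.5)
Your Steps 2 and 3 are the paper's argument. Your $\alpha_3,\alpha_4$ are $\beta_1,\beta_2$ of \cref{def:gauge1D,def:gauge1DP}, your freeness argument is the one in \cref{circuitswithphase}, and you compose with $M_{\textup{1D}}\to N_{\textup{1D}}$ as the paper does. Step 1 is a genuinely different reduction. In \cref{theoremgaugenotfixed} the paper feeds a single product input $\ket{\psi}^{\otimes 2k}$. It then needs every $U_i\ket{\psi}^{\otimes 2}$ to be invertible as a map from its left to its right output leg. That is a real genericity condition: it fails for every $\psi$ as soon as some bottom gate is a product $a\otimes b$. This is why the paper only gets the equivalence on an open dense set. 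Feeding maximally entangled inputs removes the condition. The resulting gauge characterisation on all of $M_{\textup{1D}}$ is in fact true, so your route yields a stronger statement.

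The rank claim this rests on is, however, stated for the wrong cut. The proof of \cref{theoremgauge1} can only run verbatim if $\Lambda_i$ is a square map $\{\text{left output},R\}\to\{\text{right output},R'\}$, i.e.\ with the references split one per side. Its rank is then an operator Schmidt rank of the gate, which is not automatically full. For either pairing of references with outputs there is a unitary making $\Lambda_i$ a product state across the cut: $U_i=\mathds{1}$ gives $\Lambda_i=\ket{\Omega}\otimes\ket{\Omega}$ for the straight pairing, and SWAP does the same for the crossed one.

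What unitarity does give is weaker but sufficient. The marginal of $\Lambda_i$ on each single output qudit is $\mathds{1}/d$. Hence $\Lambda_i$, viewed as a map from its other three legs, is \emph{surjective} onto the output leg adjacent to the cut; this is the surjectivity form of full rank in \cref{rmk:fullrankrmk}. You must then use a right inverse $N_i$ instead of an inverse. From $(Z_i^\dagger W_i)(\Lambda_i\otimes\Lambda_{i+1})=\phi\otimes\phi'$ you get $Z_i^\dagger W_i=(\Phi N_i)\otimes(\Phi' N_{i+1})$. Each factor acts on a single internal edge, which also settles your worry about the reference legs.

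You can even skip the Choi detour. Write $\mathds{TN}_x=T_xB_x$ with $T_x=\mathds{1}\otimes W_1\otimes\dotsm\otimes W_{k-1}\otimes\mathds{1}$ and $B_x=U_1\otimes\dotsm\otimes U_k$. The hypothesis gives $T_y^\dagger T_x=e^{i\theta}B_yB_x^\dagger$, an operator that factorises over both interleaved pairings of qudits. Comparing operator Schmidt ranks across the cut between qudits $2j$ and $2j+1$ forces each $Z_j^\dagger W_j$ to be a product, with no genericity assumption.
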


\subsection{Characterising the gauge}

Studying the gauge freedom of one-dimensional, depth-two circuits with no fixed input can be easily reduced to studying that of depth-two circuits with a fixed input.  

\begin{theorem}
\label{theoremgaugenotfixed}
Consider the tensor network family of depth-two quantum circuits in one dimension with no fixed input and $k$ gates in the bottom layer, as shown in \cref{eq:circuito1Dnofixed}. Consider two tensor networks of this family, representing the circuits $U$ and $V$,
\[
U \equiv (A_1, \dotsc, A_k, U_1, \dotsc, U_{k-1}),\quad V \equiv (B_1, \dotsc, B_k, V_1, \dotsc, V_{k-1}),
\]
where the first $k$ coordinates denote the bottom gates of the circuit, and the last $k-1$ denote the top gates of the circuit. Assume that there exists a product input state $\ket{\psi}^{\otimes 2k}$ for which the tensors $A_i\ket{\psi}^{\otimes 2}$ are full-rank in the sense of \cref{theoremgauge1}, for every $i \in \{1, \dotsc, k\}$. Further assume that there exists some phase $\theta \in [0, 2\pi)$ for which $U = e^{i\theta }V$. Then, for every $j \in \{1, \dots, k-1\}$ there exist some unitary matrices $X_j, Y_j \in \textup{U}(d)$ for which 
\[
V_j \propto U_j (X^\dagger_j \otimes Y^\dagger_j),
\]
and 
\begin{align*}
B_1 &\propto (\mathds{1} \otimes X_1) A_1,\\
B_j &\propto (\tilde Y_{j-1} \otimes X_j) A_j,\ \forall j \in \{2, \dotsc, k-1\},\\
B_{k} &\propto (Y_{k-1} \otimes \mathds{1})A_k,    
\end{align*}
Furthermore, if $\theta = 0$, the above equations hold with equality. 
\end{theorem}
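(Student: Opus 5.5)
The plan is to reduce the statement to \cref{theoremgauge1} by feeding both circuits the product input $\ket{\psi}^{\otimes 2k}$ and then pushing the resulting gauge information on the output states back to the gates themselves. The first step is to choose a unitary $W \in \textup{U}(d)$ with $W\ket{0} = \ket{\psi}$. Since $U = e^{i\theta} V$ as operators, we get $U W^{\otimes 2k}\ket{0}^{\otimes 2k} = e^{i\theta} V W^{\otimes 2k}\ket{0}^{\otimes 2k}$. Both sides are fixed-input depth-two circuits with bottom output states
\[
\Lambda_i := A_i (W\otimes W)\ket{0}^{\otimes 2} = A_i\ket{\psi}^{\otimes 2},\qquad \Sigma_i := B_i\ket{\psi}^{\otimes 2},
\]
and top gates $U_j$, $V_j$ respectively. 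By hypothesis the $\Lambda_i$ are full-rank, so \cref{theoremgauge1} applies. It yields unitaries $X_j, Y_j$ with $V_j \propto U_j(X_j^\dagger\otimes Y_j^\dagger)$, which is exactly the claimed relation for the top layer. It also yields the corresponding relations for the states, for instance $B_1\ket{\psi}^{\otimes 2} \propto (\mathds{1}\otimes X_1)A_1\ket{\psi}^{\otimes 2}$, with equality if $\theta=0$.

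The second step upgrades these relations from states to gates. The cleanest route uses the operator identity directly rather than a single input. Substituting $V_j = e^{i\varphi_j}U_j(X_j^\dagger\otimes Y_j^\dagger)$ into $U = e^{i\theta}V$ and cancelling the invertible top layer gives
\[
A_1\otimes\cdots\otimes A_k = e^{i\theta'}\,(\mathds{1}\otimes X_1^\dagger\otimes Y_1^\dagger\otimes\cdots\otimes X_{k-1}^\dagger\otimes Y_{k-1}^\dagger\otimes\mathds{1})(B_1\otimes\cdots\otimes B_k).
\]
The top layer of $U$ equals $\bigotimes$ over the top gates tensored with identities on the two boundary qudits. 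Hence $U_{\mathrm{top}}^\dagger U_{\mathrm{top}}' $ type cancellations are legitimate, because $U_{\mathrm{top}}$ is a unitary on all $2k$ qudits. The right-hand side regroups as $\bigotimes_i \big((Y_{i-1}^\dagger\otimes X_i^\dagger)B_i\big)$, with the obvious boundary conventions. We then have an equality of two tensor products of unitaries acting on disjoint pairs of qudits. Comparing them factor by factor, via partial traces or via \cref{productounitarias} applied repeatedly, shows that each $A_i$ equals a phase times $(Y_{i-1}^\dagger\otimes X_i^\dagger)B_i$. This gives $B_i \propto (Y_{i-1}\otimes X_i)A_i$; the $\tilde Y_{j-1}$ in the statement is just this $Y_{j-1}$.

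The third step handles the exact case $\theta = 0$. Here \cref{theoremgauge1} gives $\varphi_j=0$. The phases $\phi_i$ of the bottom gates then satisfy $\sum_i\phi_i\in 2\pi\bb{Z}$, since the tensor product of the bottom-layer relations must reproduce the identity exactly. We absorb these phases into redefined $X_j, Y_j$ exactly as in the last part of the proof of \cref{theoremgauge1}, using $\tilde X_j = e^{i\sum_{m\le j}\phi_m}X_j$ and $\tilde Y_j = e^{-i\sum_{m\le j}\phi_m}Y_j$. This leaves $X_j\otimes Y_j$ unchanged and makes every relation an equality.

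The main obstacle is the factor-by-factor comparison in the second step. An equality $\bigotimes_i P_i = \lambda\bigotimes_i Q_i$ of tensor products of operators on disjoint factors implies $P_i = \mu_i Q_i$ with $\prod_i\mu_i=\lambda$. I would prove this by induction, taking a partial trace against $Q_i^\dagger$ on all but one factor. Each $Q_i$ is unitary, so $\operatorname{tr}(Q_i^\dagger Q_i)\neq 0$, and this isolates $P_i\propto Q_i$. The proportionality constants have modulus one because both sides are unitary.
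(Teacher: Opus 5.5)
Your proposal is correct and follows essentially the same route as the paper. You get the top-layer relations from \cref{theoremgauge1} with input $\ket{\psi}^{\otimes 2k}$, cancel the common top layer to compare the bottom gates factor by factor, and absorb the phases through cumulative sums when $\theta=0$. Your partial-trace argument for the factor-by-factor comparison spells out a step the paper only reads off its diagram.
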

\begin{proof}
The fact that the gates in the top layer of the circuit are related by the gauge freedom described above follows automatically from \cref{theoremgauge1} (cf. \cref{sym1}), by considering the input state $\ket{\psi}^{\otimes 2k}$. Thus, we know that the bottom layer of the circuits are related as follows:
\begin{equation}
\label{eq:equalitybottomlayer1D}
\input{TikzFigures/fig10.tikz},
\end{equation}
where $A_1, \dotsc, A_k, B_1, \dotsc, B_k \in \textup{U}(d^2)$, $X_1, \dotsc, X_{k-1}, Y_1, \dotsc, Y_{k-1} \in \textup{U}(d)$. \cref{eq:equalitybottomlayer1D} implies that there exist some phases $\theta_j \in [0, 2\pi)$, $j \in \{1, \dotsc, k\}$ for which 
\begin{align}
\label{eq:equalitiesABs}
\begin{split}
(\mathds{1} \otimes X_1) A_1 &= e^{i\theta_1} B_1,\\
(Y_{j-1} \otimes X_{j}) A_j &= e^{i\theta_j} B_j,\quad \textup{for } j \in \{2, \dotsc, k-1\},\\
(Y_{k-1} \otimes \mathds{1}) A_k &= e^{i\theta_k} B_k.
\end{split}
\end{align}
This concludes the proof whenever $\theta \neq 0$. In the case when $\theta = 0$, we showed in \cref{theoremgauge1} that \cref{sym1} holds exactly and not up to a phase, and so \cref{eq:equalitybottomlayer1D} holds exactly as well. In this case, the equalities shown in \cref{eq:equalitiesABs} hold with the extra condition that $\sum_{j = 1}^k \theta_j \in 2\pi \bb{Z}$. Now, the result follows by \textit{absorbing} the phases in the matrices $X_i, Y_i$. Indeed, it suffices to take 
\[
\tilde{X}_j = e^{-i\sum_{m = 1}^j\theta_m} X_j,\quad \tilde{Y}_j = e^{i\sum_{m = 1}^j \theta_m} Y_j,
\]
which allows to conclude that
\begin{align*}
(\mathds{1} \otimes \tilde X_1) A_1 &= B_1,\\
(\tilde Y_{j-1} \otimes \tilde X_{j}) A_j &= B_j,\quad \textup{for } j \in \{2, \dotsc, k-1\},\\
(\tilde Y_{k-1} \otimes \mathds{1}) A_k &= B_k,
\end{align*}
finishing the proof. 
\end{proof}

\subsection{Quotient tensor network manifold structure}
\label{freeactionsourquotients}

Again, the gauge freedom identified in the previous subsection motivates the definition of two group actions that describe it. Recall that the tensor network manifold associated with depth-two quantum circuit in one dimension with $k$ gates in the bottom layer is
\[
M_{\textup{1D}} = \textup{U}(d^2)^{\times 2k-1},
\]
where each qudit is assumed to be a unit vector in $\bb{C}^d$. To describe a tensor network of this family up to a phase, it suffices to specify its gates up to a phase, and so we can consider the manifold
\[
N_{\textup{1D}} := \textup{PU}(d^2)^{\times 2k-1},
\]
which we endow with its \textit{natural metric} $\tilde g_{\textup{1D}} := g_{\textup{pu}}^{\oplus 2k-1}$. 

\begin{definition}[Gauge action for 1D circuits with no fixed input]
\label{def:gauge1D}
We define the action $\beta_1$ of $\textup{U}(d)^{\times 2(k-1)}$ on $M_{\textup{1D}}$ as follows,
\[
\beta_1: \textup{U}(d)^{\times 2(k-1)} \times M_{\textup{1D}} \to M_{\textup{1D}},\quad (X, (U, V)) \mapsto (XU, VX^\dagger),
\]
where $X \in \textup{U}(d)^{\times 2(k-1)}$, $U \in \textup{U}(d^2)^{\times k}$ represents the bottom layer of the circuit and $V \in \textup{U}(d^2)^{\times k-1}$ represents the top layer of the circuit. We write 
\[
X := (X_1, \dotsc, X_{2(k-1)}),\ U := (U_1, \dotsc, U_k),\ V := (V_1, \dotsc, V_{k-1}),
\]
and 
\begin{align*}
XU &:= ((\mathds{1} \otimes X_1) U_1, (X_2 \otimes X_3) U_2, \dotsc, (X_{2(k-2)} \otimes X_{2(k-1)-1}) U_{k-1} , (X_{2(k-1)} \otimes \mathds{1}) U_k),\\
VX^\dagger &:= (V_1 (X^\dagger_1 \otimes X^\dagger_2), \dotsc, V_{k-1}(X^\dagger_{2(k-1)-1} \otimes X^\dagger_{2(k-1)})).
\end{align*}
\end{definition}

\begin{definition}[Gauge action for 1D circuits with no fixed input, up to phase]
\label{def:gauge1DP}
We define the action $\beta_2$ of $\textup{PU}(d)^{\times 2(k-1)}$ on $N_{\textup{1D}}$ as follows,
\begin{align*}
\beta_2: \textup{PU}(d)^{\times 2(k-1)} \times N_{\textup{1D}} \to N_{\textup{1D}},\quad (X, (U, V)) &\mapsto (XU, VX^\dagger),
\end{align*}
where $X \in \textup{PU}(d)^{\times 2(k-1)}$, $U \in \textup{PU}(d^2)^{\times k}$ and $V \in \textup{PU}(d^2)^{\times k-1}$. We furthermore write
\[
X = ([X_1], \dotsc, [X_{2(k-1)}]),\ U = ([U_1], \dotsc, [U_k]),\ V = ([V_1], \dotsc, [V_{k-1}]),
\]
and 
\begin{align*}
XU &:= ([(\mathds{1} \otimes X_1) U_1], [(X_2 \otimes X_3) U_2], \dotsc, [(X_{2(k-2)} \otimes X_{2(k-1)-1}) U_{k-1}], [(X_{2(k-1)} \otimes \mathds{1}) U_k]),\\
VX^\dagger &:= ([V_1 (X^\dagger_1 \otimes X^\dagger_2)], \dotsc, [V_{k-1}(X^\dagger_{2(k-1)-1} \otimes X^\dagger_{2(k-1)})]).
\end{align*}
\end{definition}

The following two results guarantee the existence of a smooth manifold structure and a Riemannian metric on the quotient spaces associated with $\beta_1$ and $\beta_2$, which make the projection maps Riemannian submersions. We omit the proofs, as they are analogous to those of \cref{circuitswithphase}. For a visual representation of the setting considered, see \cref{fig:diag2}. 
\begin{figure}
    \centering
    \input{TikzFigures/diag2.tikz}
    \caption{Visual description of the setting considered in \cref{freeactionsourquotients} and \cref{thm:quotTN1D}. }
    \label{fig:diag2}
\end{figure}
\begin{proposition}
\label{circuitswithphasenoinput}
Let $M_{\textup{1D}}$ be endowed with its natural metric $g_{\textup{1D}}$. Consider the action $\beta_1$ of $\textup{U}(d)^{\times 2(k-1)}$ on $M_{\textup{1D}}$ given in \cref{def:gauge1D}. There is a smooth manifold structure and a Riemannian metric $h_{\textup{1D}}$ on $B_{\textup{1D}} := M_{\textup{1D}} / \textup{U}(d)^{\times 2(k-1)}$ for which the projection $(M_{\textup{1D}}, g_{\textup{1D}}) \to (B_{\textup{1D}}, h_{\textup{1D}})$ is a Riemannian submersion. 
\end{proposition}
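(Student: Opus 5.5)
The plan is to follow the same route as in \cref{circuitswithphase}: show that $\beta_1$ is smooth, isometric and free, and then invoke \cref{existenceRiemannianSubmersionMetrics} to get the smooth structure on $B_{\textup{1D}}$ and a metric $h_{\textup{1D}}$ making the projection a Riemannian submersion. Properness, in case that result requires it, is automatic because $\textup{U}(d)^{\times 2(k-1)}$ is compact.

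Smoothness is immediate. Each component of $\beta_1$ is built from the maps $(X_i,X_j)\mapsto X_i\otimes X_j$, matrix multiplication and adjoints, all of which are polynomial in the real coordinates.

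For isometry, $g_{\textup{1D}}=g_{\textup{bi}}^{\oplus 2k-1}$ is a product metric, so it suffices to check each factor. For fixed $X$, every component of $\beta_1(X,\cdot)$ is either left multiplication by a unitary of the form $\mathds{1}\otimes X_1$, $X_{2j}\otimes X_{2j+1}$ or $X_{2(k-1)}\otimes\mathds{1}$, or right multiplication by $X^\dagger_{2i-1}\otimes X^\dagger_{2i}$. These are isometries of $(\textup{U}(d^2),g_{\textup{bi}})$ by \cref{prop:isomunitaries}, since the metric is bi-invariant.

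Freeness is the only step with content, and I expect it to be easier than in the fixed-input case because the bottom tensors are now invertible. Suppose $XU=U$ and $VX^\dagger=V$.
\begin{itemize}
\item The top layer gives $X^\dagger_{2i-1}\otimes X^\dagger_{2i}=\mathds{1}$. As in \cref{almostfree}, this forces $X_{2i-1}=\lambda_i\mathds{1}$ and $X_{2i}=\lambda_i^{-1}\mathds{1}$ with $\lambda_i\in\textup{U}(1)$.
\item In the bottom layer each $U_j$ is invertible, so we may cancel it. This yields $\mathds{1}\otimes X_1=\mathds{1}$, $X_{2j}\otimes X_{2j+1}=\mathds{1}$ for the middle sites, and $X_{2(k-1)}\otimes\mathds{1}=\mathds{1}$.
\end{itemize}
The first bottom relation gives $X_1=\mathds{1}$, so $\lambda_1=1$ and $X_2=\mathds{1}$. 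The middle relation then gives $X_3=\mathds{1}$, so $\lambda_2=1$. Inducting along the chain shows every $X_i=\mathds{1}$, so the action is free.

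The only care needed is matching the (slightly irregular) index pairing between bottom and top gates in \cref{def:gauge1D}. Whatever that pairing is, the bottom equations alone already make every tensor product of gauge unitaries equal to the identity. Combined with the boundary conditions, this forces every scalar $\lambda_i$ to equal $1$.
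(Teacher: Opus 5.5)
Your proposal is correct and follows the route the paper intends; the paper omits this proof as analogous to \cref{circuitswithphase}. You show smoothness, show the action is isometric factorwise on the bi-invariant product metric via \cref{prop:isomunitaries}, show freeness, and then apply \cref{existenceRiemannianSubmersionMetrics}, with properness from compactness. The one adaptation is in the freeness step, where you cancel the invertible bottom unitaries directly instead of using that the bottom states are nonzero vectors; this is exactly the natural modification of the fixed-input argument.
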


\begin{proposition}
\label{circuitswithphasenoinputproj}
Let $N_{\textup{1D}}$ be endowed with its natural metric $\tilde g_{\textup{1D}}$. Consider the action $\beta_2$ of $\textup{PU}(d)^{\times 2(k-1)}$ on $N_{\textup{1D}}$ given in \cref{def:gauge1DP}. There is a smooth manifold structure and a Riemannian metric $h'_{\textup{1D}}$ on $B'_{\textup{1D}} := N_{\textup{1D}} / \textup{PU}(d)^{\times 2(k-1)}$ for which the projection $(N_{\textup{1D}}, \tilde g_{\textup{1D}}) \to (B'_{\textup{1D}}, h'_{\textup{1D}})$ is a Riemannian submersion. 
\end{proposition}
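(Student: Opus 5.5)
The argument follows the proof of \cref{thm:action1Dcircuitsfixedinputphase}. By \cref{existenceRiemannianSubmersionMetrics}, it suffices to show that $\beta_2$ is a smooth, free and isometric action of the compact Lie group $\textup{PU}(d)^{\times 2(k-1)}$ on $(N_{\textup{1D}}, \tilde g_{\textup{1D}})$. That result then provides a smooth manifold structure on $B'_{\textup{1D}}$ and a metric $h'_{\textup{1D}}$ for which the projection is a Riemannian submersion.

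First I check that $\beta_2$ is well defined. If each $X_i$ is replaced by $\mu_i X_i$ with $\mu_i\in\textup{U}(1)$, and each $U_j,V_j$ by a phase multiple, then every component $(X_a\otimes X_b)U_j$ or $V_j(X_a^\dagger\otimes X_b^\dagger)$ changes only by a phase. Its class in $\textup{PU}(d^2)$ is therefore unchanged. Smoothness follows because $\beta_2$ is induced by the smooth map $\beta_1$ through the smooth quotient maps $\textup{U}(d)\to\textup{PU}(d)$ and $\textup{U}(d^2)\to\textup{PU}(d^2)$, which are submersions. For isometry, left and right multiplication by unitaries are isometries of the bi-invariant metric (\cref{prop:isomunitaries}), so $\beta_1$ is isometric for $g_{\textup{1D}}$. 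Since $\beta_2$ is the projected version of $\beta_1$, \cref{projectiveIsometry} shows that it is isometric for $g_{\textup{pu}}^{\oplus 2k-1}$.

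The main point is freeness. Suppose $([X_1],\dots,[X_{2(k-1)}])$ fixes a point $([U],[V])$. Looking at the top-layer components, $[V_i(X_{2i-1}^\dagger\otimes X_{2i}^\dagger)]=[V_i]$ for each $i$. Cancelling the invertible $V_i$ gives $X_{2i-1}\otimes X_{2i}=\lambda_i\mathds{1}_{d^2}$ for some $\lambda_i\in\textup{U}(1)$. A Kronecker product is a scalar multiple of the identity only if each factor is a scalar multiple of the identity. This can be seen by comparing blocks: $(X_{2i-1})_{ab}X_{2i}=\lambda_i\delta_{ab}\mathds{1}$, and some entry $(X_{2i-1})_{aa}$ is nonzero. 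Hence $[X_{2i-1}]=[X_{2i}]=[\mathds{1}]$ for every $i$, so the group element is the identity.

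Unlike the unfixed-phase case, the bottom layer is not needed for freeness. The one thing to watch is the indexing in \cref{def:gauge1DP}: I will make sure that the top layer pairs all $2(k-1)$ factors $X_1,\dots,X_{2(k-1)}$, with each appearing exactly once. Under that pairing every $[X_i]$ is forced to be trivial, and freeness holds.
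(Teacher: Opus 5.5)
Your proposal is correct and follows the route the paper intends. The paper omits this proof as analogous to \cref{circuitswithphase,thm:action1Dcircuitsfixedinputphase}: smoothness, isometry via \cref{projectiveIsometry} applied to the lifted action $\beta_1$, freeness from the top layer alone (since $[V_i(X_{2i-1}^\dagger\otimes X_{2i}^\dagger)]=[V_i]$ forces $[X_{2i-1}\otimes X_{2i}]=[\mathds{1}]$ and hence each factor is trivial), and then \cref{existenceRiemannianSubmersionMetrics}, with properness coming from compactness. Your extra checks (well-definedness, smoothness through the quotient submersions, the block argument for the Kronecker factors) are sound, and your indexing concern is settled directly by \cref{def:gauge1DP}, where $V_i$ is paired with $X_{2i-1},X_{2i}$ for $i=1,\dots,k-1$.
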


\subsection{Proof of Theorem \ref{thm:quotTN1D}}

To prove \cref{thm:quotTN1D} we can follow the same reasoning as in \cref{sec:proofThm1}. 

\begin{proof}[Proof of \cref{thm:quotTN1D}]
Again, note that the full-rank assumption made in \cref{theoremgaugenotfixed} holds in an open and dense subset of $M_{\textup{1D}}$. Using the gauge characterisation result shown in that theorem, along with \cref{circuitswithphasenoinput,circuitswithphasenoinputproj} we obtain two Riemannian submersions, the projections $(M_{\textup{1D}}, g_{\textup{1D}}) \to (B_{\textup{1D}},h_{\textup{1D}})$ and $(N_{\textup{1D}},\tilde g_{\textup{1D}}) \to (B'_{\textup{1D}}, h'_{\textup{1D}})$. Lastly, since the projection $(M_{\textup{1D}}, g_{\textup{1D}}) \to (N_{\textup{1D}},\tilde g_{\textup{1D}})$ is a Riemannian submersion by construction (cf. \cref{sec:manifolds}), we obtain a Riemannian submersion $(M_{\textup{1D}}, g_{\textup{1D}}) \to (B'_{\textup{1D}}, h'_{\textup{1D}})$ which allows us to conclude the proof. 
\end{proof}

\section{Depth-two, two-dimensional circuits with a fixed input}
\label{sec:twodimcircuitsfixed}

We now focus on the tensor network family of depth-two, two-dimensional quantum circuits on a $2k \times 2k$ square lattice with a fixed input,
\begin{equation}
\label{estado2D}
\input{TikzFigures/2D1-1.tikz}.
\end{equation}
We assume that the two layers of unitaries of the circuit---the bottom layer in blue and the top layer in pink---consist of possibly different gates $U_{(i, j)}, V_{(l, m)} \in \textup{U}(d^4)$ for $(i, j) \in \{1, \dotsc, k\}^2$ and $(l, m) \in \{1, \dotsc, k-1\}^2$. The (fixed) input of the circuit can be assumed to be $\ket{0}^{\otimes (2k)^2} \in (\bb{C}^d)^{(2k)^2}$. Therefore each tensor network 
of the family consists of two types of tensors:
\[
\input{TikzFigures/2D1-2.tikz}.
\]
and so we can see each tensor network of the family as a tuple
\[
(\Lambda_{(1, 1)}, \dotsc, \Lambda_{(k, k)}, V_{(1,1)}, \dotsc, V_{(k-1,k-1)}),
\]
where $\Lambda_{(i, j)} := U_{(i, j)}\ket{0}^{\otimes 4}$ is an element of $\bb{S}^{2d^4-1}$ for every $(i, j) \in \{1, \dotsc, k\}^2$ and $V_{(l, m)}$ is an element of $\textup{U}(d^4)$ for every $(l, m) \in \{1, \dotsc, k-1\}^2$. The tensor network manifold associated with this family is 
\[
M_{\textup{2Df}} := (\bb{S}^{2d^4-1})^{\times k^2} \times \textup{U}(d^4)^{\times (k-1)^2},
\]
and its \textit{natural metric} is $g_{\textup{2Df}} := g_{\textup{round}}^{\oplus k^2} \oplus g_{\textup{bi}}^{\oplus (k-1)^2}$. 

In this case, we can obtain a quotient tensor network manifold that characterises the output state of the circuit up to a phase. 
\begin{theorem}
\label{thm:quotTN2DFI}
Let $M_{\textup{2Df}}$ be the tensor network manifold associated with the tensor network family of two-dimensional, depth-two circuits with a fixed input on a $2k \times 2k$ square lattice. When endowed with its natural metric $g_{\textup{2Df}}$, it defines a quotient tensor network manifold $(B_{\textup{2Df}}, h_{\textup{2Df}})$ such that, for every $x, y$ on a open and dense subset of $M_{\textup{2Df}}$, it holds that
\[
\exists \lambda \in \textup{U}(1) \text{ s.t. } \mathds{TN}_x = \lambda \mathds{TN}_y \iff [x]_{B_{\textup{2Df}}} = [y]_{B_{\textup{2Df}}}.
\]
\end{theorem}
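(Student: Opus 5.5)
I will follow the three-step template used for \cref{thm:quotTN1DFI}: first characterise the gauge, then encode it as a smooth, free and isometric group action, and finally compose Riemannian submersions.

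\textbf{Step 1: gauge characterisation.} I will prove a 2D analogue of \cref{theoremgauge1}. Fix the generic set where every bottom state $\Lambda_{(i,j)}\in\bb{S}^{2d^4-1}$, viewed as a map from any one of its four qudits to the remaining three (or from any bipartition of its legs), is invertible on its image; this is the appropriate full-rank condition. It holds on an open and dense subset, by the same argument as \cref{rmk:generic1}.

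Assume $\mathds{TN}_x=e^{i\theta}\mathds{TN}_y$ and fix one top gate $V_{(l,m)}$. Multiply by $V_{(l,m)}^\dagger$ on that plaquette and by $U^\dagger$ on all other top plaquettes. Then use the full-rank assumption to invert the bottom states and trace out everything outside the four qudits of plaquette $(l,m)$. Each of those qudits belongs to a different bottom state, so the resulting operator factorises as a tensor product of four single-qudit operators $X^{(1)}\otimes X^{(2)}\otimes X^{(3)}\otimes X^{(4)}$.

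Since this product equals the unitary $V_{(l,m)}^\dagger U_{(l,m)}$ up to a phase, iterating \cref{productounitarias} shows that each factor can be rescaled to a unitary. Therefore $V_{(l,m)}\propto U_{(l,m)}\big(\bigotimes_{r=1}^4 X^{(r)}_{(l,m)}\big)^\dagger$. Substituting back, the bottom states satisfy $\Sigma_{(i,j)}\propto\big(\bigotimes X\big)\Lambda_{(i,j)}$, with the appropriate $X$'s on the legs shared with neighbouring plaquettes and identities on the boundary legs. I only need the statement up to a phase.

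\textbf{Step 2: group action.} Pass to $N_{\textup{2Df}}:=(\bb{CP}^{d^4-1})^{\times k^2}\times\textup{PU}(d^4)^{\times(k-1)^2}$ with metric $g_{\textup{FS}}^{\oplus k^2}\oplus g_{\textup{pu}}^{\oplus (k-1)^2}$. The projection from $(M_{\textup{2Df}},g_{\textup{2Df}})$ onto it is a Riemannian submersion, as in \cref{sec:manifolds}.

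On $N_{\textup{2Df}}$, define the action of $\textup{PU}(d)^{\times 4(k-1)^2}$ by $[V_{(l,m)}]\mapsto[V_{(l,m)}(\bigotimes_r X^{(r)}_{(l,m)})^\dagger]$. Each bottom state is acted on by the corresponding $X$'s on its legs that are contracted with top gates. This action is smooth, since it is built from matrix multiplication. It is isometric by \cref{prop:isomsphere,prop:isomunitaries} together with \cref{projectiveIsometry}.

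Freeness follows as in \cref{thm:action1Dcircuitsfixedinputphase}. If $[V(\bigotimes_r X^{(r)})^\dagger]=[V]$, then $[\bigotimes_r X^{(r)}]=[\mathds 1]$ in $\textup{PU}(d^4)$, which forces each $[X^{(r)}]=[\mathds 1]$. Then \cref{existenceRiemannianSubmersionMetrics} gives the quotient $B_{\textup{2Df}}$ with a metric $h_{\textup{2Df}}$ making $N_{\textup{2Df}}\to B_{\textup{2Df}}$ a Riemannian submersion. Composing with $M_{\textup{2Df}}\to N_{\textup{2Df}}$ yields the required surjective Riemannian submersion.

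\textbf{Step 3: the equivalence.} The direction ``$\Leftarrow$'' is immediate, because the gauge insertions cancel along contracted edges and phases only produce a global phase. The direction ``$\Rightarrow$'' is Step 1 restricted to the open dense set.

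The main obstacle is Step 1, specifically the tracing argument in 2D. Each bottom state touches up to four top plaquettes, so inverting bottom states and tracing out the complement requires care: the maps must be inverted on the correct bipartitions, and the resulting operator on the plaquette must genuinely factorise into four single-qudit pieces. My first attempt is to treat each bottom $\Lambda_{(i,j)}$ as a map from one leg to the other three, applied successively leg by leg. Tracing out the other legs then reduces each leg to a one-dimensional argument identical to the one in \cref{theoremgauge1}.
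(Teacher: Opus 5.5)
There is a genuine gap in Step 1, at exactly the point you flag as the main obstacle.

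In the 1D proof, multiplying by $V_i^\dagger$ at site $i$ and by $U_j^\dagger$ elsewhere leaves the right-hand side with no gate at site $i$. It is therefore manifestly a product across the cut between the two legs of that gate, and this product structure is what the inversion-and-trace step exploits. In 2D, replacing only the gate at plaquette $(l,m)$ by the identity disconnects nothing. For an interior plaquette the right-hand side still contains, for example, $U_{(l-1,m)}^\dagger V_{(l-1,m)}$, which acts jointly on a qudit of $\Sigma_{(l,m)}$ and a qudit of $\Sigma_{(l,m+1)}$; the other neighbouring plaquettes couple the remaining blocks in the same way. So the right-hand side is not a product over the four bottom blocks meeting at $(l,m)$. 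The fact that the four legs lie in different bottom states does not make the extracted operator factorise. Inverting the bottom states leg by leg does not help either, since the coupling comes from the remaining top gates, not from the bottom states.

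If ``tracing out'' instead means taking reduced density matrices, you do get a product on the right, because the other gates act only on the complement. But the resulting identity $G(\rho_1\otimes\rho_2\otimes\rho_3\otimes\rho_4)G^\dagger=\sigma_1\otimes\sigma_2\otimes\sigma_3\otimes\sigma_4$, with $G=V_{(l,m)}^\dagger U_{(l,m)}$, does not force $G$ to be a product. Any controlled-phase gate diagonal in the eigenbases of the $\rho_r$ satisfies it. Your argument does work for $k=2$, where there is only one top gate.

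The missing idea is to remove a whole column (or row) of top gates at once, as the paper does.
\begin{enumerate}
\item Fix a cut between two columns of bottom states. Multiply by the adjoints of the $y$-gates along the cut and of the $x$-gates everywhere else. The right-hand side is then a product across the cut.
\item Use the full-rank hypothesis of \cref{rmk:fullrankrmk}: each adjacent $\Lambda_{(i,j)}$ is surjective from its legs away from the cut onto its legs acted on by the cut column. This shows that the column operator $\bigotimes_p V_p^\dagger U_p$ maps product states to product states. By \cref{productstoproducts} it equals $P_\pi(A\otimes B)$.
\item Full rank excludes $P_\pi=\mathrm{SWAP}$. A second application of \cref{productstoproducts} in the vertical direction splits the column operator plaquette by plaquette into $A_I\otimes B_I$.
\item Repeat with row cuts. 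Each $V_{(l,m)}^\dagger U_{(l,m)}$ is then a product across both cuts, i.e.\ a fourfold tensor product, and \cref{productounitarias} makes the factors unitary.
\end{enumerate}
Your full-rank condition should accordingly be the Schmidt-rank condition across these cut bipartitions.

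Steps 2 and 3 are fine and match the paper: the $\textup{PU}(d)^{\times 4(k-1)^2}$ action, its freeness and isometry, and the composition of the two Riemannian submersions.
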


As in the case of \cref{thm:quotTN1DFI,thm:quotTN1D}, the proof of this result will follow from the results in \cref{sec:characterising2Df,sec:quotTN2df} below. We will omit the details in this case. Also note that in this case we will not obtain a quotient manifold that characterises the output state of the circuit exactly. The reason for this will be made clearer in \cref{sec:quotTN2df}. 

\subsection{Characterising the gauge}
\label{sec:characterising2Df}

In this case, we characterise the gauge only up to a phase. 

\begin{theorem}
\label{simetria2D}
Consider the tensor network family of depth-two, two-dimensional quantum circuits on a $2k \times 2k$ square lattice with a fixed input, as in \cref{estado2D}. Consider two tensor networks of the family representing the states $\ket{U}$ and $\ket{V}$. Assume that the tensor network $(\Lambda_{(1, 1)}, \dotsc, \Lambda_{(k, k)}, V_{(1,1)}, \dotsc, V_{(k-1,k-1)}))$ that represents $\ket{U}$ is such that, given a cut between any two rows or columns of bottom states, every state $\Lambda_{(i, j)}$ adjacent to the cut is such that there exists an operator acting on the sites marked in green of $\Lambda_{(i, j)}$,
\begin{equation}
\label{figcolores}
\input{TikzFigures/2D9.tikz}    
\end{equation}
and such that it generates any element in the Hilbert space corresponding to the site(s) marked in blue in $\Lambda_{(i, j)}$. Assume that there exists some $\theta \in [0, 2\pi)$ such that $\ket{U} = e^{i\theta} \ket{V}$. Then, the tensors of both tensor networks are related up to a phase by the group action given by the multiplication by unitaries and their inverses over the contracting legs. 
\end{theorem}

Although we can adapt \cref{simetria2D} for the case when $\theta = 0$ to conclude that the tensors are related strictly and not just up to a phase by the gauge described, we will not be able to study this gauge freedom as a \textit{free} action on $M_{\textup{2Df}}$.  

Note that the full-rank assumption of \cref{simetria2D} is always fulfilled when considering a \textit{general} state. Indeed, the assumption can be rewritten in the following ways:
\begin{remark}
\label{rmk:fullrankrmk}
Consider a cut between any two rows or columns of bottom states of the lattice, as in \cref{figcolores}. Let $\Lambda_{(i, j)}$ be any state adjacent to the cut, and let $\mathcal{H}_A$ and $\mathcal{H}_B$ be the Hilbert spaces corresponding to the sites of $\Lambda_{(i, j)}$ marked in green and blue in \cref{figcolores}, respectively. Seeing $\Lambda_{(i, j)}$ as a state in $\mathcal{H}_A \otimes \mathcal{H}_B$, the assumption above is equivalent to assuming that for every $\ket{\psi} \in \mathcal{H}_B$ there exists some $\bra{\phi} \in \mathcal{H}_A^*$ such that $\bra{\phi}\Lambda_{(i, j)} = \ket{\psi}$, which in turn is equivalent to assuming that $\Lambda_{(i, j)}$, when seen as a map from $\mathcal{H}_A$ to $\mathcal{H}_B$, is surjective. 

This assumption can also be described in terms of the Schmidt rank of the bottom states; we are assuming that the bottom states have full Schmidt rank when seen as states in $\mathcal{H}_A \otimes \mathcal{H}_B$. 
\end{remark}

Before proving \cref{simetria2D}, let us first state an auxiliary result. 
\begin{lemma}[{\cite[Lemma 7]{perez2010characterizing}}]
\label{productstoproducts}
Let $Y$ be an invertible matrix acting on $n$ copies of a Hilbert space, and such that it takes products to products. Then, $Y$ is of the form $P_\pi (Y_1 \otimes \dotsm \otimes Y_n)$, where $P_\pi$ is a permutation of the Hilbert spaces. 
\end{lemma}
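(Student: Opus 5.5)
The plan is to prove the case $n=2$ in detail and then obtain general $n$ by induction, grouping the last $n-1$ factors. Throughout, each copy of the Hilbert space is identified with $\bb{C}^D$, and the whole argument is projective: we only track product vectors up to scalars.

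\emph{Step 1 (subspaces of product vectors).} The key lemma is the following. If $u\otimes v$, $u'\otimes v'$ and $u\otimes v+u'\otimes v'$ are all product vectors, then $u\parallel u'$ or $v\parallel v'$. This follows from a rank computation: a sum of two rank-one matrices has rank $\le 1$ only in this situation. From it, one deduces that every linear subspace $W\subset\bb{C}^D\otimes\bb{C}^D$ consisting entirely of product vectors is contained in $x\otimes\bb{C}^D$ or in $\bb{C}^D\otimes y$ for some fixed vector. To see this, take two non-parallel elements of $W$; they share a left or a right factor. A third element sharing neither with them contradicts the lemma once we add suitable combinations.

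\emph{Step 2 (the blocks have a common type).} Fix $a\neq 0$. The space $Y(a\otimes\bb{C}^D)$ has dimension $D$, since $Y$ is invertible, and it consists of product vectors. By Step 1 it therefore equals $x(a)\otimes\bb{C}^D$ or $\bb{C}^D\otimes y(a)$; it is a ``left block'' or a ``right block''. We show all $a$ give the same type. If $a,a'$ are non-parallel, then $a\otimes\bb{C}^D$ and $a'\otimes\bb{C}^D$ intersect trivially, and by injectivity so do their images. A left block and a right block always meet in the line spanned by $x\otimes y$. Hence the types coincide, and after possibly composing with the swap $P_\pi$ we may assume every block is a left block. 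The same argument applied to $Y(\bb{C}^D\otimes b)$ gives right blocks $\bb{C}^D\otimes z(b)$. These cannot also be left blocks, because then $Y(\bb{C}^D\otimes b)$ would meet all $D$-dimensional left blocks trivially, which is impossible by a dimension count. Consequently $Y(a\otimes b)\propto x(a)\otimes z(b)$.

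\emph{Step 3 (linearity and assembly).} Fix a vector $b_0$ and a functional $\phi$ with $\phi(z(b_0))=1$. Define $Y_1(a):=(\mathds{1}\otimes\phi)\,Y(a\otimes b_0)$. This map is linear in $a$ and satisfies $Y_1(a)\propto x(a)$. Define $Y_2$ analogously. Then $Y(a\otimes b)=c(a,b)\,Y_1(a)\otimes Y_2(b)$ for a scalar $c(a,b)$. Bilinearity of both sides, applied to sums $a+a'$ and $b+b'$ with non-parallel images, forces $c$ to be constant; this constant is absorbed into $Y_1$. Since product vectors span, $Y=P_\pi(Y_1\otimes Y_2)$. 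Invertibility of $Y$ makes $Y_1$ and $Y_2$ invertible.

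For general $n$, Step 1 generalises as follows: a subspace of $n$-fold product vectors has all factors fixed except one. The block $Y(a\otimes(\bb{C}^D)^{\otimes n-1})$ is then handled by applying the $n-1$ case to the induced map on the remaining factors. I expect Step 2 to be the main obstacle, in particular the consistency of types and the bookkeeping of which factor is free in the $n$-fold case. Step 1 itself is a routine rank argument.
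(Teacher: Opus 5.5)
Your $n=2$ argument is sound. The paper gives no proof of this lemma, only the citation, and every application in the paper is of this case, with $P_\pi\in\{\mathds{1},\mathrm{SWAP}\}$. One small slip: $Y(\mathbb{C}^D\otimes b)$ cannot be a left block, but not because of a dimension count. The reason is that it must meet each left block $Y(a\otimes\mathbb{C}^D)$ in exactly the line through $Y(a\otimes b)$, whereas two left blocks either coincide or meet trivially.

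The genuine gap is the passage to $n\ge 3$, which is what the statement actually asserts. Your generalised Step~1 only controls subspaces consisting entirely of product vectors. The block $a\otimes(\mathbb{C}^D)^{\otimes n-1}$ is not such a subspace, since it contains vectors entangled across the last $n-1$ factors. So nothing you have shown gives $Y(a\otimes(\mathbb{C}^D)^{\otimes n-1})=x(a)\otimes(\text{all other factors})$ with $x(a)$ in one fixed slot. Without this, there is no invertible, product-preserving ``induced map on the remaining factors'' to feed into the $(n-1)$-case.

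Grouping the last $n-1$ factors into a single space does not rescue this. $Y$ is only known to send fully product vectors to fully product vectors, not to send $a\otimes\Psi$ with $\Psi$ entangled to a bipartite product. Nor does the type-consistency mechanism of Step~2 survive. For $n\ge 3$, two one-slot lines in different slots through points $q,q'$ meet only if $q$ and $q'$ agree in all remaining slots. So disjointness of images no longer forces a common type.

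The missing idea is to work directly with one-slot lines, avoiding induction altogether.

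\emph{Setup.} For a product vector $p=a_1\otimes\dotsm\otimes a_n$, let $L_i(p)$ be the line obtained by letting slot $i$ range over $\mathbb{C}^D$.

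\emph{Each line maps to a line.} Your generalised Step~1 plus a dimension count gives $Y(L_i(p))=L_{\sigma_p(i)}(Yp)$. Here $\sigma_p$ is a permutation: distinct lines through $p$ go to distinct lines through $Yp$, and through a point there is only one line per slot.

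\emph{The permutation is constant.} Show $\sigma_p=\sigma_{p'}$ whenever $p'\in L_k(p)$. Clearly $\sigma_p(k)=\sigma_{p'}(k)$, since $L_k(p)=L_k(p')$. For $i\neq k$, complete $p,p'$ to a grid $p,p',q,r$ by also changing slot $i$. If $\sigma_p(i)\neq\sigma_{p'}(i)$, then $Yq$ and $Yr$ would differ in the two distinct slots $\sigma_p(i)$ and $\sigma_p(k)$. But they lie on the common line $Y(L_k(q))$, a contradiction. Any two product vectors are joined by one-slot moves, so $\sigma$ is constant.

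\emph{Finishing.} After composing with a suitable permutation of the factors, the $j$-th factor of $Y(a_1\otimes\dotsm\otimes a_n)$ depends only on $[a_j]$. Your Step~3 (contraction with functionals, then multilinearity forcing the scalar to be constant) then goes through verbatim for all $n$.
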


Let us now prove \cref{simetria2D}. For readability, all the figures used below correspond to an $8 \times 8$ system. Nevertheless, note that the proof is valid for any system size. 

\begin{proof}[Proof of \cref{simetria2D}]
Consider two tensor networks representing the same state up to a phase, 
\begin{equation}
\label{eq:initialequality2D}
\input{TikzFigures/2D2.tikz}.
\end{equation}
Although the states on the bottom layer of the circuits have not been labelled for readability, they could be pairwise distinct.

Let us fix a cut between two columns of bottom states---represented with a dotted red line in \cref{eq2dcolumna}. Multiplying by $V_i^\dagger$ on both sides of the equality for every $i$ not in the cut, and by $\tilde V_i^\dagger$ for every $i$ along the cut we conclude that
\begin{equation}
\label{eq2dcolumna}
\input{TikzFigures/2D3.tikz}.    
\end{equation}
Recall that, by assumption, we can generate any input product state (with respect to the cut) on which the top layer on the left-hand side of \cref{eq2dcolumna} acts, by acting on the states shown below in green,
\[
\input{TikzFigures/2D4.tikz}.
\]
Also, from \cref{eq2dcolumna}, we know that any output state obtained by acting on the states in green will be proportional to a product state of the form 
\[
\input{TikzFigures/2D4bis.tikz},
\]
where the green elements in the above equation represent unitary gates. This allows us to conclude that the top layer on the left-hand side of \cref{eq2dcolumna} maps product states to product states, and so by \cref{productstoproducts} we can conclude that 
\[
\input{TikzFigures/2D5.tikz},
\]
where $P_\pi$ is either the SWAP operator or the identity, an the two vertical rectangles on the right-hand side represent two unitary matrices. Moreover, since every bottom state on the left-hand side of \cref{eq2dcolumna} adjacent to the cut is \textit{full-rank} in the sense \cref{rmk:fullrankrmk}, it follows that
\[
\input{TikzFigures/2D5-2.tikz},
\]
and so $P_\pi$ cannot be the SWAP operator. This allows us to conclude that
\begin{equation}
\label{eq:aux1}
\input{TikzFigures/2D5-3.tikz}.    
\end{equation}
Now, since the left-hand side of this equation is a product operator in the vertical direction as well, we can apply \cref{productstoproducts} to conclude that the right-hand side of \cref{eq:aux1} is also a product of unitaries in the vertical direction, and so 
\[
\input{TikzFigures/2D6.tikz}.
\]

Following the same reasoning considering cuts between any two rows or columns of the lattice, we can conclude that for every $I \in \{1, \dots, k-1\}^2$ there exist some unitary matrices $A_I, B_I, C_I, D_I \in \textup{U}(d^2)$ such that
\[
\input{TikzFigures/2D7.tikz},
\]
This implies that for every $I \in \{1, \dots, k-1\}^2$ there exist some unitary matrices $X_I, Y_I, Z_I, W_I \in \textup{U}(d)$ such that 
\[
\input{TikzFigures/2D8.tikz}.
\]
Indeed, we can assume that the matrices are unitary by \cref{productounitarias}. Recall that the above equality is only up to a phase, that is, for every $(j, l) \in \{1, \dotsc, k-1\}^2$ there exist some $\varphi_{(j, l)} \in [0, 2\pi)$ and $X_{(j, l)}, Y_{(j, l)}, W_{(j, l)}, Z_{(j, l)} \in \textup{U}(d)$ such that 
\[
V_{(j, l)} = e^{i\varphi_{(j, l)}}\tilde{V}_{(j, l)} (W_{(j, l)} \otimes X_{(j, l)} \otimes Y_{(j, l)} \otimes Z_{(j, l)}).
\]
Inserting this expression into \cref{eq:initialequality2D} we can conclude that for every $(j, l) \in \{1, \dotsc, k\}^2$, there exists some phase $\phi_{(j,l)} \in [0, 2\pi)$ such that the bottom state at the $(j,l)$-th site of $\ket{U}$, is related to that of $\ket{V}$, as
\[
\Sigma_{(j, l)} =  e^{i\phi_{(j,l)}}(Z_{(j-1, l-1)} \otimes Y_{(j-1, l)} \otimes X_{(j, l-1)} \otimes W_{(j, l)}) \Lambda_{(j, l)},
\]
where by definition 
\begin{align*}
Z_{(0, l)} = Z_{(l, 0)} = Y_{(0, l)} = X_{(l, 0)} := \mathds{1},\\
Y_{(l, k)} = X_{(k, l)} = W_{(k, l)} = W_{(l, k)} := \mathds{1},
\end{align*}
for every $l \in \{1, \dotsc, k\}$. 
\end{proof}

\subsection{Quotient tensor network manifold structure}
\label{sec:quotTN2df}

As we saw earlier, the tensor network manifold associated with the tensor network family of two-dimensional, depth-two quantum circuits on a $2k \times 2k$ lattice with a fixed input is 
\[
M_{\textup{2Df}} = (\bb{S}^{2d^4-1})^{\times k^2} \times \textup{U}(d^4)^{\times (k-1)^2},
\]
where each qudit is assumed to be a unit vector in $\bb{C}^d$. Since we are only interested in the description of the circuit up to a global phase, we define the manifold 
\[
N_{\textup{2Df}} := (\bb{CP}^{d^4-1})^{\times k^2} \times \textup{PU}(d^4)^{\times (k-1)^2},
\]
which we endow with its \textit{natural metric} $g_{\textup{2Df}} := g_{\textup{FS}}^{\oplus k^2} \oplus g_{\textup{pu}}^{\oplus (k-1)^2}$. Using this phase-agnostic description of the tensor networks, we may define the group action of $\textup{PU}(d)^{4(k-1)^2}$ on $N_{\textup{2Df}}$ that describes the gauge freedom.

\begin{definition}[Gauge action for 2D circuits with a fixed input, up to phase]
\label{def:gauge2DFI}
We define the action of $\textup{PU}(d)^{4(k-1)^2}$ on $N_{\textup{2Df}}$ as 
\begin{align}
\textup{PU}(d)^{4(k-1)^2} \times N_{\textup{2Df}} \to N_{\textup{2Df}},\quad (X, (\Lambda, V)) \mapsto (X\Lambda, VX^\dagger), 
\end{align}
where $X \in \textup{PU}(d)^{4(k-1)^2}$, $\Lambda \in (\bb{CP}^{d^4-1})^{\times k^2}$ and $V \in \textup{PU}(d^4)^{\times (k-1)^2}$. Moreover,
\begin{align*}
X &:= ([W_{(1,1)}], \dotsc, [W_{(k-1,k-1)}], [X_{(1,1)}], \dotsc, [X_{(k-1,k-1)}],
\\&\quad [Y_{(1,1)}], \dotsc, [Y_{(k-1,k-1)}], [Z_{(1,1)}], \dotsc, [Z_{(k-1,k-1)}]),\\
\Lambda &:= ([\Lambda_{(1,1)}], \dotsc, [\Lambda_{(k,k)}]),\\
V &:= ([V_{(1,1)}], \dotsc, [V_{(k-1, k-1)}]),
\end{align*}
and 
\begin{align*}
X\Lambda &:= ([(\mathds{1} \otimes \mathds{1} \otimes \mathds{1} \otimes W_{(1, 1)}) \Lambda_{(1, 1)}], \dotsc, 
\\&\quad [(Z_{(l-1, m-1)} \otimes Y_{(l-1, m)} \otimes X_{(l, m-1)} \otimes W_{(l, m)}) \Lambda_{(l, m)}], \dotsc, 
\\&\quad [(Z_{(k-1, k-1)} \otimes \mathds{1} \otimes \mathds{1} \otimes \mathds{1}) \Lambda_{(k, k)}]),\\
VX^\dagger &:= ([V_{(1,1)} (W^\dagger_{(1,1)} \otimes X^\dagger_{(1,1)} \otimes Y^\dagger_{(1,1)} \otimes Z^\dagger_{(1,1)})], \dotsc,
\\&\quad [V_{k-1}(W^\dagger_{(k-1,k-1)} \otimes X^\dagger_{(k-1,k-1)} \otimes Y^\dagger_{(k-1,k-1)} \otimes Z^\dagger_{(k-1,k-1)})]),
\end{align*}
with
\begin{align*}
Z_{(0, l)} = Z_{(l, 0)} = Y_{(0, l)} = X_{(l, 0)} := \mathds{1},\\
Y_{(l, k)} = X_{(k, l)} = W_{(k, l)} = W_{(l, k)} := \mathds{1},
\end{align*}
for every $l \in \{0, \dotsc, k\}$. 
\end{definition}

Using this action, we can construct a quotient tensor network manifold that characterises the output state of a two-dimensional quantum state with a fixed input up to a global phase. For a visual representation of the setting considered, see \cref{fig:diag3}. On the other hand, as we mentioned in the introduction of this section, the analogous of the above action on $M_{\textup{2Df}}$ fails to be free, and so the smooth manifold structure of its corresponding quotient space is not guaranteed. 
\begin{figure}
    \centering
    \input{TikzFigures/diag3.tikz}
    \caption{Visual representation of the setting considered in \cref{sec:quotTN2df}.}
    \label{fig:diag3}
\end{figure}

\begin{proposition}
Let $N_{\textup{2Df}}$ be endowed with its natural metric $\tilde g_{\textup{2Df}}$. Consider the action of $\textup{PU}(d)^{4(k-1)^2}$ on $N_{\textup{2Df}}$ given in \cref{def:gauge2DFI}. There is a smooth manifold structure and a Riemannian metric $h_{\textup{2Df}}$ on $B_{\textup{2Df}} := N_{\textup{2Df}}/\textup{PU}(d)^{4(k-1)^2}$ for which the projection map $(N_{\textup{2Df}}, \tilde g_{\textup{2Df}}) \to (B_{\textup{2Df}}, h_{\textup{2Df}})$ is a Riemannian submersion. 
\end{proposition}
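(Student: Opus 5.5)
The plan is to follow the template of \cref{circuitswithphase,thm:action1Dcircuitsfixedinputphase}. By \cref{existenceRiemannianSubmersionMetrics} it suffices to show that the action of $\textup{PU}(d)^{4(k-1)^2}$ on $N_{\textup{2Df}}$ from \cref{def:gauge2DFI} is smooth, isometric and free. Since $\textup{PU}(d)^{4(k-1)^2}$ is compact, the action is automatically proper, so the quotient $B_{\textup{2Df}}$ inherits a smooth manifold structure. It also carries a unique metric $h_{\textup{2Df}}$ making the projection a Riemannian submersion.

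\textbf{Smoothness and isometry.} First I check that the action is well defined on classes. Replacing $X_I$ by $\mu X_I$ with $\mu\in\textup{U}(1)$ only changes each component by a phase, so the classes in $\bb{CP}^{d^4-1}$ and $\textup{PU}(d^4)$ are unchanged. The action is the projection of the corresponding action of $\textup{U}(d)^{4(k-1)^2}$ on $M_{\textup{2Df}}$. That lifted action is smooth, being built from matrix multiplication and tensor products. It is isometric for $g_{\textup{round}}^{\oplus k^2}\oplus g_{\textup{bi}}^{\oplus (k-1)^2}$ by \cref{prop:isomsphere,prop:isomunitaries}, because each component is acted on by left or right multiplication by a unitary (tensor products of unitaries are unitary). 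By \cref{projectiveIsometry}, the projected action is then smooth and isometric for the natural metric $\tilde g_{\textup{2Df}}=g_{\textup{FS}}^{\oplus k^2}\oplus g_{\textup{pu}}^{\oplus (k-1)^2}$.

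\textbf{Freeness.} This is the only step with real content, and it uses only the $\textup{PU}(d^4)$ components. Suppose $[V_{I}(W_I^\dagger\otimes X_I^\dagger\otimes Y_I^\dagger\otimes Z_I^\dagger)]=[V_I]$ for every $I\in\{1,\dots,k-1\}^2$. Cancelling $V_I$ gives $W_I\otimes X_I\otimes Y_I\otimes Z_I=\lambda_I\mathds{1}$ for some $\lambda_I\in\textup{U}(1)$. A tensor product $A\otimes B$ equal to a multiple of the identity, with $A$ and $B$ nonzero, forces both factors to be scalar multiples of the identity. To see this, compare matrix blocks: each block $A_{ij}B$ must be $\lambda\delta_{ij}\mathds{1}$, so $A$ is diagonal, $B$ is scalar, and then $A$ is scalar. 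Applying this iteratively to the four factors gives $[W_I]=[X_I]=[Y_I]=[Z_I]=[\mathds{1}]$ in $\textup{PU}(d)$. Hence every group element fixing a point is the identity, and the action is free.

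The one point needing care is that the phase bookkeeping works only projectively. In the unprojected setting on $M_{\textup{2Df}}$, the stabiliser contains the tuples of phases whose product on each $V_I$ is trivial. These are not all absorbed by the bottom states, since the bottom states have four legs and the constraints do not close up as in the 1D chain. This is why we work on $N_{\textup{2Df}}$, where such phases are already quotiented out. The projective argument above then avoids the issue entirely.
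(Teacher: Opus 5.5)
Your proposal is correct and follows the same route as the paper, which omits this proof as analogous to that of \cref{thm:action1Dcircuitsfixedinputphase}. In both, smoothness and isometry come from projecting the unitary action via \cref{prop:isomsphere,prop:isomunitaries,projectiveIsometry}, properness from compactness, and freeness from the $\textup{PU}(d^4)$ components alone, after which \cref{existenceRiemannianSubmersionMetrics} applies; your block-matrix argument that $W_I\otimes X_I\otimes Y_I\otimes Z_I\in\textup{U}(1)\mathds{1}$ forces each factor to be scalar just makes explicit a step the paper leaves implicit. One minor correction: smoothness of the induced action follows from the characteristic property of the surjective submersion $\textup{U}(d)^{4(k-1)^2}\times M_{\textup{2Df}}\to\textup{PU}(d)^{4(k-1)^2}\times N_{\textup{2Df}}$, not from \cref{projectiveIsometry}.
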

We omit the proof, as it is analogous to that of \cref{thm:action1Dcircuitsfixedinputphase}.

\section{Depth-two, two-dimensional circuits with no fixed input}

The results obtained in the previous section can be easily adapted for the tensor network family of depth-two, two-dimensional circuits on a $2k \times 2k$ lattice with no fixed input. Since the adaptations are straightforward, we only present the main result and omit the proof. The tensor network manifold associated with this family is
\[
M_{\textup{2D}} := \textup{U}(d^4)^{k^2 + (k-1)^2},
\]
which we endow with its natural metric $g_{\textup{2D}} := g_{\textup{bi}}^{\oplus k^2 + (k-1)^2}$. 

\begin{theorem}
\label{thm:quotTN2D}
Let $M_{\textup{2D}}$ be the tensor network manifold associated with the tensor network family of two-dimensional, depth-two circuits on a $2k \times 2k$ lattice. When endowed with its natural metric $g_{\textup{2D}}$, it defines a quotient tensor network manifold $(B_{\textup{2D}}, h_{\textup{2D}})$ such that, for every $x, y$ on a open and dense subset of $M_{\textup{2D}}$, it holds that
\[
\exists \lambda \in \textup{U}(1) \text{ s.t. } \mathds{TN}_x = \lambda \mathds{TN}_y \iff [x]_{B_{\textup{2D}}} = [y]_{B_{\textup{2D}}}.
\]
\end{theorem}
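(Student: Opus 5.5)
The plan is to follow the same template as the proofs of \cref{thm:quotTN1D} and \cref{thm:quotTN2DFI}: first characterise the gauge up to a phase, then realise it as a smooth, free, isometric action on a projective manifold, and finally compose two Riemannian submersions.

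\textbf{Gauge characterisation.} I would reduce to \cref{simetria2D}, exactly as \cref{theoremgaugenotfixed} reduces to \cref{theoremgauge1}. Take two points $x=(U_{(i,j)},V_{(l,m)})$ and $y=(\tilde U_{(i,j)},\tilde V_{(l,m)})$ of $M_{\textup{2D}}$ with $\mathds{TN}_x=e^{i\theta}\mathds{TN}_y$, and apply both circuits to a product input $\ket{\psi}^{\otimes(2k)^2}$. The states $\Lambda_{(i,j)}:=U_{(i,j)}\ket{\psi}^{\otimes 4}$ then satisfy the full Schmidt rank hypothesis of \cref{rmk:fullrankrmk}. This hypothesis is a nonvanishing-determinant condition, so if one $x$ satisfies it (for instance gates producing maximally entangled states across every relevant cut), the set of $x$ satisfying it is open and dense in the connected manifold $M_{\textup{2D}}$.

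\cref{simetria2D} gives $V_{(j,l)}=e^{i\varphi_{(j,l)}}\tilde V_{(j,l)}(W\otimes X\otimes Y\otimes Z)_{(j,l)}$, and this relation does not depend on the input. Substituting it back into the circuit equality and cancelling the top layer leaves a product operator equality for the bottom layer. This gives $\tilde U_{(j,l)}\propto(Z\otimes Y\otimes X\otimes W)U_{(j,l)}$ with the same boundary conventions as in \cref{def:gauge2DFI}, mirroring \cref{eq:equalitybottomlayer1D}. The converse direction is immediate, since gauge-related tuples define the same circuit up to phase.

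\textbf{Quotient structure.} Set $N_{\textup{2D}}:=\textup{PU}(d^4)^{\times k^2+(k-1)^2}$ with the metric $g_{\textup{pu}}^{\oplus}$. The projection $(M_{\textup{2D}},g_{\textup{2D}})\to N_{\textup{2D}}$ is a Riemannian submersion by \cref{sec:manifolds}. Define the action of $\textup{PU}(d)^{4(k-1)^2}$ on $N_{\textup{2D}}$ by the formulas of \cref{def:gauge2DFI}, replacing $[\Lambda_{(l,m)}]$ by $[U_{(l,m)}]$. The three required properties are checked as follows.
\begin{itemize}
\item Smoothness holds because the action is built from matrix multiplication.
\item Isometry follows from \cref{projectiveIsometry} together with bi-invariance.
\item Freeness follows from the top layer alone, as in \cref{thm:action1Dcircuitsfixedinputphase}. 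If $[V(W^\dagger\otimes X^\dagger\otimes Y^\dagger\otimes Z^\dagger)]=[V]$, then $W\otimes X\otimes Y\otimes Z\propto\mathds 1$. Since $\textup{U}(d)$ is a group, \cref{productounitarias} is not needed here: each factor is scalar, so every component is trivial in $\textup{PU}(d)$.
\end{itemize}
\cref{existenceRiemannianSubmersionMetrics} then gives $(B_{\textup{2D}},h_{\textup{2D}})$ with a Riemannian submersion from $N_{\textup{2D}}$. Composing it with $M_{\textup{2D}}\to N_{\textup{2D}}$ yields the required surjective Riemannian submersion.

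\textbf{Main obstacle.} The delicate step is the bottom-layer identification. Knowing the top gates agree only up to the local unitaries, one must show that each bottom gate is determined up to phase as a full operator, not merely on one input vector. I would handle this by observing that, after cancelling the top layer, the equality reads $\bigotimes_{(j,l)}\tilde U_{(j,l)}\propto\bigotimes_{(j,l)}(\text{local unitaries})\,U_{(j,l)}$. This is an equality of tensor products of invertible operators on disjoint sites, so the factors agree up to scalars; the phases can be discarded because we work in $\textup{PU}$.
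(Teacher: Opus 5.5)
Your proposal is correct and follows the paper's approach. The paper omits this proof as a straightforward adaptation: it reduces the gauge characterisation to \cref{simetria2D} via a product input, just as \cref{theoremgaugenotfixed} reduces to \cref{theoremgauge1}. It then takes $B_{\textup{2D}} := N_{\textup{2D}}/\textup{PU}(d)^{4(k-1)^2}$ with $N_{\textup{2D}} = \textup{PU}(d^4)^{k^2+(k-1)^2}$ and composes the two Riemannian submersions. The only thing to add is that properness, which \cref{existenceRiemannianSubmersionMetrics} also requires, is automatic because $\textup{PU}(d)^{4(k-1)^2}$ is compact.
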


In order to define the quotient tensor network manifold $B_{\textup{2D}}$, we first define the manifold
\[
N_{\textup{2D}} := \textup{PU}(d^4)^{k^2 + (k-1)^2}, 
\]
endowed with its \textit{natural metric} $g_{\textup{2D}} := g_{\textup{pu}}^{\oplus k^2 + (k-1)^2}$, which allows us to describe a tensor network of the family up to a phase. Then, we consider the action of $\textup{PU}(d)^{4(k-1)^2}$ on $N_{\textup{2D}}$ that describes the gauge freedom, and define $B_{\textup{2D}}$ as $N_{\textup{2D}} / \textup{PU}(d)^{4(k-1)^2}$. See \cref{fig:diag4} for a visual representation of the setting considered. 
\begin{figure}
    \centering
    \input{TikzFigures/diag4.tikz}
    \caption{Visual representation of the setting considered in \cref{thm:quotTN2D}}
    \label{fig:diag4}
\end{figure}

\section{Matrix Product States}
\label{sec:sectionMPS}

Let us now consider the tensor network family of length-$k$ injective MPSs in canonical form with bond dimension $D$ and physical dimension $d$. For our purpose, we consider open boundary conditions, and we promote the boundary legs of the MPSs to physical legs. Whilst this assumption is needed in our derivation, its effect on the bulk of the MPSs should be negligible when considering sufficiently large values of $k$. Every tensor network of the family is of the form
\[
\input{TikzFigures/mpsstate.tikz},
\]
where each tensor $U_i$ is a linear isometry, i.e. a $dD \times D$ matrix such that $U_i^\dagger U_i = \mathds{1}_D$, and $\Lambda$ is some unit vector in $\bb{C}^{D^2}$, i.e. $\Lambda \in \bb{S}^{2D^2-1}$. Thus, the tensor network manifold associated with this family is
\[
M_{\textup{mps}} := \bb{S}^{2D^2-1} \times \textup{V}_D(\bb{C}^{dD})^{\times k},
\]
which we endow with its \textit{natural metric} $g_{\textup{mps}} := g_{\textup{round}} \oplus g_{\textup{st}}^{\oplus k}$ (cf. \cref{sec:manifolds}).

In this setting, we can obtain a quotient tensor network manifold what allows us to characterise the state represented by an MPS uniquely and a second quotient tensor network manifold that allows us to do the same up to a phase. As we mentioned in the introduction, the results obtained in this section can be understood as a reformulation of the results obtained in \cite[Section III]{haegeman2014geometry} in the setting of real manifolds rather than complex manifolds, and considering open boundary conditions. 

\begin{theorem}
\label{thm:quotTNMPS}
Let $M_{\textup{mps}}$ be the tensor network manifold associated with the tensor network family of injective MPS of length $k$ with bond dimension $D$ and physical dimension $d$ in canonical form. When endowed with its natural metric $g_{\textup{mps}}$, it defines two quotient tensor network manifolds, $(B_{\textup{mps}}, h_{\textup{mps}})$ and $(B'_{\textup{mps}},  h'_{\textup{mps}})$ such that, for every $x, y$ on a open and dense subset of $M_{\textup{mps}}$, it holds that
\[
\mathds{TN}_x = \mathds{TN}_y \iff [x]_{B_{\textup{mps}}} = [y]_{B_{\textup{mps}}}.
\]
and
\[
\exists \lambda \in \textup{U}(1) \text{ s.t. } \mathds{TN}_x = \lambda \mathds{TN}_y \iff [x]_{B'_{\textup{mps}}} = [y]_{B'_{\textup{mps}}}.
\]
\end{theorem}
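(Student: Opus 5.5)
I will follow the same three-step scheme as for \cref{thm:quotTN1DFI}: first characterise the gauge, then encode it as a group action, and finally show that this action is smooth, free and isometric so that \cref{existenceRiemannianSubmersionMetrics} applies.

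\textbf{Step 1: the gauge.} I will prove an MPS analogue of \cref{theoremgauge1}. Suppose two points $x=(\Lambda,U_1,\dots,U_k)$ and $y=(\Sigma,V_1,\dots,V_k)$ of $M_{\textup{mps}}$ satisfy $\mathds{TN}_x=e^{i\theta}\mathds{TN}_y$. The genericity (injectivity) assumption is that $\Lambda$ is full rank as a $D\times D$ matrix; this holds on an open and dense set, just as in \cref{rmk:generic1}.

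I will peel off the isometries one at a time, starting from the boundary. Each $U_i$ is an isometry, so $U_i^\dagger U_i=\mathds{1}_D$. Applying $U_i^\dagger$ to the physical and incoming bond legs collapses that site. Injectivity of the remaining network, together with invertibility of $\Lambda$, then forces $V_i^\dagger U_i$ to be unitary.

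Concretely, I will follow the standard canonical-form fundamental theorem: the $V_i$ span the same range as the $U_i$, up to a bond unitary. This gives
\[
V_i = (\mathds{1}_d\otimes X_{i-1}) U_i X_i^\dagger, \qquad \Sigma=(X_0\otimes X_k)\Lambda,
\]
up to phases, with $X_i\in\textup{U}(D)$ acting on the bond legs. The ordering of $X_0$ and $X_k$ is adapted to the figure. The phases can be absorbed into the $X_i$ exactly as in the proof of \cref{theoremgauge1}.

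This is the main obstacle. Rigorously justifying that $V_i^\dagger U_i$ is unitary requires care: I will use that the span of the boundary-to-site maps equals the whole bond space, by injectivity. If the direct approach fails, I will instead invoke the MPS fundamental theorem \cite{PerezGarcia2007} for injective MPS and check that, in canonical form, the gauge matrices are forced to be unitary because left-normalisation is preserved.

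\textbf{Step 2: the actions.} Next I will define the action of $\textup{U}(D)^{\times (k+1)}$ on $M_{\textup{mps}}$ by the formula above. I will also define its projected version by $\textup{PU}(D)^{\times (k+1)}$ on
\[
N_{\textup{mps}}:=\bb{CP}^{D^2-1}\times \textup{V}_D(\bb{C}^{dD})^{\times k}/\sim .
\]
It may be simpler to quotient only the $\Lambda$ factor by $\textup{U}(1)$, or to quotient the whole of $M_{\textup{mps}}$ by $\textup{U}(1)\times\textup{U}(D)^{\times(k+1)}/\text{(diagonal phases)}$. Some redundant phase must be removed in either case: a global scalar $X_i=\lambda\mathds{1}$ for all $i$ acts trivially on the $V_i$ but multiplies $\Lambda$ by $\lambda^2$. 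I will therefore count the group carefully, possibly fixing a phase constraint, so that the action is free.

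\textbf{Step 3: smooth, free and isometric.} The action is smooth because it is given by matrix multiplication. It is isometric because left and right multiplication by unitaries preserves both the round metric and the Stiefel metric; I will use the analogue of \cref{prop:isomsphere,prop:isomunitaries} for $g_{\textup{st}}$.

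Freeness comes from injectivity. Fix $X_{i-1}$ and $X_i$ with $(\mathds{1}\otimes X_{i-1})U_iX_i^\dagger=U_i$. Full column rank of $U_i$ and injectivity then propagate scalars along the chain, giving $X_i=\lambda\mathds{1}$ for all $i$. Finally, $\Sigma=\Lambda$ forces $\lambda^2=1$, and the quotient by the phase constraint chosen in Step 2 kills the residual $\pm$.

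Composing the two Riemannian submersions, exactly as in \cref{sec:proofThm1}, yields $B_{\textup{mps}}$ and $B'_{\textup{mps}}$.
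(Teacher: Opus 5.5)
Your overall scheme (gauge theorem, then a smooth, free, isometric action, then composing submersions) matches the paper's. The gap is that the action you write down is not the gauge of this family, and with it the theorem fails.

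In $M_{\textup{mps}}$, the outer leg of $\Lambda$ and the outgoing bond leg of the last isometry are boundary legs promoted to \emph{physical} legs. The network therefore has exactly $k$ contracted bonds: $\Lambda$--$U_1$ and $U_i$--$U_{i+1}$. A gauge transformation acts as an $X$, $X^\dagger$ pair on the two ends of one contracted bond, so the gauge group is $\textup{U}(D)^{\times k}$, not $\textup{U}(D)^{\times (k+1)}$. In your formulas, $X_0$ sits on a free leg of $\Lambda$ and on the free outgoing leg at the end of the chain, and is never paired with $X_0^\dagger$, so it is a physical rotation. Concretely, the reduced density matrix on the outer leg of $\Lambda$ is $\Lambda\Lambda^\dagger$, since all isometries contract to the identity. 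Your action conjugates this matrix by $X_0$. Hence your $B$ identifies genuinely different states, and $[x]_B=[y]_B\Rightarrow\mathds{TN}_x=\mathds{TN}_y$ fails. The same computation shows Step 1 cannot yield $\Sigma=(X_0\otimes X_k)\Lambda$: applying \cref{purificationsym} to that reduced density matrix gives $\Sigma\propto(\mathds{1}\otimes X)\Lambda$, with the unitary only on the contracted leg. The residual $\lambda^2$ stabiliser and the unspecified ``phase constraint'' meant to remove it are artefacts of the extra factor.

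The correct action (the paper's \cref{def:actionMPS}) is
\[
\Lambda\mapsto(\mathds{1}\otimes X_1)\Lambda,\qquad U_i\mapsto(\mathds{1}\otimes X_{i+1})\,U_i\,X_i^\dagger\ \ (i<k),\qquad U_k\mapsto U_kX_k^\dagger ,
\]
together with its projectivisation by $\textup{PU}(D)^{\times k}$ on $\bb{CP}^{D^2-1}\times\textup{PV}_D(\bb{C}^{dD})^{\times k}$. This action is free at \emph{every} point of $M_{\textup{mps}}$, which \cref{existenceRiemannianSubmersionMetrics} requires. The argument uses no injectivity: $U_kX_k^\dagger=U_k$ and $U_k^\dagger U_k=\mathds{1}$ force $X_k=\mathds{1}$; then $U_{k-1}X_{k-1}^\dagger=U_{k-1}$ forces $X_{k-1}=\mathds{1}$, and so on down the chain. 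The untouched boundary leg anchors this.

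Your injectivity-based freeness argument would only cover a generic set, and your $(k+1)$-factor action really does have stabilisers elsewhere. For example, if all $U_i=\ket{0}\otimes\mathds{1}$ and $\Lambda$ is maximally entangled, taking every $X_i$ equal to a fixed real orthogonal matrix fixes the point.

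Finally, full rank of $\Lambda$ alone is not a sufficient genericity hypothesis for Step 1. Already for $k=2$: if $U_1$ only reaches a proper subspace $S$ of its outgoing bond, you can change $U_2$ on $S^\perp$ without changing the state, and this is not a gauge transformation. You must also assume, as the paper does, that each isometry is surjective as a map from (incoming bond $\otimes$ physical) onto its outgoing bond. Your remark about spans covering the bond space points in this direction, but it has to be an explicit hypothesis. With both full-rank conditions, the gauge follows bottom-up: take partial traces, apply \cref{purificationsym}, and cancel $\Sigma$ and then each isometry in turn.
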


As in previous cases, the proof of this result follows from the next two subsections and we omit the details.

\subsection{Characterising the gauge}

Let us first characterise the gauge freedom of the tensor networks in this family. 
\begin{theorem}
\label{gaugeMPSthm}
Consider the tensor network family of injective MPS of length $k$ with bond dimension $D$ and physical dimension $d$ in canonical form. Consider two tensor networks of the family representing the states $\ket{U}$ and $\ket{V}$. Let $(\Lambda, U_1, \dotsc, U_n)$ denote the tensor network associated with $\ket{U}$, and assume that the isometries $U_i$ are surjective when seen as linear maps from the Hilbert spaces associated with their leftmost legs to the Hilbert space associated with their rightmost leg,
\begin{equation}
\label{eqcut}
\input{TikzFigures/mps2sitefig7.tikz}, 
\end{equation}
i.e. when seen as a linear map $\bb{C}^{D} \times \bb{C}^d \to \bb{C}^{D}$. Further assume that $\Lambda$ is full-rank when seen as a map from $\bb{C}^D$ to $\bb{C}^D$. Assume that there exists some $\theta \in [0, 2\pi)$ for which $\ket{U} = e^{i\theta} \ket{V}$. Then, the tensors of both tensor networks are related up to a phase by the multiplication by unitaries and their inverses over the contracting edges. If $\theta = 0$, the tensors are related, strictly and not up to a phase, by the same gauge. 
\end{theorem}

Before we prove this result, let us introduce an auxiliary lemma that relates two purifications of the same reduced density matrix (see \cite[Section 2.5]{nielsen2010quantum}).
\begin{lemma}
\label{purificationsym}
Let $\ket{AE_1}$ and $\ket{AE_2}$ be two purifications of a state $\rho^A$ to a composite system $AE$. Then there exists a unitary transformation $U_E$ on $E$ such that
\[
\ket{AE_1} = (\mathds{1}_A \otimes U_E)\ket{AE_2}.
\]
\end{lemma}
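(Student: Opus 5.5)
We prove \cref{purificationsym} by passing to Schmidt decompositions. The plan is to write both purifications in terms of the spectral decomposition of $\rho^A$, then build $U_E$ explicitly from the two resulting families of environment vectors.

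First, diagonalise $\rho^A = \sum_{i=1}^r p_i \ket{i}\bra{i}$, where $p_i > 0$, $r$ is the rank, and $\{\ket{i}\}$ is orthonormal in $\mathcal{H}_A$. Expanding $\ket{AE_m}$ in an orthonormal basis of $\mathcal{H}_A$ that extends $\{\ket{i}\}_{i\le r}$ gives
\[
\ket{AE_m} = \sum_i \ket{i} \otimes \ket{e^{(m)}_i}, \qquad m = 1,2,
\]
for some unnormalised vectors $\ket{e^{(m)}_i} \in \mathcal{H}_E$.

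Second, impose the purification condition $\operatorname{tr}_E \ket{AE_m}\bra{AE_m} = \rho^A$. This yields $\langle e^{(m)}_j | e^{(m)}_i\rangle = p_i \delta_{ij}$. Hence the vectors with $i > r$ vanish, and the normalised vectors $\ket{f^{(m)}_i} := p_i^{-1/2}\ket{e^{(m)}_i}$ with $i \le r$ form an orthonormal family in $\mathcal{H}_E$. In particular, both states take the Schmidt form $\ket{AE_m} = \sum_{i\le r}\sqrt{p_i}\,\ket{i}\otimes\ket{f^{(m)}_i}$.

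Third, construct the unitary. Extend each orthonormal family $\{\ket{f^{(m)}_i}\}_{i \le r}$ to an orthonormal basis of $\mathcal{H}_E$; this is possible because both families have the same cardinality $r \le \dim \mathcal{H}_E$. Then define
\[
U_E := \sum_i \ket{f^{(1)}_i}\bra{f^{(2)}_i},
\]
where the sum runs over the full extended bases. It maps one orthonormal basis onto another, so it is unitary. It sends $\ket{f^{(2)}_i} \mapsto \ket{f^{(1)}_i}$ for $i\le r$, so applying $\mathds{1}_A\otimes U_E$ to the Schmidt form of $\ket{AE_2}$ gives $\ket{AE_1}$ term by term.

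The only genuinely delicate step is degeneracy in the spectrum of $\rho^A$. There the Schmidt bases are not unique, so we must not pick them independently for the two purifications. The approach above avoids this: we fix one eigenbasis $\{\ket{i}\}$ of $\rho^A$ once, and read off both environment families from it. The orthogonality relations then follow directly from $\operatorname{tr}_E$, with no uniqueness argument needed.
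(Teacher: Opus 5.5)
Your proof is correct: fixing a single eigenbasis of $\rho^A$, reading off both environment families from it, and mapping one extended orthonormal basis of $\mathcal{H}_E$ onto the other is the standard Schmidt-decomposition argument. The paper gives no proof of its own and simply cites Nielsen--Chuang (Section 2.5), where the result is proved by this same construction, so your route matches the intended one, and your handling of degenerate eigenvalues is a little more careful than usual.
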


\begin{proof}[Proof of \cref{gaugeMPSthm}]
Assume first that we have two length-one MPS representations of the same state up to a global phase,
\begin{equation}
\label{eqdepth1mps}
\input{TikzFigures/mps1sitefig1.tikz},
\end{equation}
where the MPS on the right-hand side of the equality is \textit{full-rank} in the sense of the assumptions of the theorem. If we consider the partial trace over the two rightmost legs at both sides of the equality above, we obtain 
\[
\input{TikzFigures/mps1sitefig2.tikz}.    
\]
Thus, using \cref{purificationsym}, we know that there exists some unitary matrix $X$ such that
\[
\input{TikzFigures/mps1sitefig3.tikz}.
\]
Inserting this identity in \cref{eqdepth1mps} we obtain
\begin{equation*}
\input{TikzFigures/mps1sitefig4.tikz}.
\end{equation*}
Now, $\Sigma$ is full-rank by assumption, and so we can invert it to conclude that
\begin{equation*}
\input{TikzFigures/mps1sitefig5.tikz},
\end{equation*}
finishing the proof. 

Let us now consider two length-two MPS representations of the same state up to a phase, 
\begin{equation}
\label{eqdepth2mps}
\input{TikzFigures/mps2sitefig1.tikz},
\end{equation}
where by assumption the MPS on the right-hand side satisfies the assumptions of the theorem. Considering the partial trace over the three rightmost legs we reach the same conclusion as in the one-site case; i.e. $\Lambda = e^{i\varphi} \Sigma X$ for some unitary matrix $X$ and some phase $\varphi$. Inserting this into \cref{eqdepth2mps} we obtain
\begin{equation}
\label{eq:depth2mps2}
\input{TikzFigures/mps2sitefig2.tikz}.
\end{equation}
If we now consider the partial trace in both states over the two rightmost legs, we obtain
\begin{equation*}
\input{TikzFigures/mps2sitefig3.tikz}.
\end{equation*}
Applying \cref{purificationsym} and using the fact that $\Sigma$ is full-rank, we conclude that there exists some unitary matrix $Y$ and some phase $\phi$ such that $U_1 X= e^{i\phi}(\mathds{1} \otimes Y)V_1 $, i.e. 
\begin{equation*}
\input{TikzFigures/mps2sitefig4.tikz}.
\end{equation*}
Again, inserting the above expression in \cref{eq:depth2mps2} and multiplying both sides by $\Sigma^{-1}$ we conclude that
\begin{equation*}
\input{TikzFigures/mps2sitefig5.tikz}.  
\end{equation*}
By the assumption that $V_1$ is full-rank when seen as a map from its leftmost legs to its rightmost leg (cf. \cref{eqcut}), we can conclude that 
\begin{equation*}
\input{TikzFigures/mps2sitefig6.tikz},
\end{equation*}
finishing the proof. 

In the general case, if one considers two length-$k$, $k > 2$, MPS representations of the same state up to a phase, it is now easy to see that the unique freedom is given by the multiplication by unitary matrices and their inverses over the contracting legs. This result follows from considering partial traces in the same manner as we have done above, in a \textit{bottom-up} fashion.  Using the fact that the isometries are full-rank (cf. \cref{eqcut}) we can reduce the size of the MPS by one in each step.

Furthermore, if $\theta = 0$, the above proof holds identically with equality. 
\end{proof}

\subsection{Quotient tensor network manifold structure}
\label{sectionfreeactionMPS}

As we saw earlier, the tensor network manifold associated with the tensor network family of length-$k$ injective MPS in canonical form with bond dimension $D$ and physical dimension $d$ is
\[
M_{\textup{mps}} = \bb{S}^{2D^2-1} \times \textup{V}_D(\bb{C}^{dD})^{\times k}.
\]
If we are only interested in describing an MPS of this family up to a phase, we can use the manifold 
\[
N_{\textup{mps}} := \bb{CP}^{D^2-1} \times \textup{PV}_{D}(\bb{C}^{dD})^{\times k},
\]
which we endow with its \textit{natural metric} $\tilde g_{\textup{mps}} := g_{\textup{FS}} \oplus g_{\textup{pst}}^{\oplus k}$ (cf. \cref{sec:manifolds}).

Let us now define two actions, one on $M_{\textup{mps}}$ and the other on $N_{\textup{mps}}$, which describe the gauge freedom identified in the previous subsection. 
\begin{definition}[Gauge action on matrix product states]
\label{def:actionMPS}
We define the action $\delta_1$ of $\textup{U}(D)^{\times k}$ on $M_{\textup{mps}}$ as
\[
\textup{U}(D)^{\times k} \times M_{\textup{mps}} \to M_{\textup{mps}},\quad (X, (U, V)) \mapsto ((\mathds{1} \otimes X_1)U, XV),
\]
where 
\begin{align*}
X := (X_1, \dotsc, X_{k}) \in \textup{U}(D)^{\times k},\quad U \in \bb{S}^{2D^2-1},\quad V := (V_1, \dotsc, V_k) \in \textup{V}_{D}(\bb{C}^{dD})^{\times k},
\end{align*}
and 
\[
XV := ((\mathds{1} \otimes X_2)V_1 X^\dagger_1, (\mathds{1} \otimes X_3) V_2 X^\dagger_2, \dotsc,(\mathds{1} \otimes X_k) V_{k-1} X^\dagger_{k-1} , V_k X^\dagger_k) \in  \textup{V}_{D}(\bb{C}^{dD})^{\times k}.
\]
\end{definition}

\begin{definition}[Gauge action on matrix product states, up to phase]
\label{def:actionMPSphase}
We define the action $\delta_2$ of $\textup{PU}(D)^{\times k}$ on $N_{\textup{mps}}$ as
\begin{align*}
\textup{PU}(D)^{\times k} \times N_{\textup{mps}} \to N_{\textup{mps}},\quad (X, ([U], V)) \mapsto ([(\mathds{1} \otimes X_1)U], XV),
\end{align*}
where 
\[
X = ([X_1], \dotsc, [X_{k}]) \in \textup{PU}(D)^{\times k},\quad U \in \bb{CP}^{D^2-1},\quad  V = ([V_1], \dotsc, [V_k]) \in \textup{PV}_{D}(\bb{C}^{dD})^{\times k},
\]
and 
\[
XV := ([(\mathds{1} \otimes X_2)V_1 X^\dagger_1], [(\mathds{1} \otimes X_3) V_2 X^\dagger_2], \dotsc,[(\mathds{1} \otimes X_k) V_{k-1} X^\dagger_{k-1}] , [V_k X^\dagger_k]). 
\]
\end{definition}

Again, these two actions allow us to obtain the quotient tensor network manifolds that characterise the state represented by an MPS of this family both uniquely and up to a phase. For a visual representation of the setting considered, see \cref{fig:diag5}. 
\begin{figure}
    \centering
    \input{TikzFigures/diag5.tikz}
    \caption{Visual representation of the setting considered in \cref{thmfreeactionMPS,thmfreeactionMPS2}.}
    \label{fig:diag5}
\end{figure}

\begin{proposition}
\label{thmfreeactionMPS2}
Let $M_{\textup{mps}}$ be endowed with its natural metric $g_{\textup{mps}}$. Consider the action $\delta_1$ of $\textup{U}(D)^{\times k}$ on $M_{\textup{mps}}$ given in \cref{def:actionMPS}. There is a smooth manifold structure and a Riemannian metric $h_{\textup{mps}}$ on $B_{\textup{mps}} := M_{\textup{mps}}/ \textup{U}(D)^{\times k}$ for which the projection $(M_{\textup{mps}}, g_{\textup{mps}}) \to (B_{\textup{mps}}, h_{\textup{mps}})$ is a Riemannian submersion. 
\end{proposition}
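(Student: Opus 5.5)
The plan follows the proof of \cref{circuitswithphase}. By \cref{existenceRiemannianSubmersionMetrics}, it suffices to show that $\delta_1$ is smooth, isometric with respect to $g_{\textup{mps}}$, and free. Since $\textup{U}(D)^{\times k}$ is compact, the action is automatically proper. Smoothness is immediate, because $\delta_1$ is given componentwise by matrix multiplications, $(X,U)\mapsto(\mathds{1}\otimes X_1)U$ and $(X,V_i)\mapsto(\mathds{1}\otimes X_{i+1})V_iX_i^\dagger$. These are restrictions of polynomial maps on the ambient Euclidean spaces to the embedded submanifolds $\bb{S}^{2D^2-1}$ and $\textup{V}_D(\bb{C}^{dD})$.

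For the isometry property, I treat each factor separately. On the sphere, $\mathds{1}\otimes X_1$ is a unitary on $\bb{C}^{D^2}$, so it preserves the round metric by \cref{prop:isomsphere}. On each Stiefel factor, the map $V_i\mapsto(\mathds{1}\otimes X_{i+1})V_iX_i^\dagger$ is left multiplication by a unitary of $\bb{C}^{dD}$ composed with right multiplication by a unitary of $\bb{C}^D$. I will check that $g_{\textup{st}}$ from \cref{sec:manifolds} is invariant under both. If $g_{\textup{st}}$ is the embedded (Euclidean) metric $\mathrm{Re}\,\mathrm{tr}(\xi^\dagger\eta)$, invariance under $\xi\mapsto W\xi Z^\dagger$ with $W,Z$ unitary is immediate. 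If it is the canonical metric $\mathrm{Re}\,\mathrm{tr}\big(\xi^\dagger(\mathds{1}-\tfrac12 VV^\dagger)\eta\big)$, invariance follows because $VV^\dagger$ transforms to $WVV^\dagger W^\dagger$ and the right factor $Z$ cancels in the trace. Either way, the metric is a product metric, so $\delta_1$ acts by isometries.

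The main step is freeness. Suppose $X=(X_1,\dots,X_k)$ fixes $(U,V)$. I will proceed from the right end of the chain. From $V_kX_k^\dagger=V_k$ and $V_k^\dagger V_k=\mathds{1}_D$, multiplying on the left by $V_k^\dagger$ gives $X_k=\mathds{1}$. Then $(\mathds{1}\otimes X_k)V_{k-1}X_{k-1}^\dagger=V_{k-1}$ becomes $V_{k-1}X_{k-1}^\dagger=V_{k-1}$, and the same argument gives $X_{k-1}=\mathds{1}$. Inducting downward yields $X_i=\mathds{1}$ for all $i$, and the sphere condition is then trivially satisfied. In fact, this shows the action is free using only the isometry property, without needing the surjectivity or full-rank hypotheses of \cref{gaugeMPSthm}. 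The only thing to double-check is that the indexing convention in \cref{def:actionMPS} really lets $X_k$ act solely by right multiplication on $V_k$, which holds as written. The quotient $B_{\textup{mps}}$ then inherits a smooth structure and a metric $h_{\textup{mps}}$ making the projection a Riemannian submersion, which proves the statement.
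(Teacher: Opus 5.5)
Your proof is correct and has the same skeleton as the paper's: smoothness, isometry factor by factor, freeness, and then \cref{existenceRiemannianSubmersionMetrics}, with properness coming from compactness. The freeness step is the identical right-to-left induction: $V_kX_k^\dagger=V_k$ forces $X_k=\mathds{1}$, which reduces the next factor to $V_{k-1}X_{k-1}^\dagger=V_{k-1}$, and so on.

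The only genuine divergence is on the Stiefel factors. The paper never writes $g_{\textup{st}}$ down. Since $g_{\textup{st}}$ is \emph{defined} as the metric making $(\textup{U}(dD),g_{\textup{bi}})\to(\textup{V}_D(\bb{C}^{dD}),g_{\textup{st}})$ a Riemannian submersion, the paper uses that unitary multiplication is an isometry upstairs and descends it via \cref{projectiveIsometry}. Concretely, $Q\mapsto WQ\,\mathrm{diag}(Z^\dagger,\mathds{1})$ covers $V\mapsto WVZ^\dagger$.

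Your route instead checks invariance of an explicit formula, but as written it hedges between two candidate metrics without deciding which one $g_{\textup{st}}$ is. To close this, take $Q=[V,V_\perp]$. The horizontal vectors are $Q\Omega$ with $\Omega$ skew-Hermitian and vanishing lower-right block, and comparing norms gives $g_{\textup{st}}(\Delta,\Delta)=2\,\mathrm{Re}\,\mathrm{tr}\big(\Delta^\dagger(\mathds{1}-\tfrac12VV^\dagger)\Delta\big)$. So $g_{\textup{st}}$ is a multiple of the canonical metric, and your second case is the relevant one. The embedded metric is not, in general, the paper's $g_{\textup{st}}$, though it happens to be invariant as well.

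The descent argument buys independence from any explicit formula and transfers verbatim to the projective Stiefel factors needed in \cref{thmfreeactionMPS}. Your computation is more concrete and checkable once the metric is pinned down.
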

\begin{proof}
The action is smooth. To see that it is isometric, recall that the multiplication by unitary matrices is an isometry both on the sphere endowed with the round metric and on the unitary group endowed with its bi-invariant metric (cf. \cref{prop:isomunitaries,prop:isomsphere}). Thus, using \cref{projectiveIsometry} we can conclude that multiplying by a projective unitary matrix is an isometric action on $\textup{V}_D(\bb{C}^{dD})$ whenever it is endowed with $g_{\textup{st}}$. 

It remains to show that the action is free. For this, let $V \in \textup{V}_D(\bb{C}^{dD})^{\times k}$ and assume that there exists an element $X \in \textup{U}(D)^{\times k}$ such that 
\[
XV = V. 
\]
In particular
\[
V_k X_k^\dagger = V_k,
\]
which implies that $X_k = \mathds{1}$, as $V_k^\dagger V_k = \mathds{1}$. Looking at the $k-1$-th coordinate of $V$, we now see that $V_{k-1} X^\dagger_{k-1} = V_{k-1}$. Following the same reasoning, we can conclude that $X_i = \mathds{1}$ for every $i \in \{1, \dotsc, k\}$, finishing the proof. 
\end{proof}

\begin{proposition}
\label{thmfreeactionMPS}
Let $N_{\textup{mps}}$ be endowed with its natural metric $\tilde g_{\textup{mps}}$. Consider the action $\delta_2$ of $\textup{PU}(D)^{\times k}$ on $N_{\textup{mps}}$ as given in \cref{def:actionMPSphase}. There is a smooth manifold structure and a Riemannian metric $h'$ on $B'_{\textup{mps}} := N_{\textup{mps}} / \textup{PU}(D)^{\times k}$ for which the projection $(N_{\textup{mps}}, \tilde g_{\textup{mps}}) \to (B'_{\textup{mps}}, h'_{\textup{mps}})$ is a Riemannian submersion. 
\end{proposition}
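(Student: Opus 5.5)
Following the strategy used for \cref{thm:action1Dcircuitsfixedinputphase}, we check that $\delta_2$ is smooth, isometric and free; since $\textup{PU}(D)^{\times k}$ is compact, \cref{existenceRiemannianSubmersionMetrics} then gives the smooth structure on $B'_{\textup{mps}}$ and a metric $h'_{\textup{mps}}$ making the projection a Riemannian submersion.

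\emph{Smoothness and well-definedness.} $\delta_2$ is induced from $\delta_1$ through the quotient maps $\textup{U}(D)\to\textup{PU}(D)$, $\bb{S}^{2D^2-1}\to\bb{CP}^{D^2-1}$ and $\textup{V}_D(\bb{C}^{dD})\to\textup{PV}_D(\bb{C}^{dD})$. A phase multiplying $X_i$ only changes each component $(\mathds{1}\otimes X_{i+1})V_iX_i^\dagger$ or $(\mathds{1}\otimes X_1)U$ by a phase. Hence $\delta_2$ is well defined. Since these quotient maps are smooth submersions and $\delta_1$ is smooth, $\delta_2$ is smooth.

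\emph{Isometry.} By \cref{thmfreeactionMPS2}, $\delta_1$ is isometric for $g_{\textup{mps}}$. The metrics $g_{\textup{FS}}$ and $g_{\textup{pst}}$ are precisely those for which the phase quotients are Riemannian submersions (cf.\ \cref{sec:manifolds}). So \cref{projectiveIsometry} applies componentwise and shows that $\delta_2$ is isometric for $\tilde g_{\textup{mps}}$.

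\emph{Freeness, the main step.} Suppose $([X_1],\dots,[X_k])$ fixes $([U],[V_1],\dots,[V_k])$. Start from the last site: $[V_kX_k^\dagger]=[V_k]$ means $V_kX_k^\dagger=\lambda V_k$ for some $\lambda\in\textup{U}(1)$. Multiplying on the left by $V_k^\dagger$ and using $V_k^\dagger V_k=\mathds{1}_D$ gives $X_k^\dagger=\lambda\mathds{1}$, so $[X_k]=[\mathds{1}]$. Now proceed inductively downward. Assume $[X_{i+1}]=[\mathds{1}]$, so $X_{i+1}=\mu\mathds{1}$ for a phase $\mu$. Then $[(\mathds{1}\otimes X_{i+1})V_iX_i^\dagger]=[V_iX_i^\dagger]=[V_i]$. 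The same argument with $V_i^\dagger V_i=\mathds{1}$ gives $[X_i]=[\mathds{1}]$. After $k$ steps every $[X_i]$ is trivial, so the action is free. The point that needs care is that absorbing the phase of $X_{i+1}$ into the projective class is legitimate at each step. It is, because every component is defined only up to phase. The isometry condition $V_i^\dagger V_i=\mathds{1}$ makes the argument work at every point of $N_{\textup{mps}}$. Freeness therefore holds globally, and no genericity restriction is needed here.

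With smoothness, isometry and freeness established, \cref{existenceRiemannianSubmersionMetrics} yields the claim.
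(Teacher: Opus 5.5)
Your proposal is correct and takes the same route as the paper. The paper's proof only says the argument of \cref{thmfreeactionMPS2} carries over: smoothness, isometry via \cref{projectiveIsometry}, and freeness by peeling off sites from the last one using $V_i^\dagger V_i = \mathds{1}$, with compactness of $\textup{PU}(D)^{\times k}$ giving properness; you simply make the projective phase bookkeeping explicit.
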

\begin{proof}
The proof of this result can be easily obtained from that of the previous proposition.
\end{proof}

\section{Two-dimensional sequentially generated states}
\input{TikzFigures/preamble}
Let us now consider the tensor network family of two-dimensional sequentially generated states \cite{banuls2008sequentially} with bond dimension $D$ and physical dimension $d$ on an $n \times (k+2)$ lattice. Every tensor network of this family is constructed as follows: first, consider $n$ injective MPS of length $k$ in canonical form with bond dimension $D$ and physical dimension $d$, arranged as the rows of the lattice (see \cref{fig:peps1}). On top of the MPS rows, consider unitaries stacked along the columns of the lattice, leaving the rightmost leg of every MPS untouched (see Figure \eqref{fig:peps2}).
\begin{figure}
\centering
\begin{subfigure}{.48\textwidth}
  \centering
  \ctikzfig{TikzFigures/peps1}
  \caption{}
  \label{fig:peps1}
\end{subfigure}%
\begin{subfigure}{.48\textwidth}
  \centering
  \ctikzfig{TikzFigures/peps2}
  \caption{}
  \label{fig:peps2}
\end{subfigure}
\caption{Visual representation of a two-dimensional sequentially generated state. The figure on the left-hand side shows the MPSs arranged along the rows of the lattice, and the figure on the right-hand side shows the state with the unitaries stacked along the columns of the lattice.}
\label{fig:peps}
\end{figure}
In this case, we also promote the boundary legs of each MPS to physical legs. Also note that we are not assuming that the state is translationally invariant; each tensor of the tensor network can be different from the rest. The tensor network manifold associated with this family is 
\[
M_{\textup{seq}} := (\bb{S}^{2D^2-1})^{\times n} \times \textup{V}_{D}(\bb{C}^{dD})^{\times nk} \times \textup{U}(D^2)^{\times n-1} \times \textup{U}(d^2)^{\times k(n-1)},
\]
which we endow with its \textit{natural metric} $g_{\textup{seq}} := g_{\textup{round}}^{\oplus n} \oplus g_{\textup{st}}^{\oplus nk} \oplus g_{\textup{bi}}^{\oplus (k+1)(n-1)}$. 

In this case, similarly to when we considered two-dimensional circuits, we can only find a quotient tensor network manifold that characterises sequentially generated states up to a phase. The proof will follow from the results of the next two subsections. 
\begin{theorem}
\label{thm:quotSeqPEPS}
Let $M_{\textup{seq}}$ be the tensor network manifold associated with two-dimensional sequentially generated states with bond dimension $D$ and physical dimension $d \leq D^2$ on an $n \times (k+2)$ lattice. When endowed with its natural metric $g_{\textup{seq}}$, it defines a quotient tensor network manifold, $(B_{\textup{seq}}, h_{\textup{seq}})$ such that, for every $x, y$ on a open and dense subset of $M_{\textup{seq}}$, it holds that
\[
\exists \lambda \in \textup{U}(1) \text{ s.t. } \mathds{TN}_x = \lambda \mathds{TN}_y \iff [x]_{B_{\textup{seq}}} = [y]_{B_{\textup{seq}}}.
\]
\end{theorem}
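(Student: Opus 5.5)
The proof follows the scheme used for \cref{thm:quotTN1DFI,thm:quotTN2DFI}. First I would characterise the gauge of sequentially generated states up to a phase, via a genericity (full-rank) assumption. Then I would encode this gauge as a smooth, free, isometric action of a product of projective unitary groups on a phase-agnostic manifold $N_{\textup{seq}}$. Finally, I would compose the resulting Riemannian submersion with the canonical projection $M_{\textup{seq}} \to N_{\textup{seq}}$. Concretely, set
\[
N_{\textup{seq}} := (\bb{CP}^{D^2-1})^{\times n} \times \textup{PV}_D(\bb{C}^{dD})^{\times nk} \times \textup{PU}(D^2)^{\times n-1} \times \textup{PU}(d^2)^{\times k(n-1)},
\]
with the metric $g_{\textup{FS}}^{\oplus n} \oplus g_{\textup{pst}}^{\oplus nk} \oplus g_{\textup{pu}}^{\oplus (k+1)(n-1)}$. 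By \cref{sec:manifolds}, the projection from $(M_{\textup{seq}}, g_{\textup{seq}})$ to this space is a Riemannian submersion.

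\textbf{Gauge characterisation.} I would prove a statement analogous to \cref{simetria2D}. Suppose $\mathds{TN}_x = e^{i\theta}\mathds{TN}_y$, where the bottom MPS rows of $x$ satisfy the surjectivity and full-rank conditions of \cref{gaugeMPSthm}, and each row, viewed as a map from the legs on one side of a column cut to the legs on the other, is surjective. The condition $d \le D^2$ is what I expect to make this surjectivity generic.
\begin{itemize}
\item \emph{Top unitaries.} Strip the top layer of unitaries as in the proof of \cref{simetria2D}: multiply by the inverses of all column unitaries except the ones along a chosen cut. Full-rankness of the rows then lets the operators on the cut generate arbitrary product inputs, so the remaining top operator maps products to products.
\item \emph{Local form of each gate.} By \cref{productstoproducts}, together with the full-rank argument that rules out the SWAP permutation, each top unitary of $y$ equals, up to a phase, the corresponding one of $x$ composed with a tensor product of local unitaries on its input legs. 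By \cref{productounitarias} these local factors can be taken unitary.
\item \emph{Rows.} After absorbing these local unitaries into the bottom rows, the two networks share their top layer, which can be inverted. Row by row, this leaves two MPS representing the same state up to a phase. \cref{gaugeMPSthm} then gives the bond gauge $\textup{U}(D)^{\times k}$ within each row.
\end{itemize}
Collecting everything, the gauge group is $G:=\textup{PU}(d)^{\times m} \times \textup{PU}(D)^{\times m'} \times \textup{PU}(D)^{\times nk}$. The $\textup{PU}(d)$ factors act on physical legs between the rows and the column unitaries, and the $\textup{PU}(D)$ factors act on the boundary legs promoted to physical legs and on the virtual bonds.

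\textbf{Quotient structure.} Define the action of $G$ on $N_{\textup{seq}}$ by multiplying by these (projective) unitaries on one side of each contracted edge and by their inverses on the other, exactly as in \cref{def:gauge2DFI,def:actionMPSphase}. Smoothness is immediate. Isometry follows from \cref{prop:isomsphere,prop:isomunitaries} combined with \cref{projectiveIsometry}, because every component is acted on by left or right multiplication by unitaries, including the Stiefel components as in \cref{thmfreeactionMPS2}.

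For freeness I would combine two arguments. On the $\textup{PU}$ components, the argument of \cref{thm:action1Dcircuitsfixedinputphase} applies: $[W(A\otimes B)]=[W]$ forces $[A]=[B]=[\mathds{1}]$. On the rows, the backward induction of \cref{thmfreeactionMPS2} applies: $[V_kX_k^\dagger]=[V_k]$ together with $V_k^\dagger V_k=\mathds{1}$ forces $[X_k]=[\mathds{1}]$, and one propagates leftwards. Then \cref{existenceRiemannianSubmersionMetrics} yields $(B_{\textup{seq}},h_{\textup{seq}}):=N_{\textup{seq}}/G$. Composing the two submersions, and noting that the genericity conditions are open and dense, gives the theorem.

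\textbf{Main obstacle.} I expect the gauge characterisation to be the hardest step. Two points need care: ensuring that the stacked column unitaries genuinely factor across every cut, and handling the phases consistently. For the first, I would first try peeling the column unitaries one at a time from the top, as in the 1D proof of \cref{theoremgauge1}. That reduces each step to a two-row situation where surjectivity of a single MPS row, guaranteed generically when $d\le D^2$, replaces the full-rank $\Lambda$ used there.
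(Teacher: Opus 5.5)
Your outer architecture matches the paper's: a gauge theorem (\cref{thmgauge3}), the free isometric action of $\textup{PU}(D)^{2(n-1)+kn}\times\textup{PU}(d)^{2k(n-1)}$ on $N_{\textup{seq}}$, and composition with $M_{\textup{seq}}\to N_{\textup{seq}}$. The freeness proof in \cref{prop:theoremaisometric} uses the same two arguments you cite. It starts from the last unitary of each column, because that is the only one with no gauge on its output legs. Your gauge characterisation, however, is transplanted from \cref{simetria2D} into a geometry where it does not fit, in two ways.

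\emph{The genericity hypothesis.} As stated, it cannot hold at any column cut with a physical leg on both sides. There the row factors through a single $D$-dimensional bond, so its rank across the cut is at most $D$, while each side has dimension at least $dD$. What you need, and where $d\le D^2$ enters, is the hypothesis of \cref{thmgauge3}:
\begin{itemize}
\item every isometry maps its two virtual legs $\bb{C}^D\otimes\bb{C}^D$ onto its physical leg $\bb{C}^d$;
\item every isometry maps $\bb{C}^D\otimes\bb{C}^d$ onto $\bb{C}^D$;
\item the bottom states are full rank.
\end{itemize}
Then each row maps its remaining legs onto any single physical leg.

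\emph{The stripping step.} In \cref{simetria2D} it works because the unitaries left on the second network do not cross the cut, so that side stays a product across it. Here every column unitary crosses the row cut. Either the unitaries you leave on the second network entangle its rows, or you strip them all and the full-rank rows have no free legs except the untouched right-boundary ones. The paper strips every column with one network's unitaries, leaving bare rows on one side, and argues column by column from single-leg surjectivity. The operators left on the other columns still have to be accounted for; this is the delicate point in any write-up.

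Your decoupling step is a genuinely different route. Once every column operator is $X_j\otimes X_j'$, you invert the common top layer, read each row off the product structure up to a phase, and reuse \cref{gaugeMPSthm} row by row. The paper instead applies \cref{purificationsym} to the two-row ladder, obtains $Y_i\in\textup{U}(D^2)$ on the doubled bonds, and only at the end shows $Y_i=Y_i^1\otimes Y_i^2$. Your route avoids that, but it needs every SWAP excluded beforehand, and this cannot be done column by column.

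Suppose all columns $1,\dots,k+1$ carry a SWAP. Then the two rows merely trade places on every acted leg. Without column $k+2$ this would be an honest second representation ($B_j=A_j\,\mathrm{SWAP}$, rows exchanged), and it is not of edge-gauge form. So only the untouched right-boundary legs can rule it out. The swap would force each row to be a product between its right-boundary leg and the rest. That contradicts the full rank of that leg's reduced state, which the full-rank assumptions provide (for $D>1$). A partial pattern $S$ likewise forces each row to factorise between the columns in $S$ and the rest, which you must also exclude, at least generically. This is exactly the role of the rightmost column in the paper, where $P^{k+1}_\pi=\mathds{1}$ is established first and then propagated backwards.
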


\subsection{Characterising the gauge}

Let us first characterise the gauge of sequentially generated states. 

\begin{theorem}
\label{thmgauge3}
Consider the tensor network family of two-dimensional, sequentially generated states with bond dimension $D$ and physical dimension $d$ on an $n \times (k+2)$ lattice. Assume that $D^2 \geq d$. Consider two tensor networks of the family representing the states $\ket{U}$ and $\ket{V}$. Assume that every isometry of the tensor network associated with $\ket{U}$ is surjective when seen as a map from $\bb{C}^D \times \bb{C}^D$ to $\bb{C}^d$ and when seen as a map from $\bb{C}^D \times \bb{C}^d$ to $\bb{C}^D$,
\begin{equation}
\label{eq:fullrankseqpeps}
\begin{tikzpicture}[scale=1]
    \draw[shift={(0, 1.223)}, scale=0.5] (0, 0) -- (1.27, 0) -- (2.258, 1.834) -- (0, 1.834) -- cycle;
    \draw[shift={(0.844, 2.139)}, scale=0.5, <-] (0, 0) -- (0, 1.552);
    \draw[->] (0.324, 0.447) -- (0.324, 1.223);
    \draw[shift={(0.282, 2.14)}, scale=0.5, ->] (0, 0) -- (0, 1.552);
    \draw[shift={(1.87, 1.223)}, scale=0.5] (0, 0) -- (1.27, 0) -- (2.258, 1.834) -- (0, 1.834) -- cycle;
    \draw[shift={(2.714, 2.139)}, scale=0.5, ->] (0, 0) -- (0, 1.552);
    \draw[->] (2.193, 0.447) -- (2.193, 1.223);
    \draw[shift={(2.152, 2.14)}, scale=0.5, <-] (0, 0) -- (0, 1.552);
    \draw[shift={(3.669, 1.223)}, scale=0.5] (0, 0) -- (1.27, 0) -- (2.258, 1.834) -- (0, 1.834) -- cycle;
    \draw[shift={(4.513, 2.139)}, scale=0.5, <-] (0, 0) -- (0, 1.552);
    \draw[<-] (3.992, 0.447) -- (3.992, 1.223);
    \draw[shift={(3.951, 2.14)}, scale=0.5, <-] (0, 0) -- (0, 1.552);
    \node[anchor=center] at (0.32, 0.148) {$\mathbb{C}^D$};
    \node[anchor=center] at (0.92, 3.217) {$\mathbb{C}^D$};
    \node[anchor=center] at (2.19, 0.148) {$\mathbb{C}^D$};
    \node[anchor=center] at (3.989, 0.148) {$\mathbb{C}^D$};
    \node[anchor=center] at (2.79, 3.217) {$\mathbb{C}^D$};
    \node[anchor=center] at (4.589, 3.217) {$\mathbb{C}^D$};
    \node[anchor=center] at (0.285, 3.217) {$\mathbb{C}^d$};
    \node[anchor=center] at (2.155, 3.217) {$\mathbb{C}^d$};
    \node[anchor=center] at (3.954, 3.217) {$\mathbb{C}^d$};
\end{tikzpicture}.
\end{equation}
Further assume that the bottom state of every MPS is full-rank when seen as a map from $\bb{C}^D$ to $\bb{C}^D$. Let $\theta \in [0, 2\pi)$ be such that $\ket{U} = e^{i\theta}\ket{V}$. Then, the tensors of the tensor networks associated with $\ket{U}$ and $\ket{V}$ are related, up to a phase, by the multiplication by unitaries and their inverses over the contracting edges. 
\end{theorem}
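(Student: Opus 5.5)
The plan is to reduce the statement to the fundamental theorems already proved: \cref{gaugeMPSthm} for the MPS rows, and the argument of \cref{theoremgauge1,simetria2D} for the unitaries stacked along the columns. I would proceed from the top of the network downwards. The aim is to isolate the column unitaries first, and then reduce to a collection of MPS identities.

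\textbf{Step 1: the column unitaries.} Write $\ket{U}=e^{i\theta}\ket{V}$ and fix a column $c$ and a pair of consecutive rows $r,r+1$. Apply $\prod (\text{column unitary})^\dagger$ of $\ket{U}$ on every site except the chosen one, and apply the adjoint of the corresponding unitary of $\ket{V}$ at the chosen site, as in the proof of \cref{theoremgauge1}. After this, the left-hand side consists of the MPS rows of $\ket{U}$ with a single two-qudit operator $W:=V_{(r,c)}^\dagger U_{(r,c)}$ inserted between rows $r$ and $r+1$. The right-hand side is, up to phase, a state on which all the remaining operators act as a product across the horizontal cut separating rows $r$ and $r+1$.

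We then use the surjectivity assumption in \cref{eq:fullrankseqpeps} for maps $\bb{C}^D\times\bb{C}^D\to\bb{C}^d$; this is where $d\le D^2$ is needed. Together with the full rank of the bottom states, it lets us contract the remaining legs of the two MPS rows with suitable dual vectors, in the spirit of \cref{rmk:fullrankrmk}. In this way we can prepare an arbitrary product input $\ket{a}\otimes\ket{b}$ on the two legs entering $W$. The output is always a product state, so $W$ maps products to products. \cref{productstoproducts} then gives $W=P_\pi(X\otimes Y)$. The swap is excluded by the same full-rank argument used in \cref{simetria2D}, and \cref{productounitarias} lets us take $X,Y$ unitary. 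Repeating this for every column unitary, including those of size $D^2$ acting on the boundary bond legs, we obtain $U_{(r,c)}\propto V_{(r,c)}(X_{(r,c)}\otimes Y_{(r,c)})$ for every column unitary.

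\textbf{Step 2: the MPS rows.} Substituting these relations back, the column unitaries of $\ket{U}$ and $\ket{V}$ cancel. We are left with an identity between the tensor products of the $n$ MPS rows, up to phase, where each row of $\ket{V}$ is dressed with local unitaries on its physical legs. The $n$ rows are now unentangled with each other, so equality of product states forces each row to agree separately up to a phase. The $\bb{C}^D\times\bb{C}^d\to\bb{C}^D$ surjectivity and the full rank of the bottom states are exactly the hypotheses of \cref{gaugeMPSthm}. Applying it row by row gives the bond gauge unitaries along each row. The physical-leg unitaries $X,Y$ found in Step 1 are precisely the gauge over the vertical contracted edges, which completes the description.

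\textbf{Main obstacle.} I expect the hardest step to be the product-to-product argument in Step 1. An MPS row is not a product of independent local states, so we must justify that fixing every other leg still leaves enough freedom to generate an arbitrary product input on the two legs entering $W$. I would first attempt this by an induction from the boundary of each row, using at every site the surjectivity of the isometry as a map $\bb{C}^D\times\bb{C}^D\to\bb{C}^d$. A second point to check is that the boundary $D^2$-dimensional unitaries are handled by the same argument, using the full rank of the bottom states $\Lambda$.
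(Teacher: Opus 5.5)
The gap is in Step 1, at the claim that after your manipulation the right-hand side factorizes across the horizontal cut. The single-gate trick of \cref{theoremgauge1} works there because, once the gate at site $i$ is removed, every remaining top gate lies entirely on one side of the cut. In a sequentially generated state, however, every column unitary crosses the cut between consecutive rows. For $n=2$, the right-hand side still carries $U_{(1,j)}^\dagger V_{(1,j)}$ on every column $j\neq c$, and each of these acts on both rows at once. The right-hand side is therefore not a product across rows. Contracting the other legs with duals that are products across the rows does not leave a product vector on column $c$, so ``$W$ maps products to products'' does not follow. If you instead choose the duals to make the right-hand side a product, the input to $W$ on the left-hand side stops being one. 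The difficulty you flag, generating arbitrary product inputs from an MPS row, is the manageable part (by linearity within each row); the obstruction is on the other side of the equation. For $n>2$ there is a further problem. The unitaries in a column are chained, each acting on an output leg of the previous one, so they cannot be stripped off one at a time. The gauge also lives on these internal vertical edges, which the form $U_{(r,c)}\propto V_{(r,c)}(X\otimes Y)$ does not capture.

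The paper instead strips \emph{all} column unitaries of one network at once by multiplying every column by $A_i^\dagger$. Then one side is exactly the tensor product of the two MPS rows, and the other carries $C_i=A_i^\dagger B_i$ on every column. From this it obtains $C_i=P^i_\pi(X_{2i-1}\otimes X_{2i})$ with $P^i_\pi$ possibly a SWAP. The swaps are not removed column by column, and a local argument in the spirit of \cref{simetria2D} cannot do it: swapping every column simultaneously just exchanges the two rows, except for their rightmost legs, which carry no column unitary. So the paper sweeps left to right along both rows jointly with \cref{purificationsym}, producing bond unitaries $Y_i\in\textup{U}(D^2)$. It rules out the SWAP scenarios at the right end using the surjectivity of $V_{2k}$, then propagates back to get $P^i_\pi=\mathds{1}$ and $Y_i=Y_i^1\otimes Y_i^2$. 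The case $n>2$ is handled by grouping the first $n-1$ rows and inducting. Your Step 2 (equal products of rows force row-wise proportionality, then \cref{gaugeMPSthm}) would be a clean substitute for that sweep. It only applies, however, once every $C_i$ is known to be a swap-free product, and that is exactly what your Step 1 does not deliver.
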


\begin{proof}[Proof of \cref{thmgauge3}]
Following an analogous approach to that used for matrix product states, we analyse the gauge freedom of two-dimensional sequentially generated states by induction in the number of rows. This way, consider two two-dimensional sequentially generated states on a $2 \times (k+1)$ lattice representing the same state up to a phase, 
\begin{equation}
\label{eq:initialequalityseq}
\input{TikzFigures/peps3.tikz},
\end{equation}
where the choice of the MPS having four sites if merely for the ease of understanding the figures. If we multiply each column $i$ in both sides by $A_i^\dagger$ we obtain
\[
\input{TikzFigures/peps4.tikz}.
\]
Grouping the legs in each row except for the leftmost one in this equation we obtain
\begin{equation}
\label{eq:equalitygrouping}
\input{TikzFigures/peps5.tikz},
\end{equation}
where 
\[
C_1 := A_1^\dagger B_1.
\]
In particular, using the full-rank assumption of the isometries $V_1, \dotsc, V_{2k}$, \cref{eq:equalitygrouping} implies that for every $\ket{\phi}, \ket{\varphi} \in \bb{C}^{D}$ there exist $\ket{\alpha}, \ket{\beta} \in \bb{C}^{D}$ such that 
\[
C_1(\ket{\phi} \otimes \ket{\varphi}) = \ket{\alpha}\otimes \ket{\beta}, 
\]
i.e. $C_1$ maps products to products. By \cref{productstoproducts} we can conclude that there exist two unitary matrices $X_1, X_2 \in \textup{U}(D)$ such that
\[
C_1 = P^1_{\pi}(X_1 \otimes X_2),
\]
where $P^1_{\pi}$ is either a SWAP gate or the identity. Next, for any $i \in \{1, \dotsc, k\}$, let us choose the $i$-th isometry of the MPSs. In both rows, we group the legs that are at the right of the rightmost leg of the isometry, and the legs at the left of the bottom leg of the isometry; we obtain 
\begin{equation}
\label{eq:equalitygroupingany}
\input{TikzFigures/pepsidk.tikz},    
\end{equation}
where $C_{i+1} := A^\dagger_{i+1}B_{i+1}$. Using the full-rank assumption of the isometries shown in \cref{eq:fullrankseqpeps}, \cref{eq:equalitygroupingany} implies that for every $\ket{\phi}, \ket{\varphi} \in \bb{C}^{D}$ there exist $\ket{\alpha}, \ket{\beta} \in \bb{C}^{D}$ such that 
\[
C_{i+1}(\ket{\phi} \otimes \ket{\varphi}) = \ket{\alpha}\otimes \ket{\beta}, 
\]
and so there exist two unitaries $X_{2i+1}, X_{2(i+1)}$ such that $C_{i+1} = P^{i+1}_\pi(X_{2i+1} \otimes X_{2(i+1)})$, where $P^{i+1}_\pi$ is either the SWAP gate or the identity. This allows us to rewrite \cref{eq:initialequalityseq} as
\begin{equation}
\label{eq:identitywithpermutations}
\input{TikzFigures/peps10.tikz}.
\end{equation}
If we now take the partial trace over all legs on both sides except for the leftmost leg in each row, we see that
\[
\input{TikzFigures/lastjust1.tikz}.
\]
Thus, by \cref{purificationsym} there exists a unitary matrix $Y_1 \in \textup{U}(D^2)$ such that 
\[
\input{TikzFigures/lastjust2.tikz}
\]
Inserting this identity into \cref{eq:identitywithpermutations} and multiplying by $(P^1_{\pi})^{-1}$, $X_1^\dagger, X_2^\dagger$, and $\Sigma_1^{-1}$, $\Sigma_2^{-1}$ we conclude that
\begin{equation}
\label{eq:identitywithpermutations2}
\input{TikzFigures/lastjust3.tikz}
\end{equation}
Again, taking the partial trace over all legs on both sides except for the two leftmost legs in each row, and applying \cref{purificationsym} we conclude that there exists some unitary matrix $Y_2 \in \textup{U}(D^2)$ such that 
\[
\input{TikzFigures/lastjust4.tikz}
\]
Inserting this into \cref{eq:identitywithpermutations2}, and using the fact that $V_1$ and $V_2$ are full-rank when seen as linear maps from the Hilbert spaces corresponding to their two leftmost legs to that associated with their rightmost leg, we conclude that 
\[
\input{TikzFigures/lastjust5.tikz}
\]
Following the same reasoning, we conclude that for every $i \in \{1, \dotsc, k-1\}$ there exists some unitary matrix $Y_{i+1} \in \textup{U}(D^2)$ such that 
\begin{equation}
\label{eq:identityintermediateisoms}
\input{TikzFigures/lastjust6.tikz}
\end{equation}
and that for $i = k$, 
\[
\input{TikzFigures/lastjust7.tikz}
\]
Looking at this last identity, since $P^{k+1}_\pi$ is either the SWAP gate or the identity, we know that it must be the case that either 
\[
\input{TikzFigures/lastjust8.tikz}
\]
In either case by the full-rank assumption on the isometries, this means that $Y_k$ maps product states to product states and so by \cref{productstoproducts} it must be the composition of a product unitary with either a SWAP gate or the identity. This means that either 
\begin{equation}
\label{eq:possiblescenarios}
\input{TikzFigures/lastjust9.tikz}
\end{equation}
Note that the last three scenarios cannot happen. Indeed, let $v_1, v_2$ and $v_3$ be three vectors such that 
\[
\begin{tikzpicture}[scale=1.1]
    \filldraw[shift={(0.372, 0.66)}, scale=1.323, fill=pink] (0, 0) rectangle (0.141, 0.141);
    \filldraw[shift={(0.559, 0.847)}, scale=1.323, fill=violet] (0, 0) -- (-0.071, 0.071) -- (-0.212, 0.071) -- (-0.141, 0) -- cycle;
    \filldraw[shift={(0.372, 0.847)}, scale=1.323, fill=violet] (0, 0) -- (-0.071, 0.071) -- (-0.071, -0.071) -- (0, -0.141) -- (0, -0.141) -- cycle;
    \node[anchor=center] at (2.327, 1.674) {$\neq$};
    \filldraw[shift={(0.327, 1.458)}, xscale=1.059, yscale=1.234, fill=lightgray] (0, 0) -- (0.847, 0) -- (1.27, 0.282) -- (0, 0.282) -- (0, 0.282) -- cycle;
    \filldraw[shift={(0.327, 1.807)}, xscale=1.059, yscale=1.234, fill=darkgray] (0, 0) -- (-0.141, 0.141) -- (-0.141, -0.141) -- (0, -0.282) -- (0, 0) -- cycle;
    \filldraw[shift={(0.177, 1.981)}, xscale=1.059, yscale=1.234, fill=darkgray] (0, 0) -- (1.27, 0) -- (1.411, -0.141) -- (0.141, -0.141) -- (0.141, -0.141) -- cycle;
    \filldraw[shift={(0.426, 1.876)}, scale=1.222, heavier, fill=lightgray] (0, 0) -- (-0.001, 0.451);
    \filldraw[shift={(1.298, 1.876)}, xscale=0.882, yscale=0.741, heavier, fill=lightgray] (0, 0) -- (0, 0.423);
    \filldraw[shift={(0.426, 1.461)}, xscale=0.837, yscale=0.726, heavier, fill=lightyellow] (0, 0) -- (-0.004, -0.737) -- (-0.004, -0.729);
    \node[anchor=center] at (0.882, 1.634) {$V_{2k}$};
    \filldraw[shift={(0.422, 0.346)}, xscale=0.882, yscale=0.741, heavier, fill=lightgray] (0, 0) -- (0, 0.423);
    \node[circle, fill=black, inner sep=0pt, minimum size=4pt] at (0.422, 0.346) {};
    \node[circle, fill=black, inner sep=0pt, minimum size=4pt] at (0.425, 2.427) {};
    \node[circle, fill=black, inner sep=0pt, minimum size=4pt] at (1.298, 2.19) {};
    \node[anchor=center] at (0.164, 2.553) {$v_1$};
    \node[anchor=center] at (1.091, 2.392) {$v_2$};
    \node[anchor=center] at (0.424, 0.102) {$v_3$};
    \node[anchor=center] at (2.954, 1.674) {$0$};
\end{tikzpicture},
\]
which exist due to the full-rank assumption on $V_{2k}$. Then, if any of the three last scenarios shown in \cref{eq:possiblescenarios} were possible, we could use the vectors $v_1,v_2$ and $v_3$ to conclude that
\[
\input{TikzFigures/lastjust10.tikz}
\]
where we have used arrows to explicit that we are seeing each tensor network as a linear map from the Hilbert spaces associated with the inbound edges to that associated with the outbound edge. Whilst the right-hand side of the above identities is always a surjective linear map by assumption, the left-hand side cannot be, yielding a contradiction. Thus, it must be the case that $Y_k = Y_k^1 \otimes Y_k^2$ for some unitary matrices $Y_k^1, Y_k^2 \in \textup{U}(D)$, and that $P^{k+1}_\pi = \mathds{1}$. 

Inserting this into \cref{eq:identityintermediateisoms} for $i = k-1$ we can continue the same reasoning to conclude that $P^k_\pi$ is also the identity matrix and that $Y_{k-1}$ is of the form $Y_{k-1}^1 \otimes Y^2_{k-1}$ for some unitary matrices $Y_{k-1}^1,Y^2_{k-1} \in \textup{U}(D)$. In fact, one can follow the same argument to conclude that $P^i_\pi = \mathds{1}$ for every $i \in \{1, \dotsc, k+1\}$ and that $Y_i = Y_i^1 \otimes Y_i^2$ for some unitary matrices $Y_i^1, Y^2_i \in \textup{U}(D)$, for every $i \in \{1, \dotsc, k\}$, finishing the proof. 

In the more general case, given two sequentially generated states on an $n \times (k+1)$ lattice with $n > 2$ rows, we can group the first $n-1$ rows and proceed by induction.
\end{proof}

\subsection{Quotient tensor network manifold structure}
\label{sectionfreeactiontwodseqstates}

Let us now describe the gauge identified in the previous subsection as a group action. Recall that the tensor network manifold associated with the tensor network family of two-dimensional, sequentially generated states with bond dimension $D$ and physical dimension $d$ on an $n \times (k+2)$ lattice is
\[
M_{\textup{seq}} = (\bb{S}^{2D^2-1})^{\times n} \times \textup{V}_{D}(\bb{C}^{dD})^{\times nk} \times \textup{U}(D^2)^{\times n-1} \times \textup{U}(d^2)^{\times k(n-1)}.
\]
To describe the tensor networks of this family up to a phase, we can consider the manifold
\[
N_{\textup{seq}}  := (\bb{CP}^{D^2-1})^{\times n} \times \textup{PV}_{D}(\bb{C}^{dD})^{\times nk} \times \textup{PU}(D^2)^{\times n-1} \times \textup{PU}(d^2)^{\times k(n-1)},
\]
endowed with its \textit{natural metric} $\tilde g_{\textup{seq}} := g_{\textup{FS}}^{\oplus n} \oplus g_{\textup{pst}}^{\oplus nk} \oplus g_{\textup{pu}}^{\oplus (k+1)(n-1)}$.

\begin{definition}[Gauge action on sequentially generated PEPS]
\label{def:gaugeactionseqPEPS}
We define the action of $\textup{PU}(D)^{2(n-1) + kn} \times \textup{PU}(d)^{2k(n-1)}$ on $N_{\textup{seq}}$ as
\begin{align*}
\textup{PU}(D)^{2(n-1) + kn} \times \textup{PU}(d)^{2k(n-1)} \times N_{\textup{seq}} \to N_{\textup{seq}},\quad 
(X, (U, V, W)) \mapsto (XU, XV, XW),
\end{align*}
where $X \in \textup{PU}(D)^{2(n-1) + kn} \times \textup{PU}(d)^{2k(n-1)}$, and 
\begin{align*}
U &:= ([U_1], \dotsc, [U_n]) \in (\bb{CP}^{D^2-1})^{\times n},
\\V &:= (V^1, \dotsc, V^n) \in (\textup{PV}_{D}(\bb{C}^{dD})^{\times k})^{\times n},
\\W &:= (W^1, \dotsc, W^{k+1}) \in \textup{PU}(D^2)^{\times n-1} \times (\textup{PU}(d^2)^{\times (n-1)})^{\times k},
\end{align*}
where the tuple $U$ represents the bottom states of the MPSs composing the state, each $W^i$ denotes a column of unitaries and each $V^i$ denotes the set of isometries of an MPS of the state. For this reason $W$ has $(k+1)$ components and $V$ has $n$ components. Each element $W^i$ and $V^i$ is defined as 
\begin{align*}
V^i &:= ([V^i_1], \dotsc, [V^i_k]) \in \textup{PV}_{D}(\bb{C}^{dD})^{\times k},\\
W^1 &:= ([W^i_1], \dotsc, [W^i_{n-1}]) \in \textup{PU}(D^2)^{\times (n-1)},\\
W^i &:= ([W^i_1], \dotsc, [W^i_{n-1}]) \in \textup{PU}(d^2)^{\times (n-1)}, \quad \forall i \in \{2, \dotsc, k+1\}. 
\end{align*}
Furthermore, for readability, we will write
\[
X = (X^1, \dotsc, X^n, \tilde{X}^1, \dotsc, \tilde{X}^{k+1}),
\]
where
\begin{align*}
X^i &= ([X^i_1], \dotsc, [X^i_k]) \in \textup{PU}(D)^{\times k}, \quad \forall i \in \{1, \dotsc, n\},
\\\tilde{X}^1 &= ([\tilde{X}^i_1], \dotsc, [\tilde{X}^i_{2(n-1)}]) \in \textup{PU}(D)^{\times 2(n-1)},
\\\tilde{X}^i &= ([\tilde{X}^i_1], \dotsc, [\tilde{X}^i_{2(n-1)}]) \in \textup{PU}(d)^{\times 2(n-1)}, \quad \forall i \in \{2, \dotsc, k+1\}.
\end{align*}
We defined $X$ so that each $\tilde{X}^i$ acts on the $i$-th column of unitaries, represented by $W^i$. Furthermore, each $X^i$ acts on the MPS in the $i$-th row, represented by $[U_i]$ and $V^i$. This way, the action on the unitaries can be defined as
\[
XW := (\tilde{X}^1 W^1, \tilde{X}^2 W^2, \dotsc, \tilde{X}^{k+1} W^{k+1}),
\]
where for each column $\tilde{X}^i$ acts naturally (see \cref{fig:actioncolums}),
\begin{multline}
\label{eq:actioncolumns}
\tilde{X}^i W^i := ([(\mathds{1} \otimes \tilde{X}^i_3) W^i_1 ((\tilde{X}^i_1)^\dagger \otimes (\tilde{X}^i_2)^\dagger)], \dotsc,\\ [(\mathds{1} \otimes \tilde{X}^i_{2(n-2) + 1}) W^i_{n-2} ((\tilde{X}^i_{2(n-2)-1})^\dagger \otimes (\tilde{X}^i_{2(n-2)})^\dagger)], [W^i_{n-1} ((\tilde{X}^i_{2(n-2) + 1})^\dagger \otimes (\tilde{X}^i_{2(n-1)})^\dagger)]).
\end{multline}
\begin{figure}
    \centering
    \ctikzfig{TikzFigures/actioncolumns}
    \caption{Visual representation of the action described in \cref{eq:actioncolumns} for the $i$-th column of unitaries.}
    \label{fig:actioncolums}
\end{figure}

The action on $U$ can be easily described as well (see \cref{fig:actionspheres}),
\begin{equation}
\label{eq:actionspheres}
XU := ([(\tilde{X}^1_{f(1)} \otimes X^1_1)U_1], \dotsc, [(\tilde{X}^1_{f(n)} \otimes X^{n}_1) U_n]),
\end{equation}
where 
\[
f(i) = \begin{cases}
1\quad &\text{if}\ i = 1,\\
2(i-1)\quad &\text{if}\ i > 1.
\end{cases}
\]
\begin{figure}
    \centering
    \ctikzfig{TikzFigures/actionspheres}
    \caption{Visual representation of the action described in \cref{eq:actionspheres}. Some faded tensors have been added for reference.}
    \label{fig:actionspheres}
\end{figure}
Note that the elements of $X$ which act on $U$ correspond to $X^i_1$, $i \in \{1, \dotsc, n\}$, which corresponds to the first element of $X$ acting on the MPS in the $i$-th row, and some elements of $\tilde{X}^1$, which also act on the first column of unitaries. 

Lastly, we define
\[
XV := (\tilde{V}^1, \tilde{V}^2, \dotsc,\tilde{V}^n),
\]
where
\begin{multline}
\label{accionV}
\tilde{V}^i := ([(\tilde{X}^2_{f(i)} \otimes X^i_2)V^i_1 (X^i_1)^\dagger],\\ [(\tilde{X}^3_{f(i)} \otimes X^i_3)V^i_2 (X^i_2)^\dagger], \dotsc, [(\tilde{X}^k_{f(i)} \otimes X^i_k) V^i_{k-1} (X^i_{k-1})^\dagger],\\ [(\tilde{X}^{k+1}_{f(i)} \otimes \mathds{1})V^i_k (X^i_k)^\dagger]),
\end{multline}
for every $i = 1, \dotsc, n$. Note that both $X^i$ and the $f(i)$-th coordinate of $\tilde X^2, \dotsc, \tilde X^{k+1}$ act on $V^i$ (see \cref{fig:actionMPS}).
\begin{figure}
    \centering
    \ctikzfig{TikzFigures/actionMPS}
    \caption{Visual representation of the action described in \cref{accionV}, for the $i$-th MPS row. Some faded tensors have been added for reference.}
    \label{fig:actionMPS}
\end{figure}
\end{definition}

As usual, the above action allows us to define a quotient tensor network manifold such that every point in it is in one-to-one correspondence with a two-dimensional sequentially generated state up to a phase.

\begin{figure}
    \centering
    \input{TikzFigures/diag6.tikz}
    \caption{Visual representation of the setting considered in \cref{prop:theoremaisometric}.}
    \label{fig:diag6}
\end{figure}

\begin{proposition}
\label{prop:theoremaisometric}
Let $N_{\textup{seq}}$ be endowed with its natural metric $\tilde g_{\textup{seq}}$. Consider the action of $\textup{PU}(D)^{2(n-1) + kn} \times \textup{PU}(d)^{2k(n-1)}$ on $N_{\textup{seq}}$ as given in \cref{def:gaugeactionseqPEPS}. There is a smooth manifold structure and a Riemannian metric $h_{\textup{seq}}$ on $B_{\textup{seq}} := N_{\textup{seq}} / \textup{PU}(D)^{2(n-1) + kn} \times \textup{PU}(d)^{2k(n-1)}$ for which the projection $(N_{\textup{seq}}, \tilde g_{\textup{seq}}) \to (B_{\textup{seq}}, h_{\textup{seq}})$ is a Riemannian submersion. 
\end{proposition}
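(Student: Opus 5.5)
As with \cref{circuitswithphase,thm:action1Dcircuitsfixedinputphase}, the plan is to verify that the action of $G := \textup{PU}(D)^{2(n-1) + kn} \times \textup{PU}(d)^{2k(n-1)}$ on $N_{\textup{seq}}$ from \cref{def:gaugeactionseqPEPS} is smooth, isometric and free. The conclusion then follows from \cref{existenceRiemannianSubmersionMetrics}; properness is automatic because $G$ is compact.

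\textbf{Smoothness.} Each component of the action is a composition of matrix multiplications and tensor products, followed by the projections to $\bb{CP}^{D^2-1}$, $\textup{PV}_D(\bb{C}^{dD})$ and $\textup{PU}(\cdot)$. Before passing to the quotient, these are smooth maps that are equivariant under the phases. Hence they descend to smooth maps on the quotients.

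\textbf{Isometry.} I will argue componentwise, since $\tilde g_{\textup{seq}}$ is a product metric and each factor of $N_{\textup{seq}}$ is acted on by left and/or right multiplication by (tensor products of) unitaries.
\begin{itemize}
\item Left multiplication by a unitary is an isometry of the round sphere (\cref{prop:isomsphere}).
\item Left and right multiplication by unitaries are isometries of $\textup{U}(m)$ with $g_{\textup{bi}}$ (\cref{prop:isomunitaries}).
\item For the Stiefel factors, left multiplication by $\textup{U}(dD)$ and right multiplication by $\textup{U}(D)$ preserve $g_{\textup{st}}$, which is the restriction of the Frobenius real inner product.
\end{itemize}
The lifted action on $M_{\textup{seq}}$ is therefore isometric. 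By \cref{projectiveIsometry}, its projected version is isometric for $g_{\textup{FS}}$, $g_{\textup{pst}}$ and $g_{\textup{pu}}$.

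\textbf{Freeness.} This is the main step. Suppose $X$ fixes a point $([U],V,W)$; I must show every component of $X$ is $[\mathds{1}]$. I expect the main obstacle to be organising the chain of dependencies across the two-dimensional arrangement, handling the phase ambiguities carefully, and checking that no component is left unconstrained.

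First I treat the columns of unitaries. For each $i$, look at the top unitary $W^i_{n-1}$. The condition $[W^i_{n-1}(A^\dagger \otimes B^\dagger)] = [W^i_{n-1}]$ gives $[A \otimes B] = [\mathds{1}]$, where $A = \tilde X^i_{2(n-2)+1}$ and $B = \tilde X^i_{2(n-1)}$. This forces $A$ and $B$ to be scalars, so $[\tilde X^i_{2(n-2)+1}] = [\tilde X^i_{2(n-1)}] = [\mathds{1}]$. Descending the column, the equation for $W^i_{j}$ has the form $[(\mathds{1}\otimes \tilde X^i_{2j+1}) W^i_j (\tilde X^{i\,\dagger}_{2j-1}\otimes \tilde X^{i\,\dagger}_{2j})] = [W^i_j]$. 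Once $\tilde X^i_{2j+1}$ is known to be trivial, the same argument kills $\tilde X^i_{2j-1}$ and $\tilde X^i_{2j}$. By downward induction, every $\tilde X^i$ is trivial.

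Next I treat the rows. For the $i$-th MPS row, all the $\tilde X$ factors are now trivial, so the conditions reduce to those of $\delta_2$ in \cref{def:actionMPSphase}. I then argue as in \cref{thmfreeactionMPS2}, starting from the last isometry. The condition $[V^i_k X^{i\,\dagger}_k] = [V^i_k]$ gives $V^i_k X^{i\,\dagger}_k = \lambda V^i_k$ for some phase $\lambda$. Multiplying on the left by $V^{i\,\dagger}_k$ and using $V^{i\,\dagger}_k V^i_k = \mathds{1}$ yields $X^i_k = \bar\lambda \mathds{1}$, i.e.\ $[X^i_k] = [\mathds{1}]$. Proceeding leftwards in the same way gives $[X^i_j] = [\mathds{1}]$ for all $j$.

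Together these show the stabiliser is trivial, so the action is free, and \cref{existenceRiemannianSubmersionMetrics} yields the manifold $B_{\textup{seq}}$ with a metric $h_{\textup{seq}}$ making the projection a Riemannian submersion.
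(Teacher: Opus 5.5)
Your proof is correct and follows the paper's route. You establish smoothness and isometry, then prove freeness by trivialising each column gauge $\tilde X^i$ from the top unitary $W^i_{n-1}$ downwards, then reduce each row to the MPS argument of \cref{thmfreeactionMPS2}, and finally invoke \cref{existenceRiemannianSubmersionMetrics}. Your explicit note that properness follows from compactness of the group fills in a point the paper leaves implicit.

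One justification needs correcting. You say $g_{\textup{st}}$ is the restriction of the Frobenius inner product, but the paper defines $g_{\textup{st}}$ as the quotient metric that makes $(\textup{U}(dD), g_{\textup{bi}}) \to \textup{V}_D(\bb{C}^{dD})$ a Riemannian submersion, and this is a different metric. Write a tangent vector at $Y$ as $\Delta = Y\Omega + Y_\perp B$, with $\Omega$ skew-Hermitian and $Y_\perp$ completing $Y$ to a unitary. Its horizontal lift has squared norm $\|\Omega\|^2 + 2\|B\|^2$, so $g_{\textup{st}}(\Delta,\Delta) = \textup{Re}\Tr(\Delta^\dagger(2\mathds{1} - YY^\dagger)\Delta)$, not $\|\Delta\|^2$.

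The invariance you need still holds, so the gap is easily closed in either of two ways:
\begin{itemize}
\item Read it off directly from the formula above.
\item In the spirit of the paper, note that $Y \mapsto WY$ and $Y \mapsto YQ$ are induced by the $g_{\textup{bi}}$-isometries $U \mapsto WU$ and $U \mapsto U\,\mathrm{diag}(Q,\mathds{1})$ of $\textup{U}(dD)$. These commute with the projection, so \cref{projectiveIsometry} applies.
\end{itemize}
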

\begin{proof}
Again, this action is smooth and isometric. In order to show that it is free, consider some $X \in \textup{PU}(D)^{2(n-1) + kn} \times \textup{PU}(d)^{2k(n-1)}$ and assume that there exists a tuple $(U, V, W) \in N_{\textup{seq}}$ satisfying
\[
(U, V, W) = (XU, XV, XW).
\]
In particular, for every $i \in \{1, \dotsc, k+1\}$
\begin{multline}
\label{igualdadclases}
([W^i_1], \dotsc, [W^i_{n-1}]) = ([(\mathds{1} \otimes \tilde{X}^i_3) W^i_1 ((\tilde{X}^i_1)^\dagger \otimes (\tilde{X}^i_2)^\dagger)], \dotsc,\\ [(\mathds{1} \otimes \tilde{X}^i_{2(n-2) + 1}) W^i_{n-2} ((\tilde{X}^i_{2(n-2)-1})^\dagger \otimes (\tilde{X}^i_{2(n-2)})^\dagger)], [W^i_{n-1} ((\tilde{X}^i_{2(n-2) + 1})^\dagger \otimes (\tilde{X}^i_{2(n-1)})^\dagger)]).
\end{multline}
Thus, looking at the rightmost coordinate, the above equation implies that, for every $i \in \{1, \dotsc, k+1\}$
\[
[W^i_{n-1}] = [W^i_{n-1} ((\tilde{X}^i_{2(n-2) + 1})^\dagger \otimes (\tilde{X}^i_{2(n-1)})^\dagger)],
\]
which can be rewritten as 
\[
[\mathds{1}] = [((\tilde{X}^i_{2(n-2) + 1})^\dagger \otimes (\tilde{X}^i_{2(n-1)})^\dagger)].
\]
Inserting this into \cref{igualdadclases} and looking at the next rightmost coordinate, we can continue the same reasoning to conclude that $[\tilde{X}^i_j] = [\mathds{1}]$ for every $i \in \{1, \dotsc, k+1\}$, $j \in \{1, \dotsc, n-1\}$. Using these equalities and the assumption that $XV = V$, we now know that
\[
([V^i_1], \dotsc, [V^i_k]) = ([(\mathds{1} \otimes X^i_2)V^i_1 (X^i_1)^\dagger],\dotsc, [(\mathds{1} \otimes X^i_k) V^i_{k-1} (X^i_{k-1})^\dagger], [V^i_k (X^i_k)^\dagger]),
\]
for every $i \in \{1, \dotsc, n\}$. Following the same ideas as in the proof of \cref{thmfreeactionMPS}, we can conclude that $X^i_j = [\mathds{1}]$ for every $i \in \{1, \dotsc, n\}$, $j \in \{1, \dotsc, k\}$, finishing the proof.
\end{proof}

\section{Isometric PEPS}
\label{sec:IsomPEPS}

Lastly, let us study the tensor network family of isometric PEPS \cite{wei2022isometric} on an $L \times (L+1)$ lattice with physical dimension $d$ and bond dimension $D$. Every tensor network of this family is constructed by sequentially applying isometries on a two-dimensional lattice. The order in which the isometries are applied is fixed, and corresponds to the radial preparation considered in \cite{wei2022isometric}. For example, the tensor network family of isometric PEPS on a $4 \times 5$ lattice consists of those tensor networks of the form
\begin{equation}
\label{eq:IsomPEPS}
\input{TikzFigures/isompeps.tikz},
\end{equation}
where the different tensors have been given numeric labels so that they are clearly identified. 

Note that, as in the case of sequentially generated states and MPS states, we also consider open boundary conditions and promote the boundary legs of the state to physical legs. The tensor network manifold associated with this family is 
\begin{align*}
M_{\textup{iso}} &:= \bb{S}^{2D^4-1} \times \textup{V}_D(\bb{C}^{D^3})^{\times 2L-3} \times \textup{V}_{D^2}(\bb{C}^{dD^2})^{\times (L-2)^2} 
\\&\quad \times \textup{V}_{D^2}(\bb{C}^{d^2D})^{L-2} \times\textup{V}_D(\bb{C}^{dD})  \times \textup{U}(D^2)^{\times L-1},
\end{align*}
which we endow with its \textit{natural metric} $g_{\textup{iso}} := g_{\textup{round}} \oplus g_{st}^{\oplus L(L-1)} \oplus g_{\textup{bi}}^{\oplus L-1}$.

\begin{theorem}
\label{thm:mainresultIsomPeps}
Let $M_{\textup{iso}}$ be the tensor network manifold associated with isometric PEPS on a $L \times (L+1)$ lattice with bond dimension $D$ and physical dimension $d$. When endowed with its natural metric $g_{\textup{iso}}$, it defines a quotient tensor network manifold $(B_{\textup{iso}}, h_{\textup{iso}})$ such that, for every $x, y$ on an open and dense subset of $M_{\textup{iso}}$, it holds that 
\[
\exists \lambda \in \textup{U}(1)\ \text{s.t.}\ \mathds{TN}_x = \lambda \mathds{TN}_y \iff [x]_{B_{\textup{iso}}} = [x]_{B_{\textup{iso}}}. 
\]
\end{theorem}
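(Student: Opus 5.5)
The proof will follow the same two-step template as \cref{thm:quotSeqPEPS}: first a gauge characterisation up to a phase, then a free, smooth and isometric action on a phase-agnostic manifold.

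\textbf{Gauge theorem.} The first step is an analogue of \cref{thmgauge3} for isometric PEPS. It should state that if two networks $x,y$ of the family satisfy $\mathds{TN}_x = e^{i\theta}\mathds{TN}_y$, then their tensors are related, up to phases, by multiplication by unitaries and their inverses over every contracted edge. This requires a genericity hypothesis on $x$:
\begin{itemize}
    \item the central state in $\bb{S}^{2D^4-1}$ is full-rank across each relevant bipartition;
    \item every isometry is surjective when seen as a map from its inbound legs (or from suitable groupings of inbound and physical legs) to each of its outbound virtual legs, in the sense of \cref{eq:fullrankseqpeps}.
\end{itemize}
Each of these is a non-vanishing condition on a determinant of some minor, so the set where they hold is open and dense in $M_{\textup{iso}}$. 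This plays the role of \cref{rmk:generic1}.

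\textbf{Proof of the gauge theorem.} The argument peels the network from the outside in, following the reverse of the radial preparation order, much as in the MPS and sequential proofs.
\begin{itemize}
    \item For each layer, trace out all physical legs outside a region and apply \cref{purificationsym} to obtain a unitary $Y$ relating the two remaining partial networks.
    \item Use the surjectivity assumptions to cancel the inner tensors, so that $Y$ is forced to map product states to product states across the bonds entering that layer.
    \item Invoke \cref{productstoproducts} to write $Y = P_\pi(Y_1\otimes\dotsm\otimes Y_m)$, and use \cref{productounitarias} to take the factors unitary.
    \item Exclude the nontrivial permutations by the contradiction argument used after \cref{eq:possiblescenarios}: pick test vectors given by surjectivity and compare the ranks of the resulting linear maps.
\end{itemize}
Repeating this for each shell of isometries, and then for the column of $\textup{U}(D^2)$ gates (handled as the unitary columns in \cref{thmgauge3}), yields the edge gauge. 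The converse implication is immediate, since inserting $XX^\dagger$ on contracted edges leaves $\mathds{TN}$ unchanged up to phase.

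\textbf{Main obstacle.} I expect the hardest step to be the permutation exclusion and bookkeeping at the corners of the radial order. There, isometries with different shapes ($\textup{V}_D(\bb{C}^{D^3})$, $\textup{V}_{D^2}(\bb{C}^{d^2D})$, $\textup{V}_D(\bb{C}^{dD})$) meet, and the legs being permuted may have different dimensions ($d$ versus $D$). Where dimensions differ, a permutation is impossible outright. Where they coincide, I would first try the rank/surjectivity contradiction verbatim.

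\textbf{Quotient structure.} Next, set
\[
N_{\textup{iso}} := \bb{CP}^{D^4-1}\times \textup{PV}_D(\bb{C}^{D^3})^{\times 2L-3}\times\textup{PV}_{D^2}(\bb{C}^{dD^2})^{\times (L-2)^2}\times\textup{PV}_{D^2}(\bb{C}^{d^2D})^{\times L-2}\times\textup{PV}_D(\bb{C}^{dD})\times\textup{PU}(D^2)^{\times L-1},
\]
with its natural metric, so that $M_{\textup{iso}}\to N_{\textup{iso}}$ is a Riemannian submersion. On $N_{\textup{iso}}$, define the action of a product of copies of $\textup{PU}(D)$ and $\textup{PU}(d)$, one factor per contracted edge, by multiplying each tensor by the edge unitaries and their adjoints as dictated by the arrows.
\begin{itemize}
    \item \emph{Smoothness:} the action is given by matrix multiplication.
    \item \emph{Isometry:} this follows from \cref{prop:isomsphere,prop:isomunitaries} together with \cref{projectiveIsometry}.
    \item \emph{Freeness:} argue as in \cref{thmfreeactionMPS2,prop:theoremaisometric}. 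Start from the tensors last in the preparation order, which have only inbound virtual edges. There $[VX^\dagger]=[V]$ with $V^\dagger V=\mathds{1}$ forces $[X]=[\mathds{1}]$, and one propagates inward.
\end{itemize}
The existence of the quotient metric $h_{\textup{iso}}$ then follows from \cref{existenceRiemannianSubmersionMetrics}.

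\textbf{Conclusion.} Composing the two Riemannian submersions $M_{\textup{iso}}\to N_{\textup{iso}}\to B_{\textup{iso}}$ and combining with the gauge theorem on the open dense generic set gives the stated equivalence, exactly as in the proof of \cref{thm:quotTN1DFI}.
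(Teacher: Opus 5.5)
Your quotient-structure half is fine and is essentially the paper's argument: freeness by starting from the last-prepared tensors and using $V^\dagger V=\mathds{1}$, isometry via \cref{prop:isomsphere,prop:isomunitaries,projectiveIsometry}, then \cref{existenceRiemannianSubmersionMetrics} and composition with $M_{\textup{iso}}\to N_{\textup{iso}}$. The gap is in the gauge theorem, at the step where the surjectivity assumptions are supposed to force the purification unitary to map product states to product states. In \cref{thmgauge3} that step worked because undoing the unitary columns leaves the MPS rows completely decoupled; an isometric PEPS has no such structure. After the first purification step you have $R_U X\propto R_V$, where $X\in\textup{U}(D^2)$ acts on the two bonds $a,b$ leaving the corner state and $R_U,R_V$ are the isometric remainders of the two networks. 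Since $a$ and $b$ both feed into the diagonal bulk tensor one step later, nothing in your hypotheses forces $X$ to factor. Concretely, take the bottom-row tensor next to the corner to be $\alpha\mapsto D^{-1/2}\sum_k\ket{k}_p\ket{k}_r\otimes\alpha_u$: its physical leg is maximally entangled with the right bond, and $a$ is passed straight up. Take the first-column tensor next to the corner to be $\beta\mapsto D^{-1/2}\sum_k\ket{k}_{p'}\ket{k}_{u'}\otimes\beta_{r'}$, and all other tensors generic. These are isometries that are full rank in every matricization you list, and the corner state $\Lambda$ can be full rank. But $a$ and $b$ reach the diagonal bulk tensor $T$ unchanged. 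Hence the replacement $\Lambda\mapsto(\mathds{1}\otimes Y)\Lambda$, $T\mapsto T\,Y^\dagger$ (with $Y^\dagger$ acting on the inputs $u\otimes r'\cong a\otimes b$) leaves $\mathds{TN}$ invariant for every $Y\in\textup{U}(D^2)$. For entangling $Y$ this is not an edge gauge, since full rank of $\Lambda$ would force $Y\propto Y_a\otimes Y_b$. So the conclusion fails at points satisfying your hypotheses, and no amount of permutation bookkeeping rescues the products-to-products step.

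What is missing is a density-level genericity condition, item~6 of \cref{thm:gaugeIsom} (\cref{eq:fullrankdensities}). It requires the map $M\mapsto U^\dagger(M\otimes\mathds{1})U$, from operators on the physical leg to operators on the input bond, to be full rank; in the example above it is $M\mapsto\frac{1}{D}\Tr(M)\mathds{1}$. This condition feeds the paper's new tool \cref{lem:reducedmatrixdecomposition}, which replaces \cref{productstoproducts} here. The paper argues in preparation order. Purification (\cref{purificationsym}) gives $X$ on the corner bonds, and full rank of the corner state removes it. Contracting both networks against their adjoints over everything except the physical leg of the next bottom-row tensor, which sees only one of the two bonds, then yields an identity of reduced density matrices. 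Testing that identity against all matrices via item~6 shows that the blocks of $X=\sum_{ij}X_{ij}\otimes\ket{i}\bra{j}$ satisfy $X_{ki}^\dagger X_{lj}\propto\mathds{1}$. They are therefore all multiples of a single unitary, and $X=X_a\otimes X_b$ by \cref{productounitarias}. This is a no-signalling argument rather than a products-to-products one. It yields the factorization directly, with no permutation to exclude. It is then iterated tensor by tensor, using items 1--5 to cancel the tensors already processed, up to the top row, where the unitaries need no extra assumption.
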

As in all previous cases, the proof follows from the following two subsections.
\subsection{Characterising the gauge}

Let us first characterise the gauge freedom of the tensors in the family of isometric PEPS. 

\begin{notation}
We will often represent the tensors of a tensor network of the family as rectangles. To do so, we will always use the following convention:
\[
\begin{tikzpicture}[scale=1.2]
    \filldraw[fill=Cyan!40] (0.499, 0.302) -- (0.504, 0.706);
    \filldraw[fill=Cyan!40] (1.078, 0.715) -- (1.083, 1.12);
    \filldraw[fill=Cyan!40] (0, 1.476) -- (0.284, 1.713) -- (1.413, 1.139) -- (0.568, 0.43) -- (0.179, 0.624) -- (0.75, 1.099) -- (0, 1.476);
    \filldraw[fill=Cyan!40] (1.08, 1.09) -- (1.084, 1.495);
    \filldraw[fill=Cyan!40] (0.25, 1.489) -- (0.255, 1.894);
    \filldraw[fill=Cyan!40] (0.505, 0.611) -- (0.511, 1.259);
    \node[anchor=center] at (0.51, 0.116) {$i$};
    \node[anchor=center] at (1.074, 0.469) {$j$};
    \node[anchor=center] at (0.51, 1.421) {$k$};
    \node[anchor=center] at (1.109, 1.703) {$l$};
    \node[anchor=center] at (0.263, 2.021) {$m$};
    \node[anchor=center] at (1.786, 1.142) {$\equiv$};
    \draw (2.188, 1.423) rectangle (3.881, 0.858);
    \draw (2.47, 1.705) -- (2.47, 1.423);
    \draw (3.034, 1.705) -- (3.034, 1.423);
    \draw (2.47, 0.858) -- (2.47, 0.435);
    \draw (3.599, 1.705) -- (3.599, 1.423);
    \draw (3.043, 0.862) -- (3.043, 0.438);
    \node[anchor=center] at (2.475, 0.259) {$i$};
    \node[anchor=center] at (3.052, 0.249) {$j$};
    \node[anchor=center] at (2.472, 1.905) {$k$};
    \node[anchor=center] at (3.026, 1.904) {$l$};
    \node[anchor=center] at (3.609, 1.869) {$m$};
\end{tikzpicture},
\]
where we labelled the legs to make the correspondence between the two representations more explicit. 
\end{notation}

\begin{theorem}
\label{thm:gaugeIsom}
Consider the tensor network family of isometric PEPS on an $L \times (L+1)$ lattice, as shown in \cref{eq:IsomPEPS}. Consider two tensor networks of the family representing the states $\ket{U}$ and $\ket{V}$. Assume that the tensors of the tensor network associated with $\ket{U}$ satisfy the following:
\begin{enumerate}
    \item The state at the bottom left of the tensor network (with number $1$ in \cref{eq:IsomPEPS}) is full-rank when seen as a linear map $\bb{C}^{D^2} \to \bb{C}^{D} \times \bb{C}^{D}$, 
    \begin{equation}
    \label{eq:fullrankbottomstate}
    \begin{tikzpicture}[scale=1.5]
        \draw (0, 1.129) rectangle (1.693, 0.564);
        \draw [->] (0.282, 1.411) -- (0.282, 1.129);
        \draw [<-] (0.847, 1.411) -- (0.847, 1.129);
        \draw [<-] (1.411, 1.411) -- (1.411, 1.129);
        \node[anchor=center] at (0.282, 1.693) {$\mathbb{C}^{D^2}$};
        \node[anchor=center] at (0.847, 1.693) {$\mathbb{C}^D$};
        \node[anchor=center] at (1.411, 1.693) {$\mathbb{C}^D$};
    \end{tikzpicture}.
    \end{equation}

    \item The isometries in the first row of the tensor network (with number $2$ in \cref{eq:IsomPEPS}) are full-rank when seen as linear maps $\bb{C}^{D} \times \bb{C}^{D} \to \bb{C}^{D} \times \bb{C}^{D}$ from the spaces corresponding to their leftmost legs to those corresponding to their rightmost ones, 
    \begin{equation}
    \label{eq:fullrankfirstrow}
    \begin{tikzpicture}[scale=1.5]
        \draw (0, 1.278) rectangle (1.693, 0.714);
        \draw [->] (0.282, 1.56) -- (0.282, 1.278);
        \draw [<-] (0.847, 1.56) -- (0.847, 1.278);
        \draw [<-] (1.411, 1.56) -- (1.411, 1.278);
        \node[anchor=center] at (0.282, 0.149) {$\mathbb{C}^D$};
        \node[anchor=center] at (0.847, 1.843) {$\mathbb{C}^D$};
        \node[anchor=center] at (1.411, 1.843) {$\mathbb{C}^D$};
        \node[anchor=center] at (0.282, 1.843) {$\mathbb{C}^D$};
        \draw [<-] (0.282, 0.714) -- (0.282, 0.431);
    \end{tikzpicture}.
    \end{equation}

    \item The isometries in the first column of the tensor network (with number $2$ in \cref{eq:IsomPEPS}) are full-rank when seen as linear maps $\bb{C}^{D} \times \bb{C}^{D} \to \bb{C}^{D} \times \bb{C}^{D}$ from the spaces corresponding to their leftmost and bottom legs to those corresponding to their rightmost ones,
    \begin{equation}
    \label{eq:fullrankfirstcolumn}
    \begin{tikzpicture}[scale=1.5]
        \draw (0, 1.278) rectangle (1.693, 0.714);
        \draw [<-] (0.847, 0.714) -- (0.847, 0.431);
        \node[anchor=center] at (0.847, 0.149) {$\mathbb{C}^D$};
        \node[anchor=center] at (0.847, 1.843) {$\mathbb{C}^D$};
        \node[anchor=center] at (1.411, 1.843) {$\mathbb{C}^D$};
        \node[anchor=center] at (0.282, 1.843) {$\mathbb{C}^D$};
        \draw [->] (0.282, 1.56) -- (0.282, 1.278);
        \draw [<-] (0.847, 1.56) -- (0.847, 1.278);
        \draw [<-] (1.411, 1.56) -- (1.411, 1.278);
    \end{tikzpicture}.
    \end{equation}

    \item The bulk isometries and those in the last column of the tensor network (with numbers $3$ and $4$ in \cref{eq:IsomPEPS}) are full-rank when seen as linear maps $\bb{C}^{d} \times \bb{C}^{D} \times \bb{C}^{D} \to \bb{C}^{D} \times \bb{C}^{D}$ from the spaces corresponding to their leftmost and bottom legs, to those corresponding to their rightmost ones,  
    \begin{equation}
    \label{eq:fullrankbulk}
    \begin{tikzpicture}[scale=1.5]
        \draw (0, 1.278) rectangle (1.693, 0.714);
        \draw [<-] (0.847, 0.714) -- (0.847, 0.431);
        \node[anchor=center] at (0.847, 0.149) {$\mathbb{C}^D$};
        \node[anchor=center] at (0.847, 1.843) {$\mathbb{C}^D$};
        \node[anchor=center] at (1.411, 1.843) {$\mathbb{C}^D$};
        \node[anchor=center] at (0.282, 1.843) {$\mathbb{C}^d$};
        \draw [->] (0.282, 1.56) -- (0.282, 1.278);
        \draw [<-] (0.847, 1.56) -- (0.847, 1.278);
        \draw [<-] (1.411, 1.56) -- (1.411, 1.278);
        \draw [<-] (0.282, 0.714) -- (0.282, 0.431);
        \node[anchor=center] at (0.282, 0.149) {$\mathbb{C}^D$};
    \end{tikzpicture}.
    \end{equation}

    \item The leftmost isometry in the top row of the tensor network (with number $5$ in \cref{eq:IsomPEPS}) is full-rank when seen as a linear map $\bb{C}^{d} \times \bb{C}^{D} \to \bb{C}^{D}$ from the spaces corresponding to its leftmost and bottom legs, to that corresponding to its rightmost leg, 
    \begin{equation}
    \label{eq:fullranktopleft}
    \begin{tikzpicture}[scale=1.5]
        \draw (0, 1.277) rectangle (1.693, 0.712);
        \draw [->] (0.282, 1.559) -- (0.282, 1.277);
        \draw [<-] (1.411, 1.559) -- (1.411, 1.277);
        \node[anchor=center] at (1.411, 0.148) {$\mathbb{C}^D$};
        \node[anchor=center] at (1.411, 1.841) {$\mathbb{C}^D$};
        \node[anchor=center] at (0.282, 1.841) {$\mathbb{C}^d$};
        \draw [<-] (1.411, 0.712) -- (1.411, 0.43);
    \end{tikzpicture}.
    \end{equation}

    \item  The isometries $U_i$ in the bottom row and the rest of the isometries $V_i$ of the tensor network (except for the leftmost one in each row) are full-rank in the following sense
    \begin{equation}
    \label{eq:fullrankdensities}
    \begin{tikzpicture}[scale=1]
        \draw (0, 1.488) rectangle (1.693, 0.924);
        \draw (0, 3.605) rectangle (1.693, 3.041);
        \draw[<-] (0.282, 3.041) -- (0.282, 2.758);
        \draw[->] (0.282, 1.771) -- (0.282, 1.488);
        \draw (1.411, 3.041) -- (1.411, 1.488);
        \draw[<-] (0.282, 4.028) -- (0.282, 3.605);
        \draw[->] (0.282, 0.924) -- (0.282, 0.501);
        \draw (4.516, 1.488) rectangle (6.209, 0.924);
        \draw (4.516, 3.605) rectangle (6.209, 3.041);
        \draw[<-] (4.798, 3.041) -- (4.798, 2.758);
        \draw[->] (4.798, 1.771) -- (4.798, 1.488);
        \draw (5.927, 3.041) -- (5.927, 1.488);
        \draw[<-] (4.798, 4.028) -- (4.798, 3.605);
        \draw[->] (4.798, 0.924) -- (4.798, 0.501);
        \draw (0.847, 3.041) -- (0.847, 1.488);
        \draw (5.362, 3.041) -- (5.362, 1.488);
        \draw[->] (5.362, 4.028) -- (5.362, 3.605);
        \draw[<-] (5.362, 0.924) -- (5.362, 0.501);
        \node[anchor=center] at (0.353, 4.381) {$\mathbb{C}^D$};
        \node[anchor=center] at (0.353, 0.148) {$\mathbb{C}^D$};
        \node[anchor=center] at (0.353, 2.053) {$\mathbb{C}^D$};
        \node[anchor=center] at (0.353, 2.476) {$\mathbb{C}^D$};
        \node[anchor=center] at (4.868, 0.148) {$\mathbb{C}^D$};
        \node[anchor=center] at (5.433, 0.148) {$\mathbb{C}^D$};
        \node[anchor=center] at (4.868, 4.381) {$\mathbb{C}^D$};
        \node[anchor=center] at (5.433, 4.381) {$\mathbb{C}^D$};
        \node[anchor=center] at (4.868, 2.476) {$\mathbb{C}^d$};
        \node[anchor=center] at (4.868, 2.053) {$\mathbb{C}^d$};
        \node[anchor=center] at (3.104, 2.335) {and};
        \node[anchor=center] at (0.847, 1.206) {$U_i$};
        \node[anchor=center] at (0.847, 3.323) {$U_i^\dagger$};
        \node[anchor=center] at (5.362, 3.323) {$V_i^\dagger$};
        \node[anchor=center] at (5.362, 1.206) {$V_i$};
    \end{tikzpicture}
    \end{equation}
\end{enumerate}
Assume that there exists some $\theta \in [0, 2\pi)$ for which $\ket{U} = e^{i\theta} \ket{V}$. Then, the tensors of the tensor networks associated with $\ket{U}$ and $\ket{V}$ are related, up to a phase, by the multiplication by unitaries and their inverses over the contracting edges. 
\end{theorem}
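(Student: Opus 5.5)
I would follow the strategy used for matrix product states (\cref{gaugeMPSthm}) and for sequentially generated states (\cref{thmgauge3}). The plan is to peel off the isometric PEPS one tensor at a time, following the radial preparation order of \cref{eq:IsomPEPS}. Each step combines a partial trace, \cref{purificationsym}, and the full-rank hypotheses of the statement. The isometric structure makes this possible: each tensor is an isometry $V$ with $V^\dagger V = \mathds{1}$. Contracting $\ket{U}$ with its conjugate over every physical leg generated \emph{after} a given tensor therefore collapses all later tensors to identities. The reduced density matrix on the legs produced up to that tensor is then the "partial" network cut at that stage.

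\textbf{Base step.} I take the partial trace of $\ket{U} = e^{i\theta}\ket{V}$ over everything except the outputs of the bottom-left state (tensor $1$). Both sides become purifications of the same reduced state. Applying \cref{purificationsym}, I obtain $\Lambda^{(1)} \propto (X \otimes Y)\Sigma^{(1)}$, possibly after a unitary $W$ on the $\bb{C}^{D^2}$ leg; here $\Lambda^{(1)}$ and $\Sigma^{(1)}$ are the bottom-left states of the two networks. Substituting back into the equality and using the invertibility assumption \cref{eq:fullrankbottomstate}, I cancel the bottom state from both sides.

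\textbf{Inductive step.} I proceed through the first row (tensors $2$ via \cref{eq:fullrankfirstrow}), then up the first column (\cref{eq:fullrankfirstcolumn}), then through the bulk and last column (\cref{eq:fullrankbulk}), and finally the top row (\cref{eq:fullranktopleft}). At each stage I trace out every leg emitted by tensors not yet processed. By the isometry property, the two sides give purifications of the same reduced state on the boundary of the processed region. \cref{purificationsym} then yields a unitary on the newly exposed bond legs relating the current tensors, up to a phase. Next, the surjectivity assumptions let me "invert" the tensor just processed and shrink the network by one tensor, exactly as done with $V_1$ in the two-site MPS argument. The density-type condition \cref{eq:fullrankdensities} on the $U_i$ and $V_i$ is what I would use to guarantee that the reduced states at each cut are supported on the full bond space. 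This makes the unitary from \cref{purificationsym} unique and forces it to act only on the legs crossing the cut.

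\textbf{Main obstacle.} The unitaries given by \cref{purificationsym} act on the whole boundary (for example on $\bb{C}^D\otimes\bb{C}^D$ or $\bb{C}^{D^2}$). A priori they need not factor into a product of single-edge unitaries, as the gauge statement requires. For this I would imitate the argument for sequentially generated states. First I show that the relating unitary maps product states to product states, using the surjectivity of the neighbouring isometries to realise arbitrary product inputs. Then I apply \cref{productstoproducts} to write it as $P_\pi(X\otimes Y)$. Finally I rule out the SWAP case with the rank contradiction used at the end of \cref{thmgauge3}. There, a SWAP would force a composite of isometries to be a non-surjective map, contradicting \cref{eq:fullrankbulk} or \cref{eq:fullrankdensities}. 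I would then use \cref{productounitarias} to normalise the factors to unitaries and collect the accumulated phases into the global "up to a phase" statement. The bookkeeping of which legs are exposed at each cut in the radial ordering is tedious but routine. The factorisation and SWAP exclusion at corners and boundaries, where a tensor has legs of mixed dimensions $d$ and $D$, is where I expect the real difficulty.
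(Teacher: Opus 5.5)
Your outer loop matches the paper's. You sweep bottom-up, collapse later isometries by contracting against the adjoint, use \cref{purificationsym} to get a unitary on the bonds leaving the processed region, and cancel the processed tensor with the surjectivity conditions \cref{eq:fullrankbottomstate,eq:fullrankfirstrow,eq:fullrankfirstcolumn,eq:fullrankbulk,eq:fullranktopleft}. The gap is the step you flag yourself: splitting that unitary into single-edge factors. The mechanism you propose for it does not apply here.

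Already in the base step, \cref{purificationsym} gives one $X\in\textup{U}(D^2)$ acting jointly on the two bonds $a$ (into the first-row isometry) and $b$ (into the first-column isometry). Writing $X\otimes Y$ at that point assumes the conclusion, and there is no freedom on the physical $\bb{C}^{D^2}$ leg. In \cref{thmgauge3} the products-to-products argument worked because, after stripping the column unitaries, one side of the identity was a tensor product of two independent MPS rows. Surjective isometries \emph{upstream} of $C_i$ could therefore feed it arbitrary product inputs, whose images were then forced to be products. Here the only thing upstream of $X$ is the entangled state $1$, which you have just cancelled. Downstream, $a$ and $b$ enter tensors whose outputs are recombined by the bulk. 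Write the remaining networks as maps $N_U,N_V$ from $a\otimes b$ to the open legs, so that $N_U X\propto N_V$. Neither side has any product structure across the cut between $a$ and $b$. Nothing forces $X(\ket{\phi}\otimes\ket{\psi})$ to be a product, so \cref{productstoproducts} and the SWAP exclusion never get started.

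The missing idea is a one-sided no-signalling argument, and that is the role of \cref{eq:fullrankdensities} in the paper. It is not there to make the purifying unitary unique; uniqueness on the support would not make it act edge by edge anyway. Contract both sides of $N_U X\propto N_V$ against their adjoints over every open leg except the leftmost open leg of the first-row isometry ($T$, resp.\ $\tilde T$). Everything downstream collapses, and so does the first-column isometry, leaving
\[
\Phi_T\big(\mathrm{Tr}_b(X\rho X^\dagger)\big)=\Phi_{\tilde T}\big(\mathrm{Tr}_b\,\rho\big)\quad\text{for all operators }\rho\text{ on }a\otimes b,
\]
where $\Phi_T(\sigma)=\mathrm{Tr}_{\mathrm{rest}}(T\sigma T^\dagger)$. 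Condition \cref{eq:fullrankdensities} says exactly that $\Phi_T$ is invertible. Hence $\mathrm{Tr}_b(X\rho X^\dagger)$ depends only on $\mathrm{Tr}_b\rho$, i.e.\ $X^\dagger(O_a\otimes\mathds{1}_b)X=O'_a\otimes\mathds{1}_b$ for every $O_a$. This is the relation $U_{ki}^\dagger U_{lj}=\alpha_{ijkl}\mathds{1}$ in the proof of \cref{lem:reducedmatrixdecomposition}. That lemma then yields $X=X_1\otimes X_2$ with the legs already matched, so there is no permutation to rule out. The same step splits every later joint unitary ($Y$, $C$, \dots) using \cref{eq:fullrankdensities} for the next non-leftmost isometry in the row. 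Only one neighbour needs the hypothesis, consistent with the statement exempting the leftmost isometry of each row. A symmetric route through \cref{productstoproducts} would naturally need an invertibility condition on the first-column isometries too, and that is not assumed.
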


Before we present the proof of \cref{thm:gaugeIsom}, let us state and prove an auxiliary result that will allow us to decompose some unitaries into a tensor product. This will be used frequently in the following pages.
\begin{lemma}
\label{lem:reducedmatrixdecomposition}
Let $U \in \textup{U}(n^2)$ be a unitary matrix and let $V, W$ be two linear maps from $\bb{C}^n$ to $\bb{C}^{n \times  k}$, 
\[
\begin{tikzpicture}[scale=1]
    \draw[<-] (0.282, 0.968) -- (0.282, 0.403);
    \draw (0, 1.532) rectangle (1.129, 0.968);
    \draw[<-] (0.282, 2.097) -- (0.282, 1.532);
    \draw[<-] (0.847, 2.097) -- (0.847, 1.532);
    \node[anchor=center] at (0.282, 0.121) {$\mathbb{C}^n$};
    \node[anchor=center] at (0.282, 2.379) {$\mathbb{C}^n$};
    \node[anchor=center] at (0.847, 2.379) {$\mathbb{C}^k$};
\end{tikzpicture}
\]
such that 
\begin{equation}
\label{igualdadisom}
\input{TikzFigures/lemmaisom1.tikz}.
\end{equation}
Assume that $V$ is such that for any matrix $M \in \textup{M}_n(\bb{C})$, there exists some $N \in \textup{M}_n(\bb{C})$ such that 
\begin{equation}
\label{eq:assumptionfullranklem}
\input{TikzFigures/lemmaisom8.tikz}.
\end{equation}
Then, $U$ is of the form $U = U_1 \otimes U_2$, for some $U_1, U_2 \in \textup{U}(n)$. 
\end{lemma}

\begin{remark}
Note that the assumption on $V$ written in \cref{eq:assumptionfullranklem} is equivalent to that shown in \cref{eq:fullrankdensities}. Also observe that the same result holds if we assume that \cref{eq:assumptionfullranklem} holds for $W$ instead of $V$. 
\end{remark}

\begin{proof}[Proof of \cref{lem:reducedmatrixdecomposition}]
First, inserting the assumption from \cref{eq:assumptionfullranklem} into \cref{igualdadisom} we conclude that for any $\ket{i},\ket{j}, \ket{k}, \ket{l} \in \bb{C}^n$, there exists some $X \in \textup{M}_n(\bb{C})$ such that
\begin{equation}
\label{igualdadisom2}
\input{TikzFigures/lemmaisom2.tikz}, 
\end{equation}
where $X$ depends on $\ket{i}$ and $\ket{j}$. Denoting 
\[
\input{TikzFigures/lemmaisom3.tikz},
\]
we can decompose $U$ as 
\[
U = \sum_{ij = 1}^n U_{ij} \otimes \ket{i}\bra{j}.    
\]
Now, \cref{igualdadisom2} implies that for every $i, j, k, l \in \{1, \dotsc, n\}$, there exists some $\alpha_{ijkl} \in \bb{C}$ such that 
\begin{equation}
\label{eq:relationUs}
U^\dagger_{ki} U_{lj} = \alpha_{ijkl}\mathds{1}. 
\end{equation}
In particular, for every $i, j \in \{1, \dotsc, n\}$, 
\[
U_{ij}^\dagger U_{ij} = \alpha_{jjii} \mathds{1}, 
\]
where $\alpha_{jjii} \geq 0$ as $U_{ij}^\dagger U_{ij}$ is Hermitian and positive-semidefinite. Therefore, for every $i, j \in \{1, \dotsc, n\}$, $U_{ij}$ is of the form 
\begin{equation}
\label{eq:expressionUunitaria}
U_{ij} = \sqrt{\alpha_{jjii}}\, V_{ij}, 
\end{equation}
for some unitary matrix $V_{ij} \in \textup{U}(n)$. Putting \cref{eq:relationUs,eq:expressionUunitaria} together we conclude that for every $i, j, k, l \in \{1, \dotsc, n\}$,
\[
\alpha_{ijkl} \mathds{1} = U_{ki}^\dagger U_{lj} = \sqrt{\alpha_{iikk} \alpha_{jjll}}\, V^\dagger_{ki} V_{lj}, 
\]
and so for every $i, j, k, l \in \{1, \dotsc, n\}$ it holds that
\[
\alpha_{ijkl} V_{ki} = \sqrt{\alpha_{iikk} \alpha_{jjll}}\, V_{lj},
\]
which means that the unitaries $V_{ij}$ only differ by a global phase, which allows us to rewrite \cref{eq:expressionUunitaria} as 
\[
U_{ij} = \beta_{ij}V, 
\]
for every $i, j \in \{1, \dotsc, n\}$, where $V \in \textup{U}(n)$ is some fixed unitary and $\beta_{ij} \in \bb{C}$, $|\beta_{ij}| = 1$. Thus,
\[
U = \sum_{ij = 1}^n U_{ij} \otimes \ket{i}\bra{j}=  V \otimes \sum_{ij = 1}^n \beta_{ij}\ket{i}\bra{j}.
\]
Lastly, using \cref{productounitarias} we conclude that $\sum_{ij = 1}^n \beta_{ij}\ket{i}\bra{j}$ is also unitary, finishing the proof. 
\end{proof}

\begin{proof}[Proof of \cref{thm:gaugeIsom}]
Consider two tensor networks of the family representing the same state up to a phase,
\begin{equation}
\label{eq:identitybottomrows}
\input{TikzFigures/isompeps1.tikz},
\end{equation}
where we only represented the bottom two rows of the tensor networks, as the proof will follow by a \textit{bottom-up} argument, in an analogous fashion to when we studied the gauge freedom of MPS states in \cref{gaugeMPSthm}. For simplicity, the lattice in the figures only has four columns. Nevertheless, the proof is completely general. We assume that the tensor network on the right-hand side satisfies the full-rank assumptions of the statement. 

Contracting both states against their adjoints over every leg except for the leftmost leg in the first row, we conclude that
\[
\input{TikzFigures/isompeps2.tikz}.
\]
This way, we can apply \cref{purificationsym} to conclude that there exists some unitary matrix $X \in \textup{U}(D^2)$ such that
\[
\input{TikzFigures/isompeps3.tikz}.
\]
Inserting this identity into \cref{eq:identitybottomrows} we obtain that
\begin{equation}
\label{eq:identitybottomrows2}
\input{TikzFigures/isompeps4.tikz}.
\end{equation}
Now, since we are assuming that $V_1$ is full-rank in the sense of \cref{eq:fullrankbottomstate}, we can rewrite \cref{eq:identitybottomrows2} as 
\begin{equation}
\label{eq:identitybottomrows3}
\input{TikzFigures/isompeps5.tikz}    
\end{equation}
Contracting both tensor networks against their adjoints over all legs except for the leftmost leg of $U_2$ and $V_2$ we obtain that
\begin{equation}
\label{eq:identitybottomrowsaux}
\input{TikzFigures/isompeps6.tikz},
\end{equation}
and so, from the full-rank assumption shown in \cref{eq:fullrankdensities} we can use \cref{lem:reducedmatrixdecomposition} to conclude that there exist two unitaries $X_1, X_2 \in \textup{U}(D)$ such that $X = X_1 \otimes X_2$, which allows us to rewrite \cref{eq:identitybottomrows3} as 
\begin{equation}
\label{eq:identitybottomrows4}
\input{TikzFigures/isompeps7.tikz}.
\end{equation}
and \cref{eq:identitybottomrowsaux} as
\[
\input{TikzFigures/isompeps8.tikz};
\]
thus, by \cref{purificationsym}, there exists some unitary matrix $Y \in \textup{U}(D^2)$ such that 
\[
\input{TikzFigures/isompeps9.tikz}.
\]
Since we are also assuming that $V_2$ is full-rank in the sense of \cref{eq:fullrankfirstrow}, we can rewrite \cref{eq:identitybottomrows4} as 
\[
\input{TikzFigures/isompeps10.tikz},
\]
which again, using the full-rank assumption from \cref{eq:fullrankdensities} and \cref{lem:reducedmatrixdecomposition} we know that $Y = Y_1 \otimes Y_2$ for some unitary matrices $Y_1, Y_2 \in \textup{U}(D)$. This allows us to conclude that 
\[
\input{TikzFigures/isompeps11.tikz}.
\]
Lastly, we contract both tensor networks against their adjoints leaving the two leftmost legs of $U_3$ and $V_3$ open, obtaining that
\[
\input{TikzFigures/isompeps12.tikz}
\]
which again by \cref{purificationsym} implies that there exists some unitary matrix $Z \in \textup{U}(D)$ such that
\[
\input{TikzFigures/isompeps13.tikz}
\]
and so, as we are assuming that $V_3$ is full-rank in the sense of \cref{eq:fullrankfirstrow} we know that
\[
\input{TikzFigures/isompeps14.tikz}.
\]
Note that in this expression, the bottom row of the tensor networks does no longer appear. This way, following an analogous reasoning in a bottom-up fashion, using the full-rank assumptions written in \cref{eq:fullrankfirstcolumn,eq:fullrankbulk,eq:fullrankdensities} we identify the same gauge freedom and eventually we reach the two upmost rows of the tensor networks, for which we know that 
\begin{equation}
\label{eq:topmostlayersisom}
\input{TikzFigures/isompeps15.tikz},
\end{equation}
for some unitary matrices $A_1, A_2, A_3 \in \textup{U}(D)$. We can proceed in the same manner as for the previous rows of the tensor networks, i.e. we contract both tensor networks against their adjoints leaving the leftmost leg of $W_1$ and $\tilde W_1$ open, obtaining
\[
\input{TikzFigures/isompeps16.tikz},
\]
and so by \cref{purificationsym} we know that there exists some unitary matrix $C \in \textup{U}(D^2)$ such that 
\[
\input{TikzFigures/isompeps17.tikz}.
\]
Again, as we are assuming that $\tilde W_1$ is full-rank in the sense of \cref{eq:fullrankfirstcolumn}, we can rewrite \cref{eq:topmostlayersisom} as
\begin{equation}
\label{eq:topmostlayersisom2}
\input{TikzFigures/isompeps18.tikz}.
\end{equation}
Contracting both tensor networks against their duals leaving the leftmost leg of $W_2$ and $\tilde W_2$ open, we conclude that
\[
\input{TikzFigures/isompeps19.tikz}
\]
and so we can use the full-rank assumption shown in \cref{eq:fullrankdensities}, and \cref{lem:reducedmatrixdecomposition}, to conclude that there exist two unitary matrices $C_1, C_2 \in \textup{U}(D)$ for which $C = C_1 \otimes C_2$, which allows us to rewrite \cref{eq:topmostlayersisom2} as
\[
\input{TikzFigures/isompeps20.tikz}.
\]
Following the same reasoning we reach the top row of the tensor networks, for which we know that
\begin{equation}
\label{eq:identitytoprow}
\input{TikzFigures/isompeps21.tikz},
\end{equation}
for some unitary matrices $D_1, E \in \textup{U}(D)$. Contracting against their adjoints and leaving the leftmost leg open, we can use \cref{purificationsym} to conclude that there exists some unitary matrix $F_1 \in \textup{U}(D)$ for which 
\[
\begin{tikzpicture}[scale=1]
    \draw (0, 1.834) rectangle (1.693, 1.27);
    \draw (1.129, 0.847) rectangle (1.693, 0.282);
    \draw (0.282, 2.117) -- (0.282, 1.834);
    \draw (1.411, 1.27) -- (1.411, 0.847);
    \draw (1.411, 0.282) -- (1.411, 0);
    \node[anchor=center] at (1.428, 0.54) {$C_1$};
    \draw (2.54, 1.834) rectangle (4.233, 1.27);
    \draw (2.822, 2.117) -- (2.822, 1.834);
    \draw (3.951, 1.27) -- (3.951, 0.847);
    \node[anchor=center] at (3.372, 1.556) {$\tilde B_1$};
    \node[anchor=center] at (0.832, 1.556) {$B_1$};
    \draw (1.411, 2.117) -- (1.411, 1.834);
    \draw (3.951, 2.117) -- (3.951, 1.834);
    \draw (3.669, 2.681) rectangle (4.233, 2.117);
    \draw (3.951, 2.963) -- (3.951, 2.681);
    \node[anchor=center] at (3.951, 2.399) {$F_1$};
    \node[anchor=center] at (2.117, 1.552) {$\propto$};
\end{tikzpicture}.
\]
In particular, as we are also assuming that $\tilde B_1$ is full-rank in the sense of \cref{eq:fullranktopleft}, we can rewrite \cref{eq:identitytoprow} as
\[
\input{TikzFigures/isompeps22.tikz}.
\]
Following an analogous argument for the rest of the tensors in the last row the result follows. Note that no full-rank assumption is needed for $B_2,\dotsc, B_L$ as they are unitary matrices.
\end{proof}

\subsection{Quotient tensor network manifold structure}

Let us now write the gauge freedom identified in the previous sub-section as a group action. Recall that the tensor network manifold associated with the tensor network family of isometric PEPS on an $L \times (L+1)$ lattice with bond dimension $D$ and physical dimension $d$ is given by 
\begin{align*}
M_{\textup{iso}} &= \bb{S}^{2D^4-1} \times \textup{V}_D(\bb{C}^{D^3})^{\times 2L-3} \times \textup{V}_{D^2}(\bb{C}^{dD^2})^{\times (L-2)^2} 
\\&\quad \times \textup{V}_{D^2}(\bb{C}^{d^2D})^{L-2} \times\textup{V}_D(\bb{C}^{dD})  \times \textup{U}(D^2)^{\times L-1}.
\end{align*}
To ease the notation, let us assume that $d = D$, which allows us to write $M_{\textup{iso}}$ as
\begin{align*}
M_{\textup{iso}} &= \bb{S}^{2d^4-1} \times \textup{V}_d(\bb{C}^{d^3})^{\times 2L-3} \times \textup{V}_{d^2}(\bb{C}^{d^3})^{\times (L-2)(L-1)} \times\textup{V}_d(\bb{C}^{d^2})  \times \textup{U}(d^2)^{\times L-1}.
\end{align*}
To describe a tensor network of this family up to a phase, we use
\[
N_{\textup{iso}} := \bb{CP}^{d^4-1} \times \textup{PV}_d(\bb{C}^{d^3})^{\times 2L-3} \times \textup{PV}_{d^2}(\bb{C}^{d^3})^{\times (L-2)(L-1)} \times\textup{PV}_d(\bb{C}^{d^2})  \times \textup{PU}(d^2)^{\times L-1},
\]
which we endow with its \textit{natural metric} $g_{\textup{iso}} := g_{\textup{FS}} \oplus g_{pst}^{\oplus L(L-1)} \oplus g_{pu}^{\oplus L-1}$.

\begin{definition}[Gauge action on isometric PEPS]
\label{def:GaugeIsomPEPS}
We define the action of $\textup{PU}(d)^{\times 2(L^2-L)}$ on $N_{\textup{iso}}$ as 
\[
\textup{PU}(d)^{\times 2(L^2-L)} \times N_{\textup{iso}} \to N_{\textup{iso}},\quad
(X, (U, V, W)) \mapsto (XU, XV, XW),
\]
where $U$ and $W$ denote the bottom and top rows of the PEPS, respectively, and $V$ denotes the \textit{bulk} of the PEPS. 
This way, 
\begin{align*}
U &:= ([\Lambda], [U_1], \dotsc, [U_{L-1}]) \in \bb{CP}^{d^4-1} \times \textup{PV}_d(\bb{C}^{d^3})^{\times L-1},\\
V &:= (V^1, \dotsc, V^{L-2}),\\
W &:= ([W_1], \dotsc, [W_{L}]) \in \textup{PV}_{d}(\bb{C}^{d^2}) \times \textup{PU}(d^2)^{\times L-1},
\end{align*}
where each $V^i$ denotes a row of the \textit{bulk} of the PEPS state, and so 
\[
V^i := ([V^i_1],\dotsc, [V^i_{L}]) \in \textup{PV}_{d}(\bb{C}^{d^3}) \times \textup{PV}_{d^2}(\bb{C}^{d^3})^{\times L-1},
\]
for every $i \in \{1, \dotsc, L-2\}$. We define $X$ as
\[
X := (X^1, \dotsc, X^L),
\]
where each $X^i$ acts on a row of the PEPS. Thus, $X^i \in \textup{PU}(d)^{\times 2L-1}$ for every $i \in \{1, \dotsc, L-1\}$, and $X^L \in \textup{PU}(d)^{\times L-1}$. More explicitly, $X^1$ acts on $U$, $X^i$ acts on $V^{i-1}$ for every $i \in \{2, \dotsc, L-1\}$ and $X^L$ acts on $W$. The action on the first row (see \cref{fig:actionfirstrowPEPS}) can be written as 
\begin{align*}
XU &:=  X^1 U 
\\&:=([(\mathds{1}\otimes X^1_1 \otimes X^1_2)\Lambda], 
\\&\quad[(\mathds{1} \otimes X^1_3 \otimes X^1_4)U_1(X^1_1)^\dagger],\dotsc, [(\mathds{1} \otimes X^1_{2(L-1)-1} \otimes X^1_{2(L-1)})U_{L-2}(X^1_{2(L-2)-1})^\dagger],
\\&\quad[(\mathds{1} \otimes \mathds{1} \otimes X^1_{2L-1}) U_{L-1} (X^1_{2(L-1)-1})^\dagger ]) \in \bb{CP}^{d^4-1} \times \textup{PV}_d(\bb{C}^{d^3})^{\times L-1}. 
\end{align*}

In the \textit{bulk} of the PEPS, represented by $V$, $X$ acts as
\[
XV := (XV^1, \dotsc, XV^{L-2}), 
\]
where for every row $i \in \{2, \dotsc, L-1\}$ represented by $V^{i-1}$ we define (see \cref{fig:actionbulkPEPS})
\begin{align*}
X V^{i-1} &:= ([(\mathds{1}\otimes X^i_1 \otimes X^i_2)V^{i-1}_1(X^{i-1}_2)^\dagger],
\\&\quad
[(\mathds{1} \otimes X^i_3 \otimes X^i_4)V^{i-1}_2(X^i_1 \otimes X^{i-1}_4)^\dagger],\dotsc,
\\&\quad [(\mathds{1} \otimes X^i_{2(L-1)-1} \otimes X^i_{2(L-1)})V^{i-1}_{L-1}(X^i_{2(L-2)-1} \otimes X^{i-1}_{2(L-1)})^\dagger],
\\&\quad[(\mathds{1} \otimes \mathds{1} \otimes X^i_{2L-1}) V^{i-1}_L (X^{i}_{2(L-1)-1} \otimes X^{i-1}_{2L-1})^\dagger]) \in  \textup{PV}_{d}(\bb{C}^{d^3}) \times \textup{PV}_{d^2}(\bb{C}^{d^3})^{\times L-1}.
\end{align*} 

Lastly, $X$ acts on the last row of the PEPS (see \cref{fig:actionlastrowPEPS}) as 
\begin{align*}
XW &:=([(\mathds{1} \otimes X^L_1) W_1(X^{L-1}_2)^\dagger], 
\\&\quad [(\mathds{1} \otimes X^L_2) W_2 (X^L_1 \otimes X^{L-1}_{4})^\dagger], \dotsc, [(\mathds{1} \otimes X^L_{L-1})W_{L-1}(X^L_{L-2} \otimes X^{L-1}_{2(L-1)})^\dagger],
\\&\quad [W_L(X^L_{L-1} \otimes X^{L-1}_{2L-1})^\dagger]) \in \textup{PV}_{d}(\bb{C}^{d^2}) \times \textup{PU}(d^2)^{\times L-1}.
\end{align*}

This action allows us to obtain the quotient tensor network manifold for which every point is in one-to-one correspondence with a state represented by an isometric PEPS up to a phase. See \cref{fig:diag7} for a visual representation of the setting considered. 

\begin{proposition}
\label{prop:quotTNlast}
Let $N_{\textup{iso}}$ be endowed with its natural metric $\tilde g_{\textup{iso}}$. Consider the action of $\textup{PU}(d)^{\times 2(L^2-L)}$ on $N_{\textup{iso}}$ as given in \cref{def:GaugeIsomPEPS}. There is a smooth manifold structure and a Riemannian metric $h_{\textup{iso}}$ on $B_{\textup{iso}} := N_{\textup{iso}}/\textup{PU}(d)^{\times 2(L^2-L)}$ for which the projection $(N_{\textup{iso}}, \tilde g_{\textup{iso}}) \to (B_{\textup{iso}}, h_{\textup{iso}})$ is a Riemannian submersion. 
\end{proposition}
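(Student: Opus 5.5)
The plan is the same as for \cref{thm:action1Dcircuitsfixedinputphase,thmfreeactionMPS,prop:theoremaisometric}: show that the action of $\textup{PU}(d)^{\times 2(L^2-L)}$ on $N_{\textup{iso}}$ from \cref{def:GaugeIsomPEPS} is smooth, isometric and free. Then \cref{existenceRiemannianSubmersionMetrics} gives the smooth structure on $B_{\textup{iso}}$ and the metric $h_{\textup{iso}}$ for which the projection is a Riemannian submersion. Note that the acting group is compact, so properness is automatic.

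\emph{Smoothness and isometry.} Smoothness is clear, because the action descends from matrix multiplication on $M_{\textup{iso}}$, and the quotient maps to the projective spaces are smooth submersions. For isometry, I would first treat the unprojected action of $\textup{U}(d)^{\times 2(L^2-L)}$ on $M_{\textup{iso}}$. In each factor it is left multiplication by a unitary of the form $\mathds{1}\otimes X\otimes X'$, possibly combined with right multiplication by a unitary. This is an isometry of the round metric by \cref{prop:isomsphere}, of the bi-invariant metric by \cref{prop:isomunitaries}, and of $g_{\textup{st}}$ on the Stiefel factors, by the same argument used in the proof of \cref{thmfreeactionMPS2}. The projected action is then isometric for the natural metric on $N_{\textup{iso}}$ by \cref{projectiveIsometry}.

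\emph{Freeness.} This is the main step. I would argue by back-substitution, starting from the tensors on which the action is one-sided, as in \cref{thmfreeactionMPS2,prop:theoremaisometric}.
\begin{enumerate}
\item \emph{Top-right tensor.} $W_L$ is a unitary acted on only from the right, by $(X^L_{L-1}\otimes X^{L-1}_{2L-1})^\dagger$. So $[W_L X^\dagger]=[W_L]$ forces $[X^L_{L-1}\otimes X^{L-1}_{2L-1}]=[\mathds{1}]$, and hence both factors are trivial in $\textup{PU}(d)$; this follows from \cref{productounitarias}, as in the proof of \cref{thm:action1Dcircuitsfixedinputphase}.
\item \emph{Top row, right to left.} Each $W_j$ is a unitary, so the fixed-point equation can be rewritten as $[W_j^\dagger(\mathds{1}\otimes X^L_j)W_j]=[(X^L_{j-1}\otimes X^{L-1}_{2j})]$. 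Once $X^L_j$ is known to be trivial, this yields that $X^L_{j-1}$ and $X^{L-1}_{2j}$ are trivial.
\item \emph{Leftmost top tensor.} $W_1$ is an isometry, and $[(\mathds{1}\otimes X^L_1)W_1(X^{L-1}_2)^\dagger]=[W_1]$ with $X^L_1$ already trivial gives $[W_1 X^\dagger]=[W_1]$. Since $W_1^\dagger W_1=\mathds{1}$, this forces $[X^{L-1}_2]=[\mathds{1}]$.
\item \emph{Bulk rows, top to bottom.} I would repeat the same sweep on each row $V^{i-1}$, once the upper-row variables $X^i_\bullet$ are known to be trivial. Multiplying by $(V^{i-1}_j)^\dagger$ and using $V^\dagger V=\mathds{1}$ yields $[X^i_{\bullet}\otimes X^{i-1}_{\bullet}]=[\mathds{1}]$, and hence, by \cref{productounitarias}, that the lower-row variables $X^{i-1}_\bullet$ are trivial.
\item \emph{Bottom row.} The same sweep finishes on $U$.
\end{enumerate}

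The main obstacle is bookkeeping. One must check that every $X^i_j$ appears as a right (input-side) factor of some isometry whose left factor has already been shown trivial, so that the isometry property $V^\dagger V=\mathds{1}$ can be used. One must also check that the pure phases arising from the projective classes can be absorbed at each step. I would verify the index pattern of \cref{def:GaugeIsomPEPS} against \cref{fig:actionbulkPEPS}, confirming that each $X^{i-1}_{2j}$ and $X^{i-1}_{2j-1}$ occurs on the input side of $V^{i-1}_j$ (or of $W_j$).

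If some variable instead appears only on output legs, say $X^1_{2L-1}$ at the end of a row, I would use the other identity. I would conclude from $[(\mathds{1}\otimes Y)V]=[V]$, together with surjectivity of the partial contraction in the generic set, that $Y$ is scalar. Equivalently, I would pass to the neighbour in the next row, where that same variable appears on the input side.
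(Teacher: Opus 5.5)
Your proposal is correct and follows the paper's proof. The projected action is smooth and isometric, and freeness comes from back-substitution: start at $W_L$, sweep right to left along the top row, then move down row by row to $U$ and $\Lambda$, using $V^\dagger V=\mathds{1}$ on input legs. The conclusion then follows from \cref{existenceRiemannianSubmersionMetrics}, exactly as in the paper.

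Your bookkeeping check goes through, because every $X^i_j$ enters some tensor through an input leg (the vertical bond variables of row $i$ are inputs of row $i+1$). So the genericity fallback in your last paragraph is never needed. That is fortunate, since freeness must hold at every point of $N_{\textup{iso}}$, not only on a generic set. Also, $[X\otimes Y]=[\mathds{1}]\Rightarrow[X]=[Y]=[\mathds{1}]$ is an elementary fact rather than a consequence of \cref{productounitarias}.
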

\begin{proof}
The action is smooth and isometric as $N_{\textup{iso}}$ is endowed with its natural metric. It only remains to see that it is free. To do so, assume that there exists some $X \in \textup{PU}(d)^{\times 2(L^2-L)}$ and some $(U, V, W) \in N_{\textup{iso}}$ for which 
\[
(U, V, W) = (XU, XV, XW).
\]
Then, in particular, 
\[
W = XW, 
\]
and so 
\begin{align*}
[W_1] &= [(\mathds{1} \otimes X^L_1) W_1(X^{L-1}_2)^\dagger],\\
[W_i] &= [(\mathds{1} \otimes X^L_i) W_i (X^L_{i-1} \otimes X^{L-1}_{2i})^\dagger],\quad \forall i \in \{2, \dotsc, L-1\}\\
[W_L] &= [W_L(X^L_{L-1} \otimes X^{L-1}_{2L-1})^\dagger]).
\end{align*}
From the last equality, it holds that 
\[
[X^L_{L-1} \otimes X^{L-1}_{2L-1}] = [\mathds{1}],
\]
which implies that
\[
[X^L_{L-1}] = [\mathds{1}],\quad\textup{and}\quad [X^{L-1}_{2L-1}] = [\mathds{1}]. 
\]
Consequently, 
\[
[W_{L-1}] = [W_{L-1} (X^L_{L-2} \otimes X^{L-1}_{2(L-1)})^\dagger],
\]
which means that 
\[
[X^L_{L-1}] = [\mathds{1}],\quad\textup{and}\quad [X^{L-1}_{2(L-1)}] = [\mathds{1}].
\]
Continuing this reasoning, we conclude that $[X^{L-1}_{2L-1}] = [\mathds{1}]$ and 
\[
[X^L_i] = [\mathds{1}],\quad \textup{and}\quad [X^{L-1}_{2i}] = [\mathds{1}]
\]
for every $i \in \{1, \dotsc, L-1\}$. Looking now at $X^{L-1} V^{L-2}$ we know that
\begin{align*}
[V^{L-2}_1] &= [(\mathds{1}\otimes X^{L-1}_1 \otimes \mathds{1})V^{L-2}_1(X^{L-2}_2)^\dagger],\\
[V^{L-2}_i] &= [(\mathds{1} \otimes X^{L-1}_{2i-1} \otimes \mathds{1})V^{L-2}_i(X^{L-1}_{2i-1} \otimes X^{L-2}_{2i})^\dagger], \quad \forall i \in \{2, \dotsc, L-1\},\\
[V^{L-2}_L] &= [V^{L-2}_L (X^{L-1}_{2(L-1)-1} \otimes X^{L-2}_{2L-1})^\dagger]
\end{align*}
from which we deduce that
\[
[X^{L-2}_{2i}] = [\mathds{1}], \quad \textup{and}\quad [X^{L-1}_{2i-1}] = [\mathds{1}],
\]
for every $i \in \{1, \dotsc, L-1\}$. Repeating the same reasoning by \textit{moving downwards} along each row, we find that $X^i$ consists of the unit element for every $i \in \{2, \dotsc, L\}$, and it only remains to study $X^1$. In this case, we know that 
\begin{align*}
[\Lambda] &= [(\mathds{1}\otimes X^1_1 \otimes \mathds{1})\Lambda],\\
[U_i] &= [(\mathds{1} \otimes X^1_{2(i+1)-1} \otimes \mathds{1})U_i(X^1_{2i-1})^\dagger],\\
[U_{L-1}] &= [U_{L-1} (X^1_{2(L-1)-1})^\dagger ].
\end{align*}
Thus, the same argument as before shows that the action is free. The result then follows from \cref{existenceRiemannianSubmersionMetrics}. 
\end{proof}

\begin{figure}
    \centering
    \input{TikzFigures/symisom1.tikz}
    \caption{Visual description of the action on the first row of the PEPS, $XU$.}
    \label{fig:actionfirstrowPEPS}
\end{figure}
\begin{figure}
    \centering
    \input{TikzFigures/symisom2.tikz}
    \caption{Visual description of the action on the $i$-th row of the PEPS, $XV^{i-1}$. }
    \label{fig:actionbulkPEPS}
\end{figure}
\begin{figure}
    \centering
    \input{TikzFigures/symisom3.tikz}
    \caption{Visual description of the action on the last row of the PEPS, $XW$.}
    \label{fig:actionlastrowPEPS}
\end{figure}    
\end{definition}
\begin{figure}
    \centering
    \input{TikzFigures/diag7.tikz}
    \caption{Visual representation of the setting considered in \cref{prop:quotTNlast}.}
    \label{fig:diag7}
\end{figure}

%% file: Chapters/Geometry.tex
\section{Background in Riemannian geometry}
\label{SectionExamples}

In this section, we will introduce some auxiliary results and definitions that are used in the main text. First, we will give a brief introduction to Riemannian submersions and Lie groups. We will later present the manifolds used to construct the tensor network manifolds studied in this work. Lastly, we will give some results that guarantee that the actions considered in the main text are isometric. We assume previous knowledge on Riemannian manifolds. We refer to \cite{lee2018introductionRiemannian} for an introduction to this topic. 

\subsection{Riemannian submersions}
\label{sec:riemanniansubmersion}

Let us begin with a brief introduction to Riemannian submersions. For an in-depth discussion on this topic, we refer to classic references such as \cite{lee2018introductionRiemannian} or O'Neill's articles \cite{oneil1967submersions,oneill1966fundamental}. Let us first define a \textit{submersion} between Riemannian manifolds. 

\begin{definition}[Submersion]
Let $M$ and $B$ be two smooth manifolds, and let $F : M \rightarrow B$ be a smooth map. We say that $F$ is a \textup{submersion} if its differential $dF|_p$ is surjective for all $p \in M$. $M$ and $B$ are often referred to as the \textup{total} and \textup{base} spaces, respectively. 
\end{definition}

Given a submersion, the tangent space of the total space at each point can be split into its \textit{vertical} and \textit{horizontal} components. 

\begin{definition}[Vertical and horizontal tangent spaces]
\label{verticalhorizontaltangent}
Given a submersion $\pi: M \rightarrow B$, we define the \textup{vertical tangent space} at $p \in M$ as $V_p M := \ker d\pi|_p$. Note that $V_p M$ is a vector space of dimension $(n-m)$, where $n$ is the dimension of $M$ and $m$ is the dimension of $B$. This way, if $M$ is a Riemannian manifold, we can define the \textup{horizontal tangent space} as $H_p M := (V_p M)^{\bot}$. In particular,
\[
T_p M = V_p M \oplus H_p M,
\]
for every $p \in M$. 
\end{definition}

\textit{Riemannian submersions} are submersions which \textit{preserve} the metric between the total and the base space.

\begin{definition}[Riemannian Submersion]
Let $(M, g)$ and $(B, h)$ be two Riemannian manifolds, and let $\pi: M \rightarrow B$ be a submersion. We say that $\pi$ is a \textup{Riemannian submersion} if for every $p \in M$, $d\pi|_p$ maps $H_p M$ isometrically onto $T_{\pi(p)} B$. That is, for each $p \in M$ and $X, Y \in H_p M$, 
\[
\langle X, Y \rangle_g = \langle d\pi|_p X,d\pi|_p Y \rangle_h.
\]
\end{definition}

One of the most widely studied methods for constructing Riemannian submersions is by considering quotient maps with respect to a group action defined on a Riemannian manifold. This is the setting considered in the main text. Let us begin by defining group actions on manifolds. 

\begin{definition}[Group action on a manifold]
Let $G$ be a group and $M$ be a manifold. A \textup{left action} of $G$ on $M$ is a map $G \times M \to M$, usually denoted as $(x, p) \mapsto x\cdot p$ verifying the following: 
\begin{itemize}
    \item $x_2 \cdot (x_1 \cdot p) = (x_2x_1)\cdot p$ for every $x_1, x_2 \in G$, $p \in M$.
    \item $e \cdot p = p$ for every $p \in M$, $e$ being the unit of $G$. 
\end{itemize}
Right actions are defined analogously. 
\end{definition}

Although group actions often give rise to Riemannian submersions, not every group action does. Some sufficient conditions for a group action to give rise to a Riemannian submersion are for the action to be \textit{smooth, free, proper} and \textit{isometric}. 

\begin{definition}[Smooth action]
The action of a Lie group $G$ on a manifold $M$ is \textup{smooth} if the map $G \times M \rightarrow M$ given by $(x, p)\mapsto x \cdot p$ is smooth. 
\end{definition}

\begin{definition}[Free action]
A group action of $G$ on a manifold $M$ is \textup{free} if for every $p \in M$, 
\[
\{x \in G : x p = p\} = \{e\},
\]
where $e$ denotes the unit of $G$. 
\end{definition}

\begin{definition}[Proper action]
A group action of $G$ on a manifold $M$ is said to be \textup{proper} if the map 
\begin{align*}
G \times M &\rightarrow M \times M\\
(x, p) &\mapsto (x\cdot p, p)    
\end{align*} 
is proper, i.e., the preimage of every compact set is compact.
\end{definition}

In particular, every smooth action of a compact Lie group is always proper (cf. \cite[Corollary C.16]{lee2018introductionRiemannian}). 

\begin{definition}[Isometric action]
A group action of $G$ on a Riemannian manifold $(M, g)$ is said to be \textup{isometric} if, for every $x \in G$, the map 
\begin{align*}
\alpha_x : M &\to M\\
p &\mapsto x \cdot p
\end{align*}
is an isometry on $(M, g)$ (i.e. a diffeomorphism such that for every point $p \in M$ and every $u, v \in T_p M$, $g(u, v) = g(d\alpha_p(u), d\alpha_p(v))$ (cf. \cref{def:isometry})). 
\end{definition}

As we mentioned earlier, every group action satisfying the above properties induces a Riemannian submersion. 

\begin{proposition}[{\cite[Corollary 2.29]{lee2018introductionRiemannian}}]
\label{existenceRiemannianSubmersionMetrics}
Let $(M, g)$ be a Riemannian manifold, and let $G$ be a Lie group acting smoothly, freely, properly and isometrically on $M$. Then, the orbit space $M/G$ has a unique smooth manifold structure and Riemannian metric such that $\pi : M \rightarrow M/G$ is a Riemannian submersion.
\end{proposition}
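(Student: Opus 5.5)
The plan is to split the statement into two parts: first the smooth structure on the orbit space, then the metric. For the smooth structure I would invoke the quotient manifold theorem. A smooth, free and proper action of a Lie group $G$ on $M$ gives $M/G$ a unique smooth manifold structure such that $\pi : M \to M/G$ is a smooth submersion, of dimension $\dim M - \dim G$. In the cases used in the main text, properness is automatic because the acting groups are compact (cf. the remark after the definition of proper action). So the only genuinely Riemannian content is the construction of $h$ and the check that it is well defined, smooth and unique.

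For the metric, I would use the splitting $T_pM = V_pM \oplus H_pM$ of \cref{verticalhorizontaltangent}. Here $V_pM = \ker d\pi|_p$ is the tangent space to the orbit $G\cdot p$, and $H_pM$ is its $g$-orthogonal complement. Since $d\pi|_p$ is surjective with kernel $V_pM$, it restricts to a linear isomorphism $H_pM \to T_{\pi(p)}(M/G)$. For $v \in T_q(M/G)$ and any $p \in \pi^{-1}(q)$, let $\tilde v_p \in H_pM$ be the unique horizontal lift of $v$. Then define
\[
h_q(v,w) := g_p(\tilde v_p, \tilde w_p).
\]
With this definition, $\pi$ maps horizontal spaces isometrically onto the tangent spaces of the base, provided $h$ is well defined and smooth.

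Independence of the choice of $p$ in the fibre uses the group action. Any other point of $\pi^{-1}(q)$ has the form $x\cdot p$ for some $x \in G$. Since $\pi\circ\alpha_x = \pi$, the differential $d\alpha_x|_p$ maps $V_pM$ onto $V_{x\cdot p}M$. Because $\alpha_x$ is an isometry, it also maps the orthogonal complement $H_pM$ onto $H_{x\cdot p}M$. Moreover $d\pi \circ d\alpha_x = d\pi$, so $d\alpha_x(\tilde v_p)$ is exactly the horizontal lift $\tilde v_{x\cdot p}$. Using once more that $\alpha_x$ is an isometry,
\[
g_{x\cdot p}(\tilde v_{x\cdot p},\tilde w_{x\cdot p}) = g_p(\tilde v_p,\tilde w_p),
\]
so $h_q$ does not depend on the chosen point of the fibre.

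I expect smoothness of $h$ to be the main technical point. Since $\pi$ is a submersion, around each $q$ there is a smooth local section $\sigma : W \to M$. The vertical bundle $\ker d\pi$ has constant rank and is therefore a smooth subbundle, so its orthogonal complement is smooth too, and the orthogonal projection $P_H$ onto $H$ depends smoothly on the base point. The horizontal lift along the section is then
\[
\tilde v_{\sigma(q)} = P_H\, d\sigma|_q(v),
\]
because $d\pi\, d\sigma = \mathrm{id}$ and vertical components are killed by $d\pi$. Hence $h(v,w) = g\big(P_H d\sigma\, v,\, P_H d\sigma\, w\big)$ is smooth on $W$. Positive definiteness follows because $d\pi|_{H_p}$ is an isomorphism.

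Uniqueness is immediate. The Riemannian submersion condition requires $h_q(d\pi\,\tilde v, d\pi\,\tilde w) = g(\tilde v,\tilde w)$ for horizontal vectors, and since every tangent vector of the base has a horizontal lift, this determines $h$ completely.
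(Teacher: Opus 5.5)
Your proposal is correct and follows essentially the same route as the result the paper relies on. The paper gives no proof of its own and cites Lee's Corollary 2.29, which is proved exactly as you do: the quotient manifold theorem gives the smooth structure for a smooth, free, proper action, and the metric on $M/G$ is then defined through horizontal lifts. In that argument, well-definedness comes from the isometric action being transitive on fibres, smoothness comes from local sections of $\pi$, and uniqueness follows from the uniqueness in the quotient manifold theorem together with the fact that every tangent vector of the base has a horizontal lift — all of which your steps check soundly.
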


\subsection{Lie groups}
\label{sec:SecLieGroups}
We now introduce some of the most elementary definitions and properties of Lie groups. Although Lie groups can be both studied from the point of view of algebra and differential geometry, we will focus on the latter approach. For a more general and detailed introduction to Lie groups, see for example \cite{gallier2020differential}.

\begin{definition}[Lie group]
A \textup{Lie group} is a group $G$ such that it can be endowed with a smooth (real) manifold structure in such a way that both the group operation and the inverse map are smooth. 
\end{definition}

We consider two Lie groups in this text. The first is the unitary group
\[
\textup{U}(n) := \{ U \in \textup{M}_{n}(\bb{C}) : U^\dagger U = \mathds{1}\},
\]
endowed with its \textit{bi-invariant} metric, which is given by 
\[
g_{\textup{bi}}(X, Y) := \Tr(Y^\dagger X),
\]
for every $U \in \textup{U}(n)$ and any $X, Y \in T_U \textup{U}(n)$. 

The second Lie group considered is the projective unitary group $\textup{PU}(n)$, 
\[
\textup{PU}(n) := \{[U] : U \in \textup{U}(n)\},
\]
where $[U] := \{\lambda U : \lambda \in \textup{U}(1)\}$ for every $U \in \textup{U}(n)$. While it is well-known that $\textup{U}(n)$ is a compact Lie group, to the best of our knowledge the group $\textup{PU}(n)$ is not as studied in the literature. $\textup{PU}(n)$ is indeed a group with respect to the product defined as 
\[
[U] \cdot [V] := [UV],
\]
for every $[U], [V] \in \textup{PU}(n)$. In order to ensure that it is a \textit{Lie group}, let us first define the concept of \textit{normal subgroup} of a group. 

\begin{definition}[Normal subgroup]
Let $G$ be a group and $H$ be a subgroup of $G$. We say that $H$ is a \textup{normal subgroup} of $G$ if 
\[
xH = Hx
\]
for every $x \in G$. 
\end{definition}

The quotient of a Lie group by one of its closed normal subgroups results in a Lie group. 
\begin{theorem}[{\cite[Theorem 21.26]{lee2003smooth}}]
\label{quotientmanifoldLiegroups}
Suppose $G$ is a Lie group and $K \subseteq G$ is a closed normal subgroup of $G$. Then $G/K$ is a Lie group.
\end{theorem}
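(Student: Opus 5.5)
The plan is to first give $G/K$ a smooth manifold structure for which the projection $\pi : G \to G/K$, $g \mapsto gK$, is a smooth submersion. Then we check that the group operations descend smoothly. Normality is only used in the second step. The first step only needs $K$ to be a closed subgroup.

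\textbf{Step 1 (manifold structure).} By the closed subgroup theorem, $K$ is an embedded Lie subgroup of $G$. We let $K$ act on $G$ on the right by $(g,k) \mapsto gk$, i.e.\ on the left via $k \cdot g := g k^{-1}$. This action is smooth, since it is a restriction of the multiplication of $G$. It is free, since $gk = g$ forces $k = e$. The orbit space is exactly the coset space $G/K$. We then invoke the quotient manifold theorem, the non-Riemannian analogue of \cref{existenceRiemannianSubmersionMetrics}: a smooth, free and proper action of a Lie group gives the orbit space a unique smooth manifold structure making $\pi$ a smooth submersion.

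\textbf{Main obstacle: properness.} Properness is the one hypothesis that is not immediate, because $K$ need not be compact, so the remark after the definition of proper actions does not apply. We prove it with the sequential characterisation of properness on manifolds. Take sequences $g_i \in G$ and $k_i \in K$ with $g_i \to b$ and $g_i k_i \to a$. Then
\[
k_i = g_i^{-1}(g_i k_i) \to b^{-1}a
\]
by continuity of inversion and multiplication. Since $K$ is closed, $b^{-1}a \in K$. Hence $(k_i, g_i)$ has a convergent subsequence in $K \times G$, and the action is proper. Closedness of $K$ enters exactly here.

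\textbf{Step 2 (Lie group structure).} Since $K$ is normal, $(gK)(hK) = ghK$ is well defined, and $G/K$ is a group with identity $K$. Let $m_G$ be the multiplication of $G$, $m$ the induced multiplication on $G/K$, and $\iota, \iota_G$ the inversions.
\begin{itemize}
\item For multiplication, note that $\pi \times \pi : G \times G \to G/K \times G/K$ is a surjective smooth submersion, being a product of two such maps. The identity
\[
m \circ (\pi \times \pi) = \pi \circ m_G
\]
shows that $m \circ (\pi\times\pi)$ is smooth.
\item By the characteristic property of surjective submersions (a map out of the base is smooth whenever its composition with the submersion is smooth, using local sections), $m$ is smooth.
\item Inversion is handled in the same way: $\iota \circ \pi = \pi \circ \iota_G$ is smooth, so $\iota$ is smooth.
\end{itemize}
Therefore $G/K$ is a Lie group, which is the statement we want.
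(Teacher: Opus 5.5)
Your proof is correct and is essentially the standard argument behind the cited result; the paper gives no proof of its own, only the reference to Lee. That argument gives $G/K$ its smooth structure via the closed subgroup theorem and the quotient manifold theorem for the free, proper right action of $K$, and then shows multiplication and inversion are smooth by passing them through the surjective submersions $\pi\times\pi$ and $\pi$. One point is worth stating explicitly: $k_i \to b^{-1}a$ in $G$ gives convergence in $K$ only because $K$, being embedded, carries the subspace topology.
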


From the above result, it is straightforward to conclude that $\textup{PU}(n)$ is a Lie group, as it arises as the quotient of $\textup{U}(n)$ by its---normal and closed---subgroup
\[
\{\lambda \mathds{1} : \lambda \in \textup{U}(1)\},
\]
where $\mathds{1}$ denotes the identity matrix of size $n$. 

Lastly, observe that the tensor product operation presented in the introduction can be extended to the projected unitary group. 
\begin{remark}
The tensor product map
\begin{align*}
\textup{U}(n) \times \textup{U}(n) &\to \textup{U}(n^2)\\
(X, Y) &\mapsto X \otimes Y,
\end{align*}
which corresponds to the restriction of a bi-linear function to the set $\textup{U}(n) \times \textup{U}(n)$, induces a well-defined map on $\textup{PU}(n) \times \textup{PU}(n)$,
\begin{align*}
\textup{PU}(n) \times \textup{PU}(n) &\to \textup{PU}(n^2)\\
([X], [Y]) &\mapsto [X \otimes Y].
\end{align*}
\end{remark}

\subsection{Manifolds of interest}
\label{sec:manifolds}

Let us now give a short description of the different manifolds that are considered in this work. Every tensor of the tensor network families considered in this work can be seen as a point in the sphere, a unitary matrix or a linear isometry. Thus, the tensor network manifolds studied are obtained by considering the Cartesian product of these three manifolds. 

We already saw above that the unitary group is a Lie group, and that the Riemannian metric that we consider for it is the bi-invariant metric. We also consider the sphere $\bb{S}^{2n-1}$ which we understand as a real manifold embedded in $\bb{C}^n \simeq \bb{R}^{2n}$. The metric considered for the sphere is known as the round metric, $g_{\textup{round}}$, which is the one induced by the Euclidean metric on $\bb{R}^{2n}$

Linear isometries can be seen as complex matrices $U$ of size $n \times k$, with $1 \leq k \leq n$, such that $U^\dagger U = \mathds{1}$. The set of linear isometries from $\bb{C}^k$ to $\bb{C}^n$, 
\[
\textup{V}_k(\bb{C}^n) := \{X \in \textup{M}_{n \times k} : X^\dagger X = \mathds{1}\},    
\]
can be endowed with a smooth manifold structure (cf. \cite{autenried2014sub}). The manifold of linear isometries is known as the Stiefel manifold and it can also be understood as the quotient of $\textup{U}(n)$ by the group action of $\textup{U}(n - k)$ defined as
\begin{equation}
\label{eq:actiondefstiefel}
\textup{U}(n-k) \times \textup{U}(n) \to \textup{U}(n),\quad 
(X, U) \mapsto U\begin{pmatrix}
\mathds{1} & 0\\
0 & X
\end{pmatrix},
\end{equation}
where $\mathds{1}$ is the identity of size $k$ (cf. \cite{autenried2014sub}). We consider the Stiefel manifold endowed with the metric $g_{\textup{st}}$ for which the projection $(\textup{U}(n), g_{\textup{bi}}) \to (\textup{V}_{k}(\bb{C}^{n}), g_{\textup{st}})$ is a Riemannian submersion.

As we mentioned earlier, the tensor network manifolds associated with the tensor network families studied in \crefrange{onedimcircuitsfixed}{sec:IsomPEPS} are defined as the product of the above manifolds. The tensor network manifolds are always assumed to be endowed with the product metric, and each component is assumed to be endowed with its corresponding metric ($g_{\textup{bi}}$ for the unitary group, $g_{\textup{round}}$ round for the sphere and $g_{\textup{st}}$ for the Stiefel manifold). We call this product metric the \textit{natural metric} of the tensor network manifold. 

In the main text we also considered quotient tensor network manifolds which allowed to describe a tensor network up to a phase. In order to do so, we considered the quotient of the original tensor network manifold by the group action given by the multiplication by a complex phase on each factor of the product. 

The manifold that arises when considering the quotient of $\textup{U}(n)$ by the multiplication by a complex phase was already studied in \cref{sec:SecLieGroups}, and corresponds to the \textit{projective unitary group} $\textup{PU}(n) (\simeq \textup{U}(n)/\textup{U}(1))$, 
\[
\textup{PU}(n) = \{[U] : U \in \textup{U}(n)\},
\]
where $[U] = \{\lambda U : \lambda \in \textup{U}(1)\}$. We consider $\textup{PU}(n)$ endowed with the metric $g_{\textup{pu}}$ for which the projection $(\textup{U}(n), g_{\textup{bi}}) \to (\textup{PU}(n), g_{\textup{pu}})$ is a Riemannian submersion.

If we consider the quotient of the sphere $\bb{S}^{2n-1}$ by the multiplication by a complex phase we obtain the \textit{complex projective space} $\bb{CP}^{n-1} (\simeq \bb{S}^{2n-1} / \textup{U}(1))$, 
\[
\bb{CP}^{n-1} := \{[z] : z \in \bb{S}^{2n-1}\},
\]
where $[z] := \{\lambda z : \lambda \in \textup{U}(1)\}$. We endow $\bb{CP}^{n-1}$ with the metric $g_{\textup{FS}}$ for which the projection $(\bb{S}^{2k+1}, g_{\textup{round}}) \to (\bb{CP}^{k}, g_{\textup{FS}})$ is a Riemannian submersion. This metric is known as the Fubini-Study metric \cite{lee2018introductionRiemannian}.

Lastly, when we consider the quotient of the Stiefel manifold by the multiplication by a complex phase we obtain the \textit{projective Stiefel manifold} $\textup{PV}_k(\bb{C}^{n}) (\simeq \textup{V}_k(\bb{C}^{n}) / \textup{U}(1))$ \cite{gondhali2013vector}, 
\[
\textup{PV}_k(\bb{C}^{n}) := \{[X] : X \in \textup{M}_{n \times k}(\bb{C}),\ X^\dagger X = \mathds{1}\},
\]
where $[X] = \{\lambda X: \lambda \in \textup{U}(1)\}$. Using the definition of the Stiefel manifold from \cref{eq:actiondefstiefel}, we can also rewrite the above description of $\textup{PV}_k(\bb{C}^n)$ as
\[
\textup{PV}_k(\bb{C}^{n}) = \{[U] : U \in \textup{U}(n)\},
\]
where 
\[
[U] = \{\lambda U \begin{pmatrix}
\mathds{1} & 0\\
0 & V
\end{pmatrix} : \lambda \in \textup{U}(1), V \in \textup{U}(n-k)\}.
\]
We consider the projective complex Stiefel manifold endowed with the metric $g_{\textup{pst}}$ for which the projection $(\textup{U}(n), g_{\textup{bi}}) \to (\textup{PV}_k(\bb{C}^{n}), g_{\textup{pst}})$ is a Riemannian submersion.

Again, the quotient tensor network manifolds that arise as the Cartesian product of the projective unitary group, the complex projective space and the projective Stiefel manifold are always endowed with the product of the metrics considered above, also denoted as the \textit{natural metric}. 

\subsection{Isometric actions}
\label{sec:isometries}

Lastly, let us prove some auxiliary results that will allow us to guarantee that the actions considered in the main text are isometric. We begin by defining an isometry between Riemannian manifolds. 

\begin{definition}[Isometry]
\label{def:isometry}
Let $(M, g)$ and $(N, h)$ be two Riemannian manifolds. We say that $f : M \to N$ is an isometry if $f$ is a diffeomorphism and $df_p$ is an isometry between $T_p M$ and $T_{f(p)} N$ for every $p \in M$, i.e. 
\[
g(u, v) = h(df(u), df(v)),
\]
for every $p \in M$ and every $u, v \in T_p M$. 
\end{definition}

Next, let us show that left and right multiplication in the unitary group endowed with its bi-invariant metric are isometric actions. 

\begin{proposition}[Matrix multiplication is an isometric action on $\textup{U}(n)$]
\label{prop:isomunitaries}
Let $(\textup{U}(n), g_{\textup{bi}})$ be the unitary group endowed with its bi-invariant metric. Let $\textup{U}(n)$ act on itself by left (resp. right) multiplication. Then, for every $U \in \textup{U}(n)$ the map
\[
L_U : (\textup{U}(n), g_{\textup{bi}}) \to (\textup{U}(n), g_{\textup{bi}}),\quad p \mapsto Up ,
\]
(resp. $p \mapsto pU$) is an isometry. 
\end{proposition}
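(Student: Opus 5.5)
We have to show that $L_U$ (and likewise $R_U : p \mapsto pU$) is a diffeomorphism of $\textup{U}(n)$ whose differential preserves $g_{\textup{bi}}$ at every point.

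\textbf{Step 1: diffeomorphism.} Matrix multiplication is polynomial in the real coordinates of $\textup{M}_n(\bb{C}) \simeq \bb{R}^{2n^2}$. Hence $L_U$ is the restriction to the embedded submanifold $\textup{U}(n)$ of a smooth map of the ambient space. Since $\textup{U}(n)$ is a group, $L_U$ maps $\textup{U}(n)$ into itself, and it is smooth there. Its inverse is $L_{U^\dagger}$, which is smooth for the same reason, so $L_U$ is a diffeomorphism.

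\textbf{Step 2: the differential.} Identify $T_p\textup{U}(n)$ with the real subspace $\{X \in \textup{M}_n(\bb{C}) : p^\dagger X + X^\dagger p = 0\}$ of the ambient space. Because $L_U$ extends to the linear map $A \mapsto UA$ on $\textup{M}_n(\bb{C})$, its differential is the restriction of that linear map:
\[
d(L_U)_p(X) = UX.
\]
To justify this, take any curve $\gamma$ in $\textup{U}(n)$ with $\gamma(0)=p$ and $\gamma'(0)=X$. Then $\tfrac{d}{dt}\big|_{t=0}\, U\gamma(t) = UX$, and this vector lies in $T_{Up}\textup{U}(n)$.

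\textbf{Step 3: metric preservation.} Using $U^\dagger U = \mathds{1}$,
\[
g_{\textup{bi}}(UX, UY) = \Tr\big((UY)^\dagger UX\big) = \Tr(Y^\dagger U^\dagger U X) = \Tr(Y^\dagger X) = g_{\textup{bi}}(X, Y).
\]
For right multiplication, $d(R_U)_p(X) = XU$. Using $UU^\dagger = \mathds{1}$ and cyclicity of the trace,
\[
\Tr\big(U^\dagger Y^\dagger X U\big) = \Tr\big(Y^\dagger X U U^\dagger\big) = \Tr(Y^\dagger X).
\]
This concludes the argument.

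The metric $\Tr(Y^\dagger X)$ is complex-valued in general. I read it as its real part, $\textup{Re}\Tr(Y^\dagger X)$, which is the Euclidean inner product on $\bb{R}^{2n^2}$ restricted to the tangent spaces. The identities above hold before taking real parts, so they hold after as well.

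The only point needing care is Step 2, namely that the differential of the restricted map is the restriction of the ambient linear map. This is standard for maps of embedded submanifolds that extend smoothly to the ambient space, so I do not expect a real obstacle.
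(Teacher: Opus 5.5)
Your proof is correct and follows the paper's route: differentiate a curve through $p$ to identify $d(L_U)_p$ (resp.\ $d(R_U)_p$) with left (resp.\ right) multiplication by $U$, then use unitarity of $U$ in $\Tr(Y^\dagger X)$. The paper uses the geodesic $\gamma_{p,v}$ where you allow any curve, which suffices. You also make explicit the diffeomorphism check and the right-multiplication case, which the paper leaves implicit. Your real-part convention is harmless but unnecessary: for $X = pA$, $Y = pB$ with $A, B$ skew-Hermitian, $\Tr(Y^\dagger X)$ is already real on $T_p\textup{U}(n)$.
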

\begin{proof}
Let us prove the result for the left multiplication. The proof for the right multiplication is an analogous.  

For every $p \in \textup{U}(n)$ and every $v \in T_p \textup{U}(n)$, let $\gamma_{p, v}(t)$ be the unique geodesic in $\textup{U}(n)$ such that $\gamma_{p, v}(0) = p$ and $\gamma_{p, v}'(0) = v$. Then, we know that 
\[
dL_U|_p(v) = \left.\frac{d}{dt}\right|_{t = 0} L_U(\gamma_{p, v}(t)) = \left.\frac{d}{dt}\right|_{t = 0}U\gamma_{p, v}(t) = U\gamma'_{p, v}(0) = Uv,
\] 
and so $dL_U|_p : T_p \textup{U}(n) \to T_{Up} \textup{U}(n)$ acts by left-multiplication for every $U \in G$ and every $p \in \textup{U}(n)$. Therefore, since $U$ is unitary, 
\[
g_{\textup{bi}}(dL_U(v), dL_U(w)) = g_{\textup{bi}}(Uv, Uw) = \Tr(w^\dagger U^\dagger Uv) = \Tr(w^\dagger v) = g_{\textup{bi}}(v, w),
\]
for every $p \in \textup{U}(n)$ and every $v, w \in T_p \textup{U}(n)$, concluding the proof. 
\end{proof}

We will now prove that left multiplication on the sphere by a unitary matrix is also an isometric action, whenever the sphere is endowed with the round metric. To do so, let us first study the round metric on the sphere $\bb{S}^{2n-1}$, when understood as a subset of $\bb{C}^n$. 

\begin{lemma}[Round metric on the sphere]
Let $(\bb{S}^{2n-1}, g_\textup{round})$ be the sphere endowed with the round metric. On the one hand, we can see the round sphere embedded in $\bb{R}^{2n}$, i.e. every $x \in \bb{S}^{2n-1}$ can be written as
\[
x = (x_1, y_1, \dotsc, x_n, y_n)^T \in \bb{R}^{2n}, 
\]
such that $x^T x = 1$. On the other hand, we can see it as embedded in the complex space $\bb{C}^n$, i.e. we can write every $\tilde x \in \bb{S}^{2n-1}$ as
\[
\tilde x = (\tilde x_1 + i \tilde y_1, \dotsc, \tilde x_n + i\tilde y_n)^T \in \bb{C}^{n}, 
\]
such that $\tilde x^\dagger \tilde x = 1$. When considering this second picture, $g_\textup{round}$ can be written as 
\[
g_\textup{round}(v, w) = \textup{Re}(w^\dagger v),
\]
for every $p \in \bb{S}^{2n-1}$ and every $v, w \in T_p \bb{S}^{2n-1}$.
\end{lemma}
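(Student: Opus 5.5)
The plan is to identify the tangent spaces concretely and compare the two expressions for the inner product. Let $\iota : \bb{C}^n \to \bb{R}^{2n}$ be the real-linear identification
\[
(\tilde x_1 + i\tilde y_1, \dotsc, \tilde x_n + i\tilde y_n)^T \mapsto (\tilde x_1, \tilde y_1, \dotsc, \tilde x_n, \tilde y_n)^T .
\]
This map is an isomorphism of real vector spaces. It carries the complex description of $\bb{S}^{2n-1}$ onto the real one, since $\tilde x^\dagger \tilde x = \sum_j (\tilde x_j^2 + \tilde y_j^2) = \iota(\tilde x)^T \iota(\tilde x)$.

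Because $\bb{S}^{2n-1}$ is an embedded submanifold of $\bb{R}^{2n}$, for every $p$ the tangent space $T_p\bb{S}^{2n-1}$ is a real subspace of $\bb{R}^{2n}$. Explicitly, it is $\{v : p^T v = 0\}$, or in the complex picture $\{v \in \bb{C}^n : \textup{Re}(p^\dagger v) = 0\}$. Tangent vectors in the two pictures therefore correspond under the same map $\iota$: the differential of a real-linear map is the map itself. By definition, $g_{\textup{round}}$ is the restriction of the Euclidean inner product, so $g_{\textup{round}}(v, w) = \iota(w)^T \iota(v)$.

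The only computation is to check that, for $v = (a_j + i b_j)_j$ and $w = (c_j + i e_j)_j$,
\[
w^\dagger v = \sum_j (c_j - i e_j)(a_j + i b_j) = \sum_j (c_j a_j + e_j b_j) + i \sum_j (c_j b_j - e_j a_j).
\]
Its real part is exactly $\sum_j (a_j c_j + b_j e_j) = \iota(w)^T \iota(v)$. Hence $g_{\textup{round}}(v, w) = \textup{Re}(w^\dagger v)$ for all $p$ and all $v, w \in T_p\bb{S}^{2n-1}$, which is the claim.

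There is no real obstacle here. The one point needing care is justifying that the tangent vectors in the complex picture are the images under $\iota$ of those in the real picture, so that the same formula applies on the tangent spaces and not only on the ambient space. This follows from the chain rule applied to curves $\gamma(t)$ in the sphere, since $\frac{d}{dt}\iota(\gamma(t)) = \iota(\gamma'(t))$.
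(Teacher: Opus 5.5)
Your proposal is correct and follows essentially the same route as the paper: view tangent vectors as vectors in $\bb{R}^{2n}\cong\bb{C}^n$, use that $g_{\textup{round}}$ is the restriction of the Euclidean inner product, and check that $\textup{Re}(w^\dagger v)$ expands to $\sum_j (a_j c_j + b_j e_j)$. Your explicit identification map $\iota$ and the chain-rule remark only make precise the correspondence of tangent vectors that the paper leaves implicit.
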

\begin{proof}
When seeing the sphere as embedded in $\bb{R}^{2n-1}$, every tangent vector $v$ at a point $x$, $v \in T_x \bb{S}^{2n-1}$, can be understood as an element in $\bb{R}^{2n}$ verifying $v^T x = 0$. Recall that the round metric is that induced by the Euclidean metric via the inclusion map from the sphere to $\bb{R}^{2n}$. For this reason, for every $x \in \bb{S}^{2n-1}$, the inner product between every two tangent vectors $v, w \in T_x \bb{S}^{2n-1}$ is given by
\[
g_\textup{round}(v, w) = w^T v. 
\]
In particular, if we write $v$ and $w$ in coordinates as 
\begin{align*}
v &= (a_1, b_1, \dotsc, a_n, b_n)^T \in \bb{R}^{2n},\\
w &= (c_1, d_1, \dotsc, c_n, d_n)^T \in \bb{R}^{2n},
\end{align*}
we can rewrite their inner product as 
\[
g_\textup{round}(u, v) = \sum_{i = 1}^n  a_i c_i + \sum_{i = 1}^n b_i d_i.
\]
Rewriting this expression using the complex space embedding picture, both $v$ and $w$ can be written as
\begin{align*}
\tilde v &= (a_1 + ib_1, \dotsc, a_n + ib_n)^T \in \bb{C}^{n}, \\
\tilde w &= (c_1 + id_1, \dotsc, c_n + id_n)^T \in \bb{C}^n,
\end{align*}
and their inner product can be written as
\[
g_\textup{round}(\tilde v, \tilde w) = \textup{Re}(\tilde w^\dagger \tilde v) =  \sum_{i = 1}^n  a_i c_i + \sum_{i = 1}^n b_i d_i,
\]
which coincides with that shown above. 
\end{proof}

\begin{proposition}[Left multiplication by a unitary matrix is an isometric action on the sphere]
\label{prop:isomsphere}
Let $(\bb{S}^{2n-1}, g_\textup{round})$ be the sphere endowed with the round metric. Let $\textup{U}(n)$ act on $(\bb{S}^{2n-1}, g_\textup{round})$ by left multiplication, seeing the sphere as embedded in $\bb{C}^n$. Then, for every $U \in \textup{U}(n)$ the map
\[
L_U : (\bb{S}^{2n-1}, g_\textup{round}) \to (\bb{S}^{2n-1}, g_\textup{round}),\quad p \mapsto Up ,
\]
is an isometry. 
\end{proposition}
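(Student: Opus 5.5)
The plan mirrors the proof of \cref{prop:isomunitaries}, now working with the embedding of $\bb{S}^{2n-1}$ in $\bb{C}^n$ and using the preceding lemma, which expresses the round metric as $g_{\textup{round}}(v,w) = \textup{Re}(w^\dagger v)$. First I will check that $L_U$ is a well-defined diffeomorphism of the sphere. Since $U^\dagger U = \mathds{1}$, for $p \in \bb{S}^{2n-1}$ we have $(Up)^\dagger (Up) = p^\dagger p = 1$, so $L_U$ maps the sphere into itself. The map is the restriction of the $\bb{R}$-linear (indeed $\bb{C}$-linear) map $\bb{C}^n \to \bb{C}^n$, $z \mapsto Uz$, to the embedded submanifold $\bb{S}^{2n-1}$. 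Hence it is smooth, and its inverse $L_{U^\dagger}$ is smooth for the same reason.

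Next I will compute the differential. Take $p \in \bb{S}^{2n-1}$ and $v \in T_p\bb{S}^{2n-1}$, and choose any smooth curve $\gamma$ in the sphere with $\gamma(0)=p$ and $\gamma'(0)=v$; a geodesic works, as in \cref{prop:isomunitaries}. Then
\[
dL_U|_p(v) = \left.\frac{d}{dt}\right|_{t=0} U\gamma(t) = Uv .
\]
This uses only that $L_U$ is the restriction of a linear map of the ambient space. The resulting vector $Uv$ is tangent at $Up$, because $\textup{Re}((Up)^\dagger Uv) = \textup{Re}(p^\dagger v) = 0$. This is the characterisation of tangent vectors $v^T x = 0$, written in complex form.

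Finally I will verify that the metric is preserved. By the lemma, for $v,w \in T_p\bb{S}^{2n-1}$,
\[
g_{\textup{round}}(dL_U v, dL_U w) = \textup{Re}\big((Uw)^\dagger Uv\big) = \textup{Re}(w^\dagger U^\dagger U v) = \textup{Re}(w^\dagger v) = g_{\textup{round}}(v,w),
\]
so $L_U$ is an isometry in the sense of \cref{def:isometry}. There is no real obstacle in this argument. The only point needing care is justifying that the differential is plain multiplication by $U$, and this follows because the sphere is embedded and $L_U$ extends to a linear map of $\bb{C}^n \simeq \bb{R}^{2n}$.
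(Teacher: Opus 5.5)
Your proposal is correct and takes the same approach as the paper: you compute $dL_U|_p(v) = Uv$ via a curve through $p$, then conclude with the lemma's formula $g_{\textup{round}}(v,w) = \textup{Re}(w^\dagger v)$ and $U^\dagger U = \mathds{1}$. Your extra checks, that $L_U$ is a diffeomorphism with inverse $L_{U^\dagger}$ and that $Uv \in T_{Up}\bb{S}^{2n-1}$, are left implicit in the paper, and you handle them correctly.
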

\begin{proof}
Following the same reasoning as in the proof of \cref{prop:isomunitaries} we conclude that for every $p \in \bb{S}^{2n-1}$, $dL_U|_p : T_p\,\bb{S}^{2n-1} \to T_{Up}\,\bb{S}^{2n-1}$ also acts by left multiplication, i.e. $dL_U|_p(v) =  Uv$ for every $p \in \bb{S}^{2n-1}$ and any $v \in T_p \bb{S}^{2n-1}$. Moreover, 
\[
g_\textup{round}(Uv, Uw) = \textup{Re}(w^\dagger U^\dagger Uv) = \textup{Re}(w^\dagger v) = g_\textup{round}(v, w),
\]
finishing the proof. 
\end{proof}

To finish, we show a result that relates isometries and Riemannian submersions.
\begin{proposition}
\label{projectiveIsometry}
Let $\pi: (M, g) \to (B, \tilde{g})$ be a Riemannian submersion. Consider an isometry $f: M \to M$ for which there exists some function $\tilde{f}: B \to B$ fulfilling $\pi \circ f = \tilde{f} \circ \pi$. Then $\tilde{f}$ is also an isometry.
\end{proposition}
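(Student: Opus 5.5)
The plan is to check the three ingredients of an isometry for $\tilde f$: smoothness, bijectivity with smooth inverse, and preservation of the metric. I will use only the defining property of a Riemannian submersion and the relation $\pi\circ f=\tilde f\circ\pi$. Throughout, I use that $\pi$ is surjective, which holds for all the quotient maps in this work.

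\textbf{Smoothness and bijectivity.} Since $\pi$ is a surjective submersion, it has local smooth sections. Near any $b\in B$, pick such a section $\sigma$; then $\tilde f=\tilde f\circ\pi\circ\sigma=\pi\circ f\circ\sigma$ is smooth there. Surjectivity of $\tilde f$ follows from surjectivity of $f$ and $\pi$: every $b\in B$ is $\pi(f(p))=\tilde f(\pi(p))$ for some $p\in M$.

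For injectivity and a smooth inverse, I would show that $f^{-1}$ also descends. The cleanest way, and the one matching every use in the paper, is this: $f$ is the action of a group element $x$, and $f^{-1}$ is the action of $x^{-1}$. Since the same construction produces a map $\widetilde{f^{-1}}$ with $\pi\circ f^{-1}=\widetilde{f^{-1}}\circ\pi$, surjectivity of $\pi$ gives $\widetilde{f^{-1}}\circ\tilde f=\mathrm{id}_B=\tilde f\circ\widetilde{f^{-1}}$. The inverse is smooth by the same section argument. In full generality, the step that needs care is showing that $f$ maps each fiber \emph{onto} a fiber rather than merely into one. If one wants the statement without the group-action assumption, I would state it as a hypothesis that $f^{-1}$ also descends.

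\textbf{Metric preservation (the core step).} First I show that $df$ preserves the vertical/horizontal splitting. Differentiating $\pi\circ f=\tilde f\circ\pi$ gives
\[
d\pi|_{f(p)}\circ df|_p=d\tilde f|_{\pi(p)}\circ d\pi|_p .
\]
Hence $df|_p$ sends $V_pM=\ker d\pi|_p$ into $V_{f(p)}M$. Since $df|_p$ is injective and all vertical spaces have the same dimension, $df|_p(V_pM)=V_{f(p)}M$. Because $df|_p$ is a linear isometry, it also maps orthogonal complements to orthogonal complements, so $df|_p(H_pM)=H_{f(p)}M$.

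Now take $w\in T_bB$, choose $p\in\pi^{-1}(b)$, and let $u\in H_pM$ be the unique horizontal vector with $d\pi|_p u=w$. Then $df|_p u$ is horizontal, and
\[
d\pi|_{f(p)}(df|_p u)=d\tilde f|_b\,w .
\]
Applying the Riemannian submersion property twice and the isometry property of $f$ once gives
\[
|d\tilde f|_b w|_{\tilde g}=|d\pi(df\,u)|_{\tilde g}=|df\,u|_g=|u|_g=|w|_{\tilde g}.
\]
By polarization, $d\tilde f|_b$ preserves $\tilde g$. Combined with the diffeomorphism property above, this shows $\tilde f$ is an isometry. The only genuine obstacle I expect is the bijectivity of $\tilde f$; the metric computation itself is routine once horizontality of $df(H_pM)$ is established.
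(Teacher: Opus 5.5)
Your proof is correct, and its metric step is the paper's argument. You differentiate $\pi\circ f=\tilde f\circ\pi$ to get $df(V_pM)\subseteq V_{f(p)}M$, pass to horizontal spaces, and chain the submersion property at $p$ and at $f(p)$ with the isometry property of $f$. Your dimension count and orthogonal-complement step is exactly what justifies the paper's brief claim that ``$df$ preserves horizontality''; the paper only shows that vertical vectors go to vertical vectors.

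The difference is that the paper stops after the metric computation. It therefore proves only that $d\tilde f$ is a pointwise linear isometry, i.e.\ that $\tilde f$ is a local isometry. It tacitly assumes $\tilde f$ is smooth and never addresses the diffeomorphism requirement of Definition~\ref{def:isometry}. Your local-section argument supplies smoothness, and your bijectivity discussion supplies the rest.

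Your caution about bijectivity is warranted, because the statement fails without an extra hypothesis. Take $M=\mathbb{H}^2$, $B=\mathbb{H}^2/\langle z\mapsto z+1\rangle$ with the covering map as the Riemannian submersion, and the isometry $f(z)=2z$. The induced map $\tilde f([z])=[2z]$ satisfies the intertwining relation, but $\tilde f([z])=\tilde f([z+1/2])$, so it is two-to-one. The reason is that $f$ maps each fiber onto only half of a fiber, which is precisely the ``into rather than onto'' failure you identified.

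Your added assumption that $f^{-1}$ also descends is the right repair. It holds in all of the paper's applications, where $f$ and $f^{-1}$ are multiplication by a unitary and by its inverse. With that assumption, your argument proves the full statement, including that $\tilde f$ is a diffeomorphism.
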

\begin{proof}
Let $x \in M$ and consider two horizontal vectors $u, v \in H_x M$. Since $\pi$ is a Riemannian submersion and $f$ is an isometry on $(M, g)$ we know that 
\[
\tilde{g}(d\pi(u), d\pi(v)) = g(u, v) = g(df(u), df(v)).
\]
Next, since $\pi \circ f = \tilde{f} \circ \pi$, it follows that $df$ preserves the horizontality of vectors. Indeed, for any vertical vector $w \in V_x M$, it follows that 
\[
d\pi(df(w)) = d\tilde f(d\pi(w)) = d\tilde f(0)= 0. 
\]
Thus, using again that $\pi$ is a Riemannian submersion, we know that
\[
g(df(u), df(v)) = \tilde{g}(d\pi(df(u)), d\pi(df(v))) . 
\]
Lastly, since by assumption $\pi \circ f = \tilde{f} \circ \pi$, we conclude that 
\[
\tilde{g}(d\pi(df(u)), d\pi(df(v))) = \tilde{g}(d\tilde{f}(d\pi(u)), d\tilde{f}(d\pi(v))),
\]
finishing the proof.
\end{proof}

The above result, along with \cref{prop:isomunitaries,prop:isomsphere} allow us to conclude that all of the group actions considered in the main text are isometric.